\newtheorem{theorem}{Theorem}
\newtheorem{lemma}{Lemma}
\newtheorem{Defs}{Definition}
\newtheorem*{proof}{Proof}
\newtheorem{proposition}{Proposition}
\title{Echo of Neighbors: Privacy Amplification for Personalized Private \\Federated Learning with Shuffle Model}
\author {
    Yixuan Liu\textsuperscript{\rm 1,2,3},
    Suyun Zhao\textsuperscript{\rm 1,2,3},
    Li Xiong\textsuperscript{\rm 4},
    Yuhan Liu\textsuperscript{\rm 1,2,3},
    Hong Chen\textsuperscript{\rm 1,2,3}\thanks{Corresponding author: Hong Chen, chong@ruc.edu.cn}
}
\begin{document}

\maketitle

\begin{abstract}
Federated Learning, as a popular paradigm for collaborative training, is vulnerable against privacy attacks. Different privacy levels regarding users' attitudes need to be satisfied locally, while a strict privacy guarantee for the global model is also required centrally.
Personalized Local Differential Privacy (PLDP) is suitable for preserving users' varying local privacy, yet only provides a central privacy guarantee equivalent to the worst-case local privacy level.
Thus, achieving strong central privacy as well as personalized local privacy with a utility-promising model is a challenging problem.
In this work, a general framework (APES) is built up to strengthen model privacy under personalized local privacy by leveraging the privacy amplification effect of the shuffle model. 
To tighten the privacy bound, we quantify the heterogeneous contributions to the central privacy user by user. The contributions are characterized by the ability of generating “echos” from the perturbation of each user,  which is carefully measured by proposed methods Neighbor Divergence and Clip-Laplace Mechanism.
Furthermore, we propose a refined framework (S-APES) with the post-sparsification technique to reduce privacy loss in high-dimension scenarios.
To the best of our knowledge, the impact of shuffling on personalized local privacy is considered for the first time. We provide a strong privacy amplification effect, and the bound is tighter than the baseline result based on existing methods for uniform local privacy.
Experiments demonstrate that our frameworks ensure comparable or higher accuracy for the global model.
\end{abstract}

\section{Introduction}
Federated Learning (FL) \cite{mcmahan2017communication} is an emerging machine learning paradigm 
that allows multiple clients to train a global model collaboratively while keeping the private raw data of each client locally. 
\textcolor{black}{While not directly sharing private data, recent works indicate that FL by itself is insufficient to preserve privacy of users' data.}
By observing the global model or intermediate parameters during the training process, adversaries can infer the membership of users or even reconstruct training records \cite{fredrikson2015model,zhu2019deep, nasr2019comprehensive, xiong2021privacy}.  These attacks can lead to severe data leakage, hence it is necessary to provide additional protection with strict privacy guarantees for both the global model and local parameters. 
Moreover, in practice, different local privacy levels may be desired depending on users’ privacy preferences. 
A one-size-fits-all approach would either downgrade the model utility or sacrifice privacy protection for certain users. 
Thus, an open problem in FL is how to provide strong central privacy as well as personalized local privacy while maintaining model utility.

\begin{table}
  \centering
  \begin{footnotesize}
  \renewcommand{\arraystretch}{0.95}
  \begin{tabular}{llll} 
    \toprule
   \multirow{2}*{{Methods}} & \multirow{2}*{Personalization} & \multicolumn{2}{l}{\qquad FL Process}            \\
   \cline{3-4}
   \specialrule{0em}{1pt}{1pt}
       &  & Local & \quad Central\\
    \midrule
    PLDP    & \qquad \Checkmark    &   \quad \Checkmark &   \quad Weak   \\
    Uni-Shuffle & \qquad \XSolidBrush  &   \quad \Checkmark &   \qquad \Checkmark   \\
    APES    &   \qquad \Checkmark    &   \quad \Checkmark &   \qquad \Checkmark     \\
    S-APES    &   \qquad \Checkmark    &   \quad \Checkmark &   \quad Strong     \\
    \bottomrule
  \end{tabular}
  \end{footnotesize}
  \caption{Comparison of related work. \Checkmark denotes protected, \XSolidBrush denotes unprotected.}
  \label{table:motivation}
\end{table} 

Several recent works have attempted to address this problem. 
Personalized Local Differential Privacy (PLDP) protects both local gradients and the global model by perturbing gradients with heterogeneous parameters \cite{chen2016private,li2020federated}. The central privacy of the global model is equivalent to the weakest local privacy.
 For achieving both  strong central and local privacy, a potential solution is the shuffle model \cite{bittau2017prochlo}. It amplifies central privacy by permuting data points randomly after local perturbation. However, existing studies on shuffle model only focus on the scenarios where local privacy requirements are assumed uniform (Uni-Shuffle for short) \cite{erlingsson2019amplification, balle2019privacy, girgis2021renyi, feldman2022hiding}. 
 To the best of our knowledge, there is no work that provides both strong central privacy for the global model and personalized local privacy guarantees, while achieving strong utility of global model (cf. Tab.1).

To narrow this gap, we propose \textbf{APES}, a privacy \textbf{\underline{A}}mplification framework for \textbf{\underline{PE}}rsonalized private federated learning with \textbf{\underline{S}}huffle model (cf. Fig. \ref{fig:PerS_framework}). 
APES gains a strong privacy amplification effect. Unlike previous works that just permute data, both data points and privacy parameters are randomly shuffled in APES. Clip-Laplace Mechanism is also introduced to implement the framework without damaging model utility.
 In order to mitigate the privacy-loss explosion problem caused by high dimensions, we propose \textbf{S-APES} which improves \textbf{\underline{APES}} with the post-\textbf{\underline{S}}parsification. The basic idea is to select only informative dimensions of gradients after perturbation and pad the rest, which saves privacy cost. 

To bound the privacy of APES and S-APES, we carefully quantify the obfuscation effects contributed by users with heterogeneous privacy parameters. First, inspired by  \citeauthor{feldman2022hiding}, the central privacy of a specific user is boosted by the rest of the users who generate “echos” of her with heterogeneous probabilities; next, to measure the probabilities, we propose Neighbor Divergence and Clip-Laplace Mechanism for limited  output range and bounded divergence among distinct output distributions by users’ local randomizers; then ``echos'' are transformed into certain form, and a tight privacy bound is derived.


Our main contributions are summarized as follows:

(\romannumeral1) We propose privacy amplification frameworks via shuffle model for personalized private federated learning. APES strikes a better balance between central privacy and model utility with Neighbor Divergence and Clip-Laplace Mechanism. Based on it, improved S-APES enhances the privacy for the high-dimension scene.

(\romannumeral2) We provide theoretical analysis for both privacy and convergence bound of the proposed frameworks. To the best of our knowledge, the shuffling effect on personalized local differential privacy is considered for the first time and a strong privacy amplification effect is yielded. The central privacy bound is tighter than the bound derived by na\"ively adopting existing methods for unified privacy.

(\romannumeral3) Comprehensive experiments are conducted to confirm that APES and S-APES achieve comparable or higher accuracy for the global model with stronger central privacy compared to the state-of-the-art methods without downgrading personalized local privacy guarantee.

\section{Preliminaries}
\label{preliminaries}
In this section, we illustrate the privacy definition, shuffle model and several properties of differential privacy, all of which are prepared for the proposed methods.
\subsection{Central and Local Differential Privacy}
Differential privacy (DP) \cite{dwork2014algorithmic} is a \emph{de facto} standard that is widely accepted to preserve privacy in FL. The notion is typically built up in a central setting where a trusted server can access the raw data. 
Local differential privacy (LDP) \cite{erlingsson2014rappor}, on the other hand, offers users a stronger privacy guarantee for the settings without assumption of trusted server.
\begin{Defs}[Differential Privacy]
	For any $\epsilon, \delta \geq 0$, a randomized algorithm $M: \mathcal{D} \rightarrow \mathcal{Z}$ is $(\epsilon, \delta)$-differential privacy if for any neighboring datasets $D, D' \in \mathcal{D}$ and any subsets $S \subseteq \mathcal{Z}$,
	\begin{footnotesize}
	\[
	\Pr[M(D) \in S] \leq e^{\epsilon}\Pr[M(D') \in S] + \delta
	\]
	\end{footnotesize}
\end{Defs}

\begin{Defs}[Local Differential Privacy]
	For any $\epsilon, \delta \geq 0$, an algorithm $M: \mathcal{D} \rightarrow \mathcal{Z}$ is $(\epsilon, \delta)$-local differential privacy if $\forall g, g' \in \mathcal{D}$ and $\forall z \in \mathcal{Z}$,
	\begin{footnotesize}
	\[
	\Pr[M(g) = z] \leq e^{\epsilon}\Pr[M(g') = z ] + \delta
	\]
	\end{footnotesize}
\end{Defs}

\begin{figure} 
  \centering
  \includegraphics[scale=0.15]{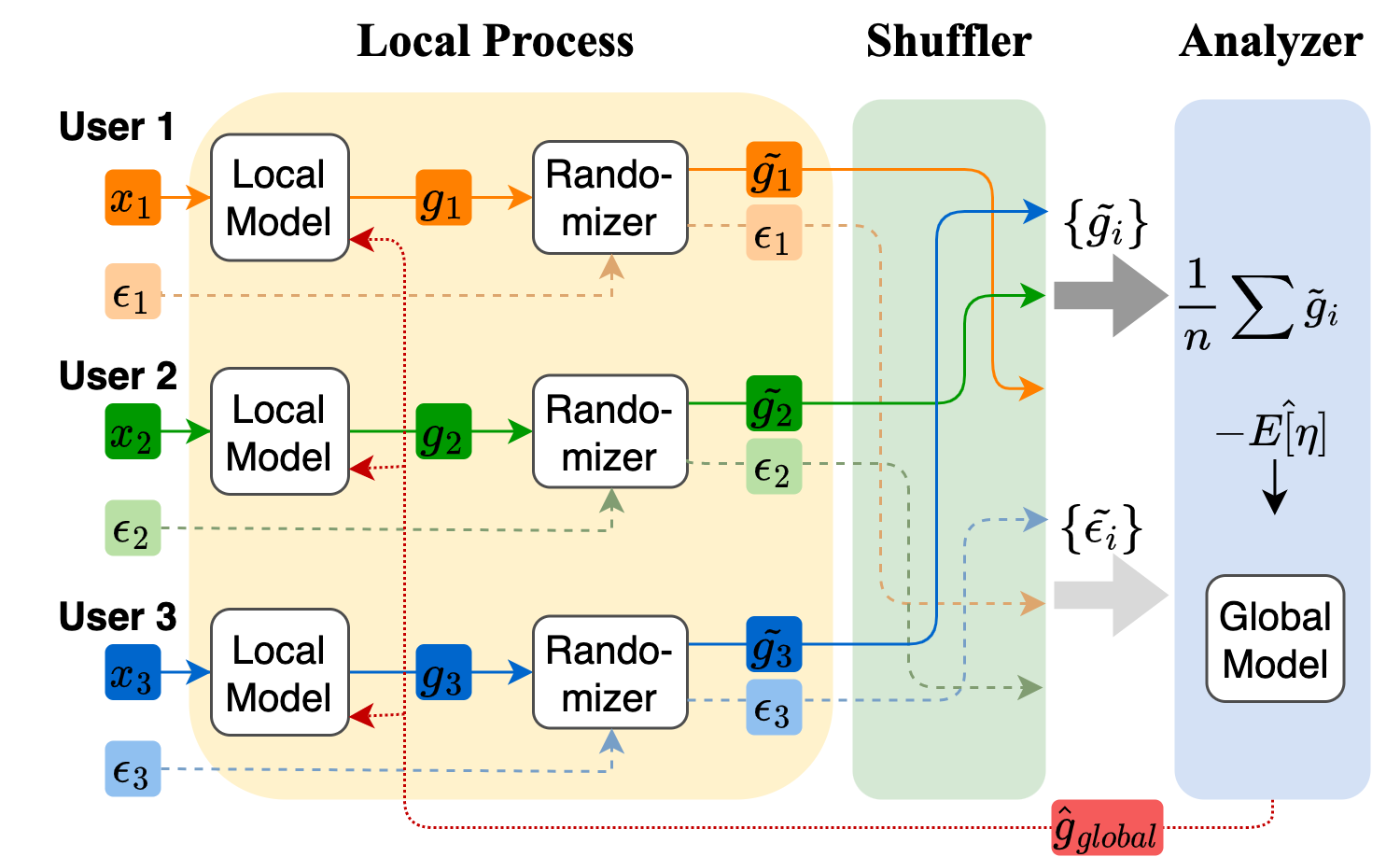}
  \caption{Procedure of APES. Gradients $g_i$ trained by user data $x_i$ are randomized locally, then privacy parameters $\epsilon_i$ and $g_i$ are shuffled separately. Analyzer acts as the curator to aggregate and calibrate gradients $\tilde{g}_i$ for global model. }
  \label{fig:PerS_framework}
\end{figure}

\subsection{Shuffle-based Privacy}
Shuffle model \cite{bittau2017prochlo} was proposed to strengthen central privacy while preserving local user privacy. 
Given $n$ datapoints as the dataset $D=\{g_1, g_2, ..., g_n\}$,  each $g_i \in D$ owned by user $i$ is perturbed locally by a randomizer $M: \mathcal{D} \rightarrow \mathcal{Z}$ to ensure $(\epsilon^l, \delta^l)$-LDP before being sent to shuffler. 
Shuffler, a trusted third party, permutes and releases all the datapoints by algorithm $S: Z \rightarrow Z$ to analyzer. Untrusted analyzer aggregates all the datapoints.
The process $P=S \circ M$ satisfies at least $(\epsilon^l, \delta^l)$-DP against analyzer (cf. Lemma \ref{parallel_comp}). Recent works \cite{erlingsson2019amplification, balle2019privacy, girgis2021renyi, feldman2022hiding} achieve a much stronger central privacy guarantee, which is considered as privacy amplification effect by shuffling.
Among existing works, \citeauthor{feldman2022hiding} provides a tight privacy upper bound for single-message summation. 
Take neighboring datasets $D$ and $D'$ that only differ at $g_1$ (or $g'_1$), any perturbed datapoint $\tilde{g}_i$ can be regarded as a sampling from the distribution of a specific perturbed point $\tilde{g}_1$ or $\tilde{g}'_1$ with probability $\exp(-\epsilon^l)$. By this observation, the privacy bound is yielded.

\subsection{Privacy Tools}
As a general technique to implement DP, Laplace Mechanism \cite{dwork2014algorithmic} perturbs numerical values.
\begin{Defs}[Laplace Mechanism]
Given any function $f: \mathcal{D} \rightarrow \mathcal{Z}^d$ and neighboring datasets $D$ and $D'$, let $\Delta f= \max||f(D)-f(D')||_1$ be the sensitivity function, Laplace mechanism $M(D) = f(D) + Y^d$ satisfies $\epsilon$-DP, where $Y^d$ is random variable $i.i.d$ drawn from distribution $Lap(0, \frac{\Delta f}{\epsilon})$.
\end{Defs}

Composition theorems provide tight bounds for the algorithm combined with several DP blocks. 
\begin{lemma}[Parallel Composition]\cite{yu2019differentially}
\label{parallel_comp}
	Given an $(\epsilon_i, \delta_i)$-DP algorithm $M_i: \mathcal{D} \rightarrow \mathcal{Z}$ for $i \in [m]$, a class of $\{M_i\}_{i\in[m]}$ on disjoint subsets of $D$ is $(\max{\epsilon_i}, \max{\delta_i})$-DP.
\end{lemma}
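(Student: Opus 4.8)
The plan is to exploit the disjointness of the subsets so that any single-record change touches only one of the mechanisms, reducing the joint guarantee to that one mechanism. First I would fix notation: write $D = \bigsqcup_{i \in [m]} D_i$ for the partition into disjoint blocks, let each $M_i$ act on $D_i$ using independent local randomness, and regard the class $\{M_i\}_{i \in [m]}$ as a single randomized algorithm $M(D) = (M_1(D_1), \ldots, M_m(D_m))$ taking values in $\mathcal{Z}^m$.

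Next I would take neighboring datasets $D, D'$ that differ in exactly one record. Because the $D_i$ are disjoint, that record lies in precisely one block, say index $j$; hence $D_i = D'_i$ for every $i \neq j$, while $D_j$ and $D'_j$ are themselves neighboring. The key structural observation is that, by independence of the local randomness, the output law of $M$ factorizes as a product of the marginals of the $M_i$, and the factors for $i \neq j$ are \emph{identical} under $D$ and $D'$.

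I would then fix an arbitrary event $S \subseteq \mathcal{Z}^m$ and condition on the outputs $z_{-j}$ of all mechanisms except $M_j$. Writing $S_{z_{-j}} = \{z_j : (z_j, z_{-j}) \in S\}$ for the corresponding slice, and using that $z_{-j}$ has a common distribution under $D$ and $D'$, I obtain
\begin{footnotesize}
\begin{align*}
\Pr[M(D) \in S]
&= \mathbb{E}_{z_{-j}}\left[\Pr[M_j(D_j) \in S_{z_{-j}}]\right] \\
&\leq \mathbb{E}_{z_{-j}}\left[e^{\epsilon_j}\Pr[M_j(D'_j) \in S_{z_{-j}}] + \delta_j\right] \\
&= e^{\epsilon_j}\Pr[M(D') \in S] + \delta_j,
\end{align*}
\end{footnotesize}
where the middle inequality is the $(\epsilon_j, \delta_j)$-DP guarantee of $M_j$ applied to the neighboring pair $D_j, D'_j$. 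Bounding $\epsilon_j \leq \max_i \epsilon_i$ and $\delta_j \leq \max_i \delta_i$ then yields the claim.

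The main obstacle, and essentially the only place requiring care, is the conditioning step: I must justify that the additive term integrates to exactly $\delta_j$ (the common law of $z_{-j}$ has total mass one) and that the factor $e^{\epsilon_j}$ can be pulled through the expectation to recombine into $\Pr[M(D') \in S]$. Both hinge on the marginals for $i \neq j$ being identical across $D$ and $D'$, which is precisely what disjointness of the subsets guarantees; once that is in place the argument is routine.
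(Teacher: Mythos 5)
Your argument is correct and is the standard proof of parallel composition: a single-record change falls in exactly one block of the disjoint partition, the joint output law factorizes over the independent mechanisms, and conditioning on the unchanged coordinates lets the $(\epsilon_j,\delta_j)$ guarantee of the one affected mechanism pass through the expectation intact. The paper itself gives no proof of this lemma (it is imported from \cite{yu2019differentially}), so there is nothing to contrast with; the only fine print worth noting is that the partition must be determined in a data-independent way so that a replaced record cannot migrate between blocks, which is the implicit assumption behind your step ``$D_i = D'_i$ for every $i \neq j$.''
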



\begin{lemma}[Advanced Composition]\cite{dwork2014algorithmic}
\label{advanced_comp} Given an $(\epsilon_i, \delta_i)$-DP algorithm $M_i: \mathcal{D} \rightarrow \mathcal{Z}$ for $i \in [m]$, the sequence of $\{M_i\}_{i\in[m]}$ on the same dataset $D$ under m-fold composition is $(\epsilon', \delta'+m\delta)$-DP where $\epsilon'=\epsilon\sqrt{2m\log{1/\delta'}}+ m\epsilon(e^{\epsilon}-1)$.
\end{lemma}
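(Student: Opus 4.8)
The plan is to follow the classical privacy-loss-random-variable argument: express the $m$-fold composition through its cumulative privacy loss and control the tail of that loss with a martingale concentration inequality. Since the stated bound $\epsilon'=\epsilon\sqrt{2m\log(1/\delta')}+m\epsilon(e^{\epsilon}-1)$ is homogeneous in $\epsilon,\delta$, I read each $M_i$ as $(\epsilon,\delta)$-DP and aim to certify $(\epsilon',\delta'+m\delta)$-DP for the composed mechanism.

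First I would reduce the $\delta>0$ case to the pure case. Fix neighboring $D,D'$ and define the ``bad'' output set $B_i=\{z:\Pr[M_i(D)=z]>e^{\epsilon}\Pr[M_i(D')=z]\}$; the $(\epsilon,\delta)$-DP guarantee forces $\Pr[M_i(D)\in B_i]\le\delta$. A union bound over $i\in[m]$ shows that, except on an event of probability at most $m\delta$, every coordinate of the composed output avoids its bad set, so it suffices to establish $(\epsilon',\delta')$-DP for mechanisms whose per-step privacy loss is deterministically bounded by $\epsilon$, with the leftover $m\delta$ absorbed into the additive term.

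Next, for these reduced mechanisms I would introduce the privacy loss $L_i=\ln\frac{\Pr[M_i(D)=z_i]}{\Pr[M_i(D')=z_i]}$ evaluated on the realized output $z_i\sim M_i(D)$, and write the total loss $L=\sum_{i=1}^{m}L_i$. Two facts must be extracted: (i) $|L_i|\le\epsilon$ almost surely, which is exactly the reduced $\epsilon$-DP constraint, and (ii) $\mathbb{E}[L_i]\le\epsilon(e^{\epsilon}-1)$, which follows by bounding the KL divergence $D\bigl(M_i(D)\,\|\,M_i(D')\bigr)$ under $\epsilon$-DP. Because the composition is adaptive, the sequence $L_i-\mathbb{E}[L_i\mid z_1,\dots,z_{i-1}]$ is a bounded martingale difference, so Azuma's inequality yields
\[
\Pr\!\left[L\ge m\epsilon(e^{\epsilon}-1)+\epsilon\sqrt{2m\log(1/\delta')}\right]\le\delta'.
\]

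Finally I would convert this high-probability bound on $L$ back into a differential-privacy statement: whenever $L\le\epsilon'$ the output density ratio is at most $e^{\epsilon'}$, and the complementary event $L>\epsilon'$ carries probability at most $\delta'$, giving $(\epsilon',\delta')$-DP; combined with the $m\delta$ slack from the reduction, this yields the claimed $(\epsilon',\delta'+m\delta)$-DP. I expect the main obstacle to be step (ii) — the expected-loss bound $\mathbb{E}[L_i]\le\epsilon(e^{\epsilon}-1)$ — since it requires massaging the KL divergence under the two-sided $\epsilon$-DP constraint rather than settling for the naive $\mathbb{E}[L_i]\le\epsilon$, and it is precisely this sharper mean that produces the linear $m\epsilon(e^{\epsilon}-1)$ correction instead of a loose $m\epsilon$.
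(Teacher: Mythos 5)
The paper does not prove this lemma at all: it is stated in the Preliminaries as a known result cited from Dwork and Roth, so there is no in-paper argument to compare yours against. Your proposal reconstructs the standard Dwork--Rothblum--Vadhan proof correctly: the reduction to pure DP contributing the additive $m\delta$, the privacy-loss variables $L_i$ with $|L_i|\le\epsilon$ and $\mathbb{E}[L_i]\le D(M_i(D)\,\|\,M_i(D'))\le\epsilon(e^{\epsilon}-1)$, and Azuma--Hoeffding applied to the martingale differences (note the constant $\epsilon\sqrt{2m\log(1/\delta')}$ comes out right precisely because each $L_i$ ranges over an interval of length $2\epsilon$, giving $\sum_i(b_i-a_i)^2=4m\epsilon^2$ in the Hoeffding exponent). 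The only step I would tighten is the reduction from $(\epsilon,\delta)$-DP to $\epsilon$-DP: the union bound over the bad sets $B_i$ is the right intuition, but conditioning on avoiding them perturbs the output distributions, so the rigorous version instead decomposes each $(\epsilon,\delta)$-DP pair as a mixture that, with probability $1-\delta$, draws from a pair satisfying pure $\epsilon$-DP; this is a known patch and does not change the structure of your argument.
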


No matter what dataset or query is adopted, any privacy mechanism can be reduced to a basic random response with the same privacy level \cite{kairouz2015composition}.
\begin{lemma}[Degraded Privacy]
\label{degraded_priv}
	For any $\epsilon$-DP mechanism $M$, for $X:\{x, \bar{x}\}$, $\exists \tilde{M}$ dominates $M$ where:
	\begin{footnotesize}
	\begin{equation*}
	 \Pr[\tilde{M}(x)=z]=\left\{
    \begin{array}{rcl}
    \frac{e^\epsilon}{1+e^\epsilon},     &  z=x\\
   \frac{1}{1+e^\epsilon},     & z=\bar{x} 
    \end{array}
    \right.
    \end{equation*}
    \end{footnotesize}
\end{lemma}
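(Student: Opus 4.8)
The plan is to read ``$\tilde M$ dominates $M$'' through the hypothesis-testing (privacy-region) characterization of differential privacy, following \citeauthor{kairouz2015composition}, and to show that the binary randomized response $\tilde M$ in the statement is the extremal---hence worst-case---$\epsilon$-DP mechanism on the input pair $\{x,\bar x\}$. For any mechanism $N$ acting on this binary input and any rejection rule $S$ used to test $H_0:\text{input}=x$ against $H_1:\text{input}=\bar x$, I would write $\alpha=\Pr[N(x)\in S]$ for the type-I error and $\beta=\Pr[N(\bar x)\notin S]$ for the type-II error, and let $\mathcal R(N)$ be the set of all $(\alpha,\beta)$ attainable over all (possibly randomized) rules. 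I interpret ``$\tilde M$ dominates $M$'' as $\mathcal R(M)\subseteq\mathcal R(\tilde M)$: every distinguishing tradeoff achievable against $M$ is already achievable against $\tilde M$, so $\tilde M$ is at least as easy to attack. This is the right notion to isolate because it is preserved under post-processing and composition, which is precisely how the lemma is invoked later.

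First I would translate the $\epsilon$-DP inequalities of Definition~1 into linear constraints on $(\alpha,\beta)$. Using $\Pr[N(\bar x)\in S]\le e^\epsilon\Pr[N(x)\in S]$ gives $e^\epsilon\alpha+\beta\ge 1$, and applying the same inequality to the complement $S^c$ gives $\alpha+e^\epsilon\beta\ge 1$. Hence every $\epsilon$-DP mechanism satisfies $\mathcal R(M)\subseteq\mathcal R_\epsilon$, where
\[
\mathcal R_\epsilon=\{(\alpha,\beta)\in[0,1]^2 : e^\epsilon\alpha+\beta\ge 1,\ \alpha+e^\epsilon\beta\ge 1\}.
\]
The characterization I would cite is that this containment is also sufficient and that $\mathcal R_\epsilon$ is the largest region any $\epsilon$-DP mechanism can attain.

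Next I would compute $\mathcal R(\tilde M)$ explicitly for the randomized response of the lemma, with $\Pr[\tilde M(x)=x]=\tfrac{e^\epsilon}{1+e^\epsilon}$ and $\Pr[\tilde M(x)=\bar x]=\tfrac{1}{1+e^\epsilon}$ (and symmetrically on input $\bar x$). The deterministic test ``decide $\bar x$ iff the output is $\bar x$'' gives $\alpha=\beta=\tfrac{1}{1+e^\epsilon}$, which meets both $e^\epsilon\alpha+\beta=1$ and $\alpha+e^\epsilon\beta=1$ with equality; randomizing between this test and the two trivial tests traces out both boundary segments of $\mathcal R_\epsilon$. Thus $\mathcal R(\tilde M)=\mathcal R_\epsilon$, and combining with the previous step yields $\mathcal R(M)\subseteq\mathcal R_\epsilon=\mathcal R(\tilde M)$ for every $\epsilon$-DP $M$, which is the claimed domination. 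Equivalently, the privacy-loss variable of $\tilde M$ takes only the endpoint values $\pm\epsilon$ with weights $\tfrac{e^\epsilon}{1+e^\epsilon},\tfrac{1}{1+e^\epsilon}$ chosen so that $\mathbb E[e^{-L}]=1$, while the loss of any $\epsilon$-DP mechanism is supported in $[-\epsilon,\epsilon]$ under the same constraint; pushing all mass to the endpoints makes $\tilde M$ maximal in the convex order, so convex-functional bounds used for composition (advanced composition of Lemma~\ref{advanced_comp}, and R\'enyi/hockey-stick divergences) evaluated on $\tilde M$ upper-bound the corresponding quantities for $M$.

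The main obstacle is the sufficiency/extremality half: arguing that no $\epsilon$-DP mechanism can have a region strictly larger than $\mathcal R_\epsilon$ and that $\tilde M$ saturates its boundary. This requires handling arbitrary randomized tests and verifying tightness of the two supporting lines, and it is the step where I would rely on the cited tradeoff-curve characterization rather than re-derive it. A secondary subtlety is that the relevant loss comparison is convex ordering rather than plain stochastic dominance---the means of the privacy-loss variables need not agree across mechanisms---so I would take care to confirm that convex order is exactly what transfers the downstream composition bounds.
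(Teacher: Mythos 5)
The paper does not actually prove this lemma: it is imported verbatim from \cite{kairouz2015composition}, so there is no in-paper argument to compare against. Your reconstruction follows the standard route of that reference (hypothesis-testing regions, the two supporting lines $e^\epsilon\alpha+\beta\ge 1$ and $\alpha+e^\epsilon\beta\ge 1$, and the observation that the binary randomized response saturates both), and the forward half you carry out --- every $\epsilon$-DP mechanism has $\mathcal R(M)\subseteq\mathcal R_\epsilon$, and $\tilde M$ attains all of $\mathcal R_\epsilon$ --- is correct as written.

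The one point I would push back on is where you locate the difficulty. You flag the ``extremality'' of $\mathcal R_\epsilon$ as the hard step, but that half is the easy linear-constraint computation you already did. What the paper actually consumes downstream (in the Transformation lemma of the EoN analysis) is not the region containment $\mathcal R(M)\subseteq\mathcal R(\tilde M)$ but the constructive decomposition $\mu_j^{(1)}=\frac{e^{\epsilon_j}}{1+e^{\epsilon_j}}\rho^{(1)}+\frac{1}{1+e^{\epsilon_j}}{\rho'}^{(1)}$, i.e., the statement that $M$ can be \emph{realized} as a data-independent post-processing of $\tilde M$. Passing from ``$M$'s error region is contained in $\tilde M$'s'' to ``$M$ is a garbling of $\tilde M$'' is the Blackwell-type converse for binary experiments, and it is the genuinely nontrivial content of the lemma in \cite{kairouz2015composition}. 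Your sketch gestures at this via preservation under post-processing and the convex-order remark on the privacy-loss variable, but as written it proves the weaker indistinguishability statement and would still need to import (or re-derive) the simulation direction to support the mixture identity the paper relies on. Since you explicitly defer to the cited characterization, this is a matter of pointing the deferral at the right half of the theorem rather than a fatal error.
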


\section{Proposed Methods}
This section illustrates our methods for a strong privacy amplification effect.
We first introduce Clip-Laplace Mechanism to implement the effect. Then two frameworks are proposed. APES is a general framework which shuffles both privacy parameters and gradients, the improved S-APES sparsifies dimensions without downgrading shuffling effect.



\subsection{Clip-Laplace Mechanism}

To make bounding privacy while maintaining model accuracy possible, it is necessary to introduce a mechanism for LDP with a finite and fixed output range. Existing works on this task provide non-fixed output ranges \cite{geng2018truncated}, or increase noise scale when ranges of input and output are not overlapped \cite{holohan2018bounded, croft2022differential}. 
To address this issue, we introduce a variant of Laplace Mechanism, \emph{Clip-Laplace}, 
which provides $\epsilon$-DP for continuous real values with the same finite output ranges.

\begin{Defs}[Clip-Laplace Mechanism]
\label{def:clip-lap}
	Given any function $f: \mathcal{X} \rightarrow \mathcal{Y}^d$ and sensitivity $\Delta f = \max||f(X)-f(X')||_1$ for any neighboring datasets $X$ and $X'$. Clip-Laplace Mechanism is $M:\mathcal{Y}^d  \rightarrow \mathcal{Z}^d$. Each $Z \in \mathcal{Z}^d$ is a r.v. i.i.d. drawn from distribution $CLap(f(x), \lambda, A)$ of which the probability density function is defined as follows:
	\begin{footnotesize}
	\[
    p(z)=\left\{
    \begin{array}{rcl}
    \frac{1}{2\lambda S}\exp{(-\frac{|z-f(x)|}{\lambda})},     &  -A\le z \le A\\
    0,     & otherwise 
    \end{array}
    \right.
    \]
    \end{footnotesize}
    where normalization factor $S=1-\frac{1}{2}\exp(\frac{-A+f(x)}{\lambda})-\frac{1}{2}\exp(\frac{-A-f(x)}{\lambda})$ and $A \geq \Delta f /2$.
\end{Defs}

\begin{theorem}
\label{clip_lap_DP}
Clip Laplace mechanism preserves $\epsilon$-LDP when the $f(x) \in [-\Delta f/2, \Delta f/2]$, and $\lambda=\Delta f / \epsilon$.
\end{theorem}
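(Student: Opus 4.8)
The plan is to verify the local differential privacy condition (Definition of LDP) directly from the density. It suffices to show that for every output $z \in [-A, A]$ and every pair of inputs with $\mu := f(x),\ \mu' := f(x') \in [-\Delta f/2, \Delta f/2]$, the density ratio obeys $p_\mu(z)/p_{\mu'}(z) \le e^{\epsilon}$ with $\epsilon = \Delta f/\lambda$. Writing out the Clip-Laplace density and taking logarithms, the log-ratio splits into an \emph{exponent part} $\frac{|z-\mu'| - |z-\mu|}{\lambda}$ coming from the Laplace kernel and a \emph{normalization part} $\log\frac{S(\mu')}{S(\mu)}$ coming from the center-dependent normalizer $S$. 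The exponent part is immediately controlled by the triangle inequality, $|z-\mu'| - |z-\mu| \le |\mu - \mu'| \le \Delta f$, so on its own it contributes at most $e^{\epsilon}$.

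The hard part is that $S$ depends on the center $\mu$, so the normalization part can be strictly positive, and naive term-by-term bounding fails: integrating the pointwise kernel bound only gives $S(\mu')/S(\mu) \le e^{|\mu-\mu'|/\lambda}$, which combined with the exponent part yields the far too weak $e^{2\epsilon}$. I therefore bound the two parts jointly. First I rewrite the normalizer in closed form as $S(\mu) = 1 - e^{-A/\lambda}\cosh(\mu/\lambda)$, which is strictly decreasing in $|\mu|$; this at once settles the case $|\mu'| \ge |\mu|$, where the normalization part is $\le 0$ and the whole log-ratio is at most $|\mu-\mu'|/\lambda \le \epsilon$. For the remaining case I maximize the exponent part over $z$ (the supremum $|\mu-\mu'|$ is attained at any admissible $z$ lying beyond $\mu$ on the side away from $\mu'$, and such $z$ exists in $[-A,A]$ since $A \ge \Delta f/2 \ge |\mu|$), reducing the claim to the one-dimensional inequality
\[
\log\frac{S(\mu')}{S(\mu)} + \frac{|\mu - \mu'|}{\lambda} \le \frac{\Delta f}{\lambda}
\qquad\text{whenever } |\mu| > |\mu'|.
\]

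To settle this I fix $\mu$ (WLOG $\mu > 0$) and view the left-hand side as a function of $\mu'$ on $[-\mu,\mu]$. Differentiating and substituting the closed form of $S$, the derivative is negative precisely when $e^{-(A+\mu')/\lambda} < 1$, i.e.\ when $\mu' > -A$; since $A \ge \Delta f/2 \ge \mu$, the entire interval $[-\mu,\mu]$ lies in this decreasing region, so the maximum is attained at the left endpoint $\mu' = -\mu$. There the normalization part vanishes because $S(-\mu) = S(\mu)$, and the bound collapses to $2\mu/\lambda \le \Delta f/\lambda$, which holds as $|\mu| \le \Delta f/2$. Equality occurs exactly at the extreme pair $\mu = \Delta f/2,\ \mu' = -\Delta f/2$, confirming that the scale $\lambda = \Delta f/\epsilon$ is tight. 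The symmetric ratio $p_{\mu'}(z)/p_{\mu}(z)$ is handled identically by swapping roles, completing the verification of $\epsilon$-LDP. The crux throughout is this joint treatment of the kernel and the center-dependent normalizer, made possible by the monotonicity that the constraint $A \ge \Delta f/2$ guarantees.
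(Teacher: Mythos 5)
Your proof is correct, and it is more rigorous than the paper's own argument while starting from the same place. The paper also writes the density ratio as (normalizer ratio) times (Laplace kernel ratio), bounds the kernel ratio by $e^{\epsilon|f(x)-f(x')|/\Delta f}$ via the triangle inequality, and then simply asserts that the whole product is at most $e^{\epsilon}$ --- the center-dependent normalizer ratio $S'/S$, which as you observe can strictly exceed $1$, is silently absorbed in the final step with no justification. Your argument supplies exactly the missing piece: the closed form $S(\mu)=1-e^{-A/\lambda}\cosh(\mu/\lambda)$, the monotonicity in $|\mu|$ that disposes of the case $|\mu'|\ge|\mu|$, and the derivative computation showing that for $|\mu'|<|\mu|$ the joint quantity $\log(S(\mu')/S(\mu))+|\mu-\mu'|/\lambda$ is maximized at $\mu'=-\mu$, where the normalizers cancel by symmetry and the bound collapses to $2|\mu|/\lambda\le\Delta f/\lambda=\epsilon$; your observation that equality holds at $\mu=\Delta f/2$, $\mu'=-\Delta f/2$ is precisely the worst case the paper gestures at with the expression $||\Delta f/2-(-\Delta f/2)||_1$. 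One caveat worth recording: your verification (like the real content of the paper's) is per coordinate, whereas the paper's displayed proof takes a product over $d$ coordinates with $\Delta f$ the joint $\ell_1$ sensitivity. In that multi-coordinate reading the positive normalizer contributions from several coordinates need not cancel against the kernel terms --- for instance, two coordinates each moving from $\Delta f/2$ to $0$ give a ratio of $4e^{\epsilon}/(1+e^{-\epsilon/2})^2>e^{\epsilon}$ at $z=(\Delta f/2,\Delta f/2)$ --- so the per-dimension statement you prove, which is also the one Theorem 3 actually invokes ($\epsilon^l_i$-LDP per dimension, $d\epsilon^l_i$ per user), is the correct and provable reading of the theorem.
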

The proof is provided in Appendix A.

\emph{Discussion.} (\romannumeral1) When achieving the same level of $\epsilon$-LDP,  the variance of Clip-Laplacian outputs is smaller than classic Laplacian outputs. This property is based on the assumption of symmetric limited range of inputs (cf. Theorem \ref{clip_lap_DP}), which is reasonable for many fields such as gradients aggregation, location statistics, financial analysis and so on. 
(\romannumeral2) The Clip-Laplacian outputs are biased. A feasible solution for correction is to calibrate the outputs with the expectation, which can be estimated when privacy parameters are given.

\subsection{APES Framework}
\label{sec_perS}
We formalize \textbf{APES}, a privacy \textbf{\underline{A}}mplification framework for  \textbf{\underline{PE}}rsonalized private federated learning with \textbf{\underline{S}}huffle Model.
The framework includes three procedures: local updating, shuffling and analyzing process with three parties separately. Convergence upper bound of APES is given at last.

\subsubsection{Architecture}
Consider 3 parties: (\romannumeral1) $n$ users, each holds a dataset $X_i$ and a randomizer $M_i$ satisfying $\epsilon^l_i$-LDP. (\romannumeral2) A shuffler with algorithm $S$. (\romannumeral3) An analyzer, trains global model with shuffled messages.
The process $P = S \circ M$ ensures $(\epsilon^c, \epsilon^c)$-DP for global model, where $M = (M_1, ..., M_n)$ with $\epsilon^l=(\epsilon^l_1,  ..., \epsilon^l_n)$ for dimension level.

\subsubsection{Basic Framework}
Algorithm \ref{algo_basic} outlines the procedures of APES. We denote clip bound by $C$, learning rate by $\alpha$ and training epochs by $T$. Main procedures are as follows:
\begin{itemize}
	\item \emph{Local Updating}. 
	Each user randomizes each dimension of model gradient $g_i$ with  $\epsilon^l_i$ by applying Clip-Laplace Mechanism. 
	Both perturbed gradient $\tilde{g}_i$ and $\epsilon_i^l$ are sent to Shuffler. To keep the order of dimensions, dimension index $k$ of $\tilde{g}_i$ is sent as well.
	\item \emph{Shuffling Process}. Shuffler shuffles $\{\tilde{g}_i\}_{i \in [n]}$  within the same dimension, $\{\epsilon_i^l\}_{i \in [n]}$ is also permuted.
	\item \emph{Analyzing Process}. 
 Considering Clip-Laplace Mechanism is biased, the average gradient $\tilde{g}$ needs to be calibrated.
We cannot calibrate $\tilde{g}_i$ one by one as the correspondence of $\epsilon^l_i$ and $g_i$ is invisible to analyzer.  
 Empirically, we observe that the value of $\tilde{g}$ is close to the value of $\mathbb{E}[\tilde{\bar{g}}]$  (cf. Fig. 4 in Appendix C), where $\bar{g}=\frac{1}{n}\sum_{i=1}^n{g_i}$, $\mathbb{E}[\tilde{\bar{g}}] = \frac{1}{n}\sum_{i=1}^n{\mathbb{E}[\tilde{\bar{g}}_i]}$ and $\tilde{\bar{g}}_i \sim CLap(\bar{g}, 2C/\epsilon^l_i, C)$. Hence we can estimate the clean gradients $\bar{g}$ by approximating $\mathbb{E}[\tilde{g}]$ with $\mathbb{E}[\tilde{\bar{g}}]$. Specifically, $\mathbb{E}[\tilde{g}]$ is estimated by $\tilde{g}$, and each term of $E[\tilde{\bar{g}}]$ with $\epsilon^l_i$ is as follows:
\begin{footnotesize}
\begin{equation}
	\label{cali_noise}
	\mathbb{E}[\tilde{\bar{g}}_i]
	= \frac{(C+\lambda_i) \cdot (e_1 - e_2) + 2\bar{g}}{2 - e_1 - e_2}
\end{equation}
\end{footnotesize}
where $e_1=e^{\frac{-C-\bar{g}}{\lambda_i}}$, $e_2=e^{\frac{-C+\bar{g}}{\lambda_i}}$ and $\lambda_i = 2C/\epsilon^l_i$.
\end{itemize}

\renewcommand{\algorithmicrequire}{\textbf{Input}}
\renewcommand{\algorithmicensure}{\textbf{Output}}
\algnewcommand{\LineComment}[1]{\State \(\triangleright\) #1}
\begin{algorithm}[tb]
	\caption{Basic Framework: APES}
	\label{algo_basic}
	\begin{algorithmic}
	\Require $T$, $\{(X_i, \epsilon^l_i) \}_{i \in [n]}$, $h(w)$, $C$, $\alpha$.
		\Ensure model $w$
		\State Analyzer initializes and broadcasts $w^{(0)}$.
		\For{$t=1,2,...,T$}
			\LineComment{\texttt{Local Updating}}
			\For{each user $i \in [n]$} 
				\State $w_i \leftarrow w^{(t)}$ \Comment{Update local model}
				\State $g_i \leftarrow \nabla_{w_i} h(w_i, X_i)$
				\State $\bar{g_i} \leftarrow {\rm Clip} (g_i, -C, C)$
				\State $\tilde{g_i} \leftarrow  {Randomize(\cdot)}$ \Comment{Local perturbation}
				\State user $i$ uploads $(\tilde{g_i}, \epsilon^l_i)$ to Shuffler
			\EndFor
			\LineComment{\texttt{Shuffling Process}}
			\For{each dimension $k \in [d]$}
				\State generate permutation $\pi_k$ over $[d]$
				\State $\{(\tilde{g}_{i,\pi_k(k)},k)\}_{i \in [n]} \leftarrow {\rm Shuffle}(\pi_k, \{\tilde{g}_{i,k}\}_{i\in [n]})$ 
			\EndFor
			\State generate permutation $\pi$ over $[n]$
			\State $\{\epsilon_{\pi(i)}^l\}_{i\in [n]} \leftarrow {\rm Shuffle}(\pi, \{\epsilon_i^l\}_{i\in [n]})$ \Comment{Shuffle $\epsilon$}
			\State send $\{\{(\tilde{g}_{i,\pi_k(k)},k)\}_{i \in [n]}\}_{k\in [d]}$ and $\{\epsilon_{\pi(i)}^l\}_{i\in [n]}$ 
			\LineComment{\texttt{Analyzing Process}}
			\For{each dimension $k \in [d]$}
				\State $\tilde{g}_k \leftarrow \frac{1}{n}\sum_i \tilde{g}_{i,k}$ \Comment{Aggregate by dimension}
                \EndFor
				\State $\hat{g} \leftarrow {\rm Calibrate}(\tilde{g}, \{\epsilon_i\}_{i \in [n]})$ 
			\State $w^{(t+1)} \leftarrow w^{(t)} - \alpha \hat{g}$ and broadcast.
		\EndFor
		\Return $w^{(T)}$
	\end{algorithmic}
\end{algorithm}

\subsubsection{Convergence Analysis}
To demonstrate the performance of global model under Clip-Laplace perturbation, we provide the upper bound of convergence of Algorithm \ref{algo_basic} with the objective function $h(w;w^{(0)})=F(w)+\frac{\mu}{2}||w-w^{(0)}||^2$. The regularization term $\frac{\mu}{2}||w-w^{(0)}||^2$ of $h$ is introduced for the ease of calculation \cite{li2020federated}.
\begin{theorem}[Convergence Upper Bound]
\label{th_conv}
After T aggregations, the expected decrease in the global loss function $f(w) = \frac{1}{n}\sum_iF_i(w)$ of APES is bounded as follows:
\begin{footnotesize}
\begin{align*}
   & \mathbb{E}[f(\tilde{w}^{(T)})-f(w^*)]
    \le a_1^T(\mathbb{E}[f(\tilde{w}^{(0)})]-f(w^*)) \\
    &+ \frac{a_1^T-1}{a_1-1}(O(a_2C/\min(\epsilon^l_i)) + O(a_3 C^2/\min(\epsilon^l_i)^2))
\end{align*}
\end{footnotesize}
where $a_1=1+\frac{2\beta(\alpha B-1)}{\mu} + \frac{2\beta LB(\alpha+1)}{\mu\bar{\mu}} + \frac{2\beta LB^2(1+\alpha)^2}{\bar{\mu}^2}, a_2=L(\frac{1}{\mu} + \frac{BL(1+\alpha)}{\bar{\mu}}), a_3=\frac{L}{2}$.
\end{theorem}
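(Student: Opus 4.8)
The plan is to derive a one-step descent recursion for the expected optimality gap and then unroll it geometrically. I work under the FedProx-style conditions implicit in the constants: each local $F_i$ is $L$-smooth, the aggregate $f$ is $\beta$-smooth, the gradients obey a $B$-local dissimilarity bound, and the proximal regularization with parameter $\mu$ induces an effective strong convexity $\bar{\mu}$, so that $\|\nabla f(w)\|^2 \ge 2\bar{\mu}(f(w)-f(w^*))$. The genuinely new ingredient is the Clip-Laplace perturbation added to each gradient coordinate: I characterize it by a bias of order $C/\epsilon^l_i$ (the mechanism is biased and only approximately calibrated) and a second moment of order $(C/\epsilon^l_i)^2$, worst-cased across users through $\min(\epsilon^l_i)$.

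First I would apply $\beta$-smoothness of $f$ to the update $\tilde{w}^{(t+1)} = \tilde{w}^{(t)} - \alpha\hat{g}$, giving the quadratic upper bound
\begin{footnotesize}
\[
f(\tilde{w}^{(t+1)}) \le f(\tilde{w}^{(t)}) - \alpha\langle \nabla f(\tilde{w}^{(t)}), \hat{g}\rangle + \tfrac{\beta\alpha^2}{2}\|\hat{g}\|^2,
\]
\end{footnotesize}
and then take the conditional expectation over the Clip-Laplace randomness. I decompose $\hat{g}$ into its conditional mean (the calibrated clipped gradient, matching $\nabla f(\tilde{w}^{(t)})$ up to a bias term) plus a zero-mean fluctuation whose variance I control by the second-moment bound. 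The inner-product term produces the descent contribution $-\alpha\|\nabla f\|^2$ together with a cross term against the bias; the squared-norm term expands, via the $B$-local dissimilarity bound, into a gradient-norm piece plus the aggregated noise variance.

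Next, to turn gradient norms into function-value gaps I invoke the strong-convexity inequality induced by $\bar{\mu}$, converting $-\alpha\|\nabla f\|^2$ into a contraction of $f(\tilde{w}^{(t)})-f(w^*)$. Collecting the clean gradient-norm contributions assembles the multiplicative factor $a_1$, which packages $\alpha$, $\beta$, $L$, $B$, $\mu$ and $\bar{\mu}$, while the bias and variance of the noise assemble into the additive terms $O(a_2 C/\min(\epsilon^l_i))$ and $O(a_3 C^2/\min(\epsilon^l_i)^2)$, matching $a_2=L(\tfrac{1}{\mu} + \tfrac{BL(1+\alpha)}{\bar{\mu}})$ and $a_3=\tfrac{L}{2}$. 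This yields a recursion of the form $\mathbb{E}[f(\tilde{w}^{(t+1)})-f(w^*)] \le a_1\,\mathbb{E}[f(\tilde{w}^{(t)})-f(w^*)] + (\text{noise})$, which I unroll over $t=0,\dots,T-1$: the homogeneous part gives $a_1^T(\mathbb{E}[f(\tilde{w}^{(0)})]-f(w^*))$ and the constant noise injection accumulates through the geometric sum $\sum_{t=0}^{T-1}a_1^t = \frac{a_1^T-1}{a_1-1}$, producing exactly the stated bound.

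The main obstacle I anticipate is the careful bookkeeping of the Clip-Laplace moments. Because the mechanism is biased and the analyzer cannot match each $\epsilon^l_i$ to its gradient after shuffling, the calibration leaves a residual bias that I must bound as $O(C/\min(\epsilon^l_i))$ rather than showing it vanishes, while simultaneously controlling the second moment as $O(C^2/\min(\epsilon^l_i)^2)$, and then propagating both through the smoothness and dissimilarity inequalities without the constants degrading. A secondary subtlety is confirming that clipping the gradients to $[-C,C]$ does not inject additional uncontrolled bias beyond what the $B$-local dissimilarity assumption already absorbs.
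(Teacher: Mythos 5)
Your high-level skeleton---a one-step recursion on the optimality gap, geometric unrolling via $\sum_{t=0}^{T-1}a_1^t=\frac{a_1^T-1}{a_1-1}$, and bounding the Clip-Laplace bias by $O(C/\min(\epsilon^l_i))$ and second moment by $O(C^2/\min(\epsilon^l_i)^2)$ worst-cased over users---matches the paper. But the mechanism you use to obtain the one-step descent inequality is not the paper's, and it cannot produce the stated constants. The paper's argument is a FedProx-style \emph{inexact proximal} analysis: each user approximately minimizes $h_i(w;w^{(t)})=F_i(w)+\frac{\mu}{2}\|w-w^{(t)}\|^2$, with $\alpha$ playing the role of the inexactness parameter ($\|\nabla h_i(w^{(t+1)};\tilde{w}^{(t)})\|\le\alpha\|\nabla F_i(\tilde{w}^{(t)})\|$), not a step size in a descent lemma. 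The iterate movement $\|\tilde{w}^{(t+1)}-\tilde{w}^{(t)}\|$ is controlled through the $\bar{\mu}$-strong convexity of the \emph{local} objectives $h_i$ (where $\bar{\mu}=L+\mu$) combined with $B$-dissimilarity, giving the $\frac{1+\alpha}{\bar{\mu}}\mathbb{E}_i[\|\nabla F_i\|]$ bound; the inner-product term is controlled by differentiating $h_i$ and using $L$-smoothness of the $F_i$. This is exactly where the terms $\frac{2\beta LB(\alpha+1)}{\mu\bar{\mu}}$ and $\frac{2\beta LB^2(1+\alpha)^2}{\bar{\mu}^2}$ in $a_1$ come from. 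A plain smoothness-based descent lemma on the update $\tilde{w}^{(t+1)}=\tilde{w}^{(t)}-\alpha\hat{g}$, as you propose, yields a contraction factor of the form $1-2\beta\alpha+O(\alpha^2)$ and would not reproduce $a_1$, $a_2$, $a_3$ as stated.

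You also misassign the roles of the constants in a way that breaks the argument. In the paper $\beta$ is the Polyak--\L ojasiewicz constant ($f(\tilde{w})-f(w^*)\le\frac{1}{2\beta}\|\nabla f(\tilde{w})\|^2$, with the $F_i$ explicitly allowed to be non-convex), not a smoothness modulus of $f$; and $\bar{\mu}$ is the strong-convexity modulus of the regularized local subproblem $h_i$, not of $f$. Your inequality $\|\nabla f(w)\|^2\ge 2\bar{\mu}(f(w)-f(w^*))$ therefore both assumes more than the paper does (strong convexity of the global $f$) and feeds the wrong constant into the recursion: the contraction in $a_1$ is driven by $\beta$ (via PL), and the $\|\nabla f\|\cdot\|\eta\|$ cross term is absorbed using the $l$-Lipschitz continuity of $f$, which your outline omits. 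To prove the theorem as stated you would need to redo the one-step bound following the inexact-proximal route (Lemma 8 of the appendix) and then apply the PL condition, rather than a gradient-descent descent lemma with strong convexity.
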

The proof refers to Appendix B.

\emph{Discussion.} The convergence upper bound increases as the bias and variance (the second and the third term) of Clip-Laplace perturbation grow, of which the influence is the same as classic Laplace Mechanism.



\subsection{S-APES Framework}
\label{sec_PerSS}
To strengthen privacy in the high-dimension scenario, we propose \textbf{S-APES} framework, which improves \textbf{\underline{APES}} with post-\textbf{\underline{S}}parsification technique.

Since gradients are usually high-dimensional, limiting the number of dimensions helps to save the privacy cost \cite{ye2020local, duan2022utility}. Selecting part of dimensions with large magnitude keeps majority of information \cite{aji2017sparse} and reduces privacy loss, but needs extra protection since the selection itself is data-dependent process. 
To select informative dimensions without breaching privacy, we propose post-parsification technique.

\subsubsection{Post Sparsification}
Algorithm \ref{algo_improved} demonstrates the local process of S-APES with post-sparsification.
Concretely, each user $i$ is asked to select the largest $b$ absolute values over $d$ dimensions of $\tilde{g}_i$. To keep the selected dimension index private, the selection is executed after local perturbation. For avoiding the shuffling effect degradation caused by members reduction, each user pads the rest of $(d-b)$ dimensions with perturbed $0$. 
Denote sparsification process by $K$, the whole process of S-APES is defined as $P_s=S \circ K \circ M$.

\begin{algorithm}[tb]
	\caption{$Randomize(\cdot)$ for S-APES}
	\label{algo_improved}
	\begin{algorithmic}
		\Require $\{(g_i, \epsilon^l_i) \}_{i \in [n]}$, $C$.
		\Ensure perturbed gradient $\tilde{g}_i$
		\State $\tilde{g}_{i} \leftarrow {\rm CLap}(0, (d\Delta f)/\epsilon_i^l, C)$ \Comment{Clip-Laplace perturbing}
		\State $I_b \leftarrow  \{k | k \in max({|\tilde{g}_{i,k}|}_{k \in [d]})\}^{b}$ \Comment{Post-top-b index set}
		\For{each index $k \notin I_b$}
			\State $\tilde{g}_{i,k} \leftarrow {\rm CLap}(0, (d\Delta f)/\epsilon_i^l, C)$ \Comment{Dummy padding}
		\EndFor
	\Return $\tilde{g_i}$
	\end{algorithmic}
\end{algorithm}

\section{Privacy Analysis}
\label{sec:privacy_analysis}
In this section, we first derive a na\"ive privacy bound based on existing works, then show the local and central privacy bound of our frameworks. The sketch of privacy amplification effect analysis is provided at last.
\subsection{Baseline Results}
To analyze the privacy amplification effect of shuffling under personalized LDP,
the most na\"ive way is applying existing shuffling bounds \cite{erlingsson2019amplification, balle2019privacy, girgis2021renyi, feldman2022hiding} on heterogeneous local privacy budgets, i.e., $\epsilon^l_i$, with classic Laplace Mechanism. \textcolor{black}{However, different $\epsilon^l_i$ lead to different scales of the Laplacian distributions and their divergence may be infinite. As a result, the central privacy may be unbounded. Hence based on the previous work \cite{feldman2022hiding} we can only approximate the true bound by using the same maximum $\epsilon^l_i$ for all users: }

	\begin{footnotesize}
	\begin{equation}
	\label{eq_FMT_max}
	\epsilon^c \leq \ln (1+ \frac{e^{\max(\epsilon_i^l)} - 1}{e^{\max(\epsilon_i^l)} + 1}(\frac{8(e^{\max(\epsilon_i^l)}\log(4/\delta))^{1/2}}{n^{1/2}} + \frac{8e^{\max(\epsilon^l_i)}}{n}))
	\end{equation}
	\end{footnotesize}
\subsection{Main Results}
Proposed techniques Clip-Laplace Mechanism and Neighbor Divergence make analyzing privacy amplification effect possible. 
Without loss of generality, we suppose two neighboring datasets $D=\{g_1, g_2, ..., g_n\}$ and $D'=\{g'_1, g_2, ..., g_n\}$ that only differs at $g_1$ or $g'_1$ of user $1$, and provide privacy bounds of our frameworks as follows.
\begin{theorem}[Local Bound] 
\label{th_local_bound}
Given $\epsilon^l=(\epsilon^l_1, ..., \epsilon^l_n)$, the local process $M=(M_1, ..., M_n)$ of APES on $d$-dimension gradients satisfies $\epsilon^l_i$-LDP in dimension level, $d\epsilon^l_i$-LDP in user level for each user $i$.
\end{theorem}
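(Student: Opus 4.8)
The plan is to reduce both claims to Theorem~\ref{clip_lap_DP}, which already establishes that a single Clip-Laplace release is $\epsilon$-LDP, and then to handle the vector-valued output via the fact that the per-coordinate noise is drawn i.i.d.\ across the $d$ dimensions. The dimension-level bound is essentially a direct instantiation of Theorem~\ref{clip_lap_DP}, while the user-level bound follows by multiplying the per-coordinate likelihood ratios (equivalently, by basic composition over the $d$ coordinates).

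First I would settle the dimension-level claim. Fix a user $i$ and a coordinate $k$. The clipped value $\bar g_{i,k}$ lies in $[-C,C]$, so the per-coordinate sensitivity is $\Delta f = 2C$ and the hypothesis $f(x)\in[-\Delta f/2,\Delta f/2]$ of Theorem~\ref{clip_lap_DP} is satisfied. Since the coordinate is perturbed by $\mathrm{CLap}(\bar g_{i,k},\lambda_i,C)$ with $\lambda_i = 2C/\epsilon^l_i = \Delta f/\epsilon^l_i$, Theorem~\ref{clip_lap_DP} gives that this single-coordinate release is $\epsilon^l_i$-LDP, which is exactly the dimension-level statement.

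Next I would prove the user-level bound by composing over coordinates. Take two neighbouring user inputs $g,g'$ (full $d$-dimensional gradients) and any output $z=(z_1,\dots,z_d)$. Because the Clip-Laplace noise is added independently to each coordinate, the output density factorizes, so
\[
\frac{\Pr[M_i(g)=z]}{\Pr[M_i(g')=z]}
= \prod_{k=1}^{d}\frac{p(z_k\mid g_k)}{p(z_k\mid g'_k)}.
\]
By the dimension-level bound just proved, each factor is at most $e^{\epsilon^l_i}$, so the product is at most $e^{d\epsilon^l_i}$, establishing $d\epsilon^l_i$-LDP (with $\delta=0$). This is precisely basic composition of $d$ independent $\epsilon^l_i$-DP releases.

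The step that needs the most care is justifying why the budgets add rather than take a maximum. At the user level the neighbouring relation allows all $d$ coordinates of $g$ and $g'$ to differ simultaneously, so the $d$ per-coordinate mechanisms do not operate on disjoint data and Parallel Composition (Lemma~\ref{parallel_comp}) does \emph{not} apply; the worst case is all coordinates differing at once, which corresponds to an $L_1$ sensitivity of $2dC$ against the per-coordinate scale $2C/\epsilon^l_i$, yielding the factor $d\epsilon^l_i$. Beyond making this distinction explicit, no further obstacle arises, since Theorem~\ref{clip_lap_DP} already supplies the single-release guarantee and the i.i.d.\ structure of the noise does the rest.
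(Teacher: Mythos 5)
Your proposal is correct and follows essentially the same route as the paper, which simply observes that the dimension-level guarantee is Theorem~\ref{clip_lap_DP} instantiated per coordinate (with $\Delta f = 2C$, $\lambda_i = 2C/\epsilon^l_i$) and that the user-level bound is basic composition of the $d$ independent per-coordinate releases. Your explicit remark that Parallel Composition does not apply here (the coordinates are not disjoint data, so the budgets add to $d\epsilon^l_i$ rather than taking a maximum) is a correct and worthwhile clarification of a step the paper leaves implicit.
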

\emph{Discussion.} Our frameworks achieve personalized LDP for each user. This comes from Theorem \ref{clip_lap_DP}.

\begin{theorem}[Central Upper bound]
\label{th_upper_bound}
	Let $i,j\in[n]$, $\delta_s \in [0,1]$, $\sum_{i=2}^n\sum_{j=1}^n \frac{p_{ij}}{n} \geq 16\ln(4/\delta_s)$, 
$P=S \circ M$ of APES satisfies $(\epsilon^c, \delta^c)$-DP where $\delta^c \leq \frac{e^{\epsilon^l_j}-1}{e^{\epsilon^l_j}+1}\delta_s$,\\
	\begin{footnotesize}
	\begin{equation*}
		\epsilon^c \leq  \ln(1+ \frac{e^{\max(\epsilon^l_j)}-1}{e^{\max(\epsilon^l_j)}+1}(\frac{8(\ln(4/\delta_s))^{1/2}}{(\sum\limits_{i=2}^n\sum\limits_{j=1}^n \frac{p_{ij}}{n})^{1/2}} + \frac{8}{\sum\limits_{i=2}^n\sum\limits_{j=1}^n \frac{p_{ij}}{n}}))
	\end{equation*}
	\end{footnotesize}
	when $\sum_{i=2}^n\sum_{j=1}^n \frac{p_{ij}}{n} \geq 16\ln(4/\delta_s)$, $\delta_s \in [0,1]$ and $p_{ij}=\frac{\epsilon^l_i}{\epsilon^l_j} \cdot \frac{1-e^{-\epsilon^l_j}}{1-e^{-\epsilon^l_i}} \cdot e^{-\max(\epsilon^l_i, \epsilon^l_j)}$.
\end{theorem}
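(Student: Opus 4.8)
The plan is to adapt the ``hiding among the clones'' argument of \citeauthor{feldman2022hiding} to the personalized setting, where the central novelty is that different users generate echoes of the differing user with \emph{heterogeneous} probabilities $p_{ij}$ rather than the uniform probability $e^{-\epsilon^l}$. Fix neighboring datasets $D$ and $D'$ differing only at user $1$'s gradient $g_1$ versus $g'_1$. The goal is to show that the shuffled output hides user $1$'s value among a pool of clones contributed by the remaining users, and that the effective size of this pool is $C_{\mathrm{eff}}=\sum_{i=2}^n\sum_{j=1}^n p_{ij}/n$, which then plays exactly the role that $n e^{-\max(\epsilon^l_i)}$ plays in the uniform bound of Eq.~\eqref{eq_FMT_max}.

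First I would establish the per-user echo probability. Because the Clip-Laplace Mechanism (Definition~\ref{def:clip-lap}) has a \emph{fixed finite} output range $[-C,C]$, the output distributions of any two users overlap on the whole support, so their Neighbor Divergence is finite; this is precisely what rescues the analysis where plain Laplace fails and forces the baseline to collapse to $\max_i\epsilon^l_i$ (cf.\ the discussion preceding Eq.~\eqref{eq_FMT_max}). Using Lemma~\ref{degraded_priv} to reduce each local randomizer to its dominating binary response, I would compute the probability that user $i$'s perturbed output is distributed according to the common ``blanket'' centered at the value of a putative target $j$. Evaluating the overlap of two Clip-Laplace densities of scales $\lambda_i=\Delta f/\epsilon^l_i$ and $\lambda_j=\Delta f/\epsilon^l_j$ over $[-C,C]$ yields $p_{ij}=\tfrac{\epsilon^l_i}{\epsilon^l_j}\cdot\tfrac{1-e^{-\epsilon^l_j}}{1-e^{-\epsilon^l_i}}\cdot e^{-\max(\epsilon^l_i,\epsilon^l_j)}$, where the ratio factors come from the normalization constants $S$ and the exponential factor from the density minimum. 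As a sanity check, when all $\epsilon^l_i$ coincide, $p_{ij}=e^{-\epsilon^l}$ and $C_{\mathrm{eff}}=(n-1)e^{-\epsilon^l}$, recovering the uniform clone count.

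Next, because APES shuffles the privacy parameters as well as the gradients, the analyzer cannot match a given $\epsilon^l_i$ to its owner; I would therefore average each user's echo probability over the $n$ possible target assignments, giving $\tfrac{1}{n}\sum_{j=1}^n p_{ij}$ for user $i$ and $C_{\mathrm{eff}}=\sum_{i=2}^n\sum_{j=1}^n p_{ij}/n$ in total. The number of actual echoes is then a sum of independent indicators with mean $C_{\mathrm{eff}}$; a multiplicative Chernoff bound under the hypothesis $\sum_{i=2}^n\sum_{j=1}^n p_{ij}/n\ge 16\ln(4/\delta_s)$ guarantees that at least half the expected echoes occur except with probability $\delta_s$. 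Conditioning on this high-probability event, hiding user $1$ among $C_{\mathrm{eff}}$ clones reduces to a single-bin randomized response of multiplicity $C_{\mathrm{eff}}$; feeding this into the amplification inequality with $C_{\mathrm{eff}}$ in place of $n e^{-\max(\epsilon^l_i)}$ produces the stated $\epsilon^c$, while the concentration slack together with the randomized-response amplification factor $\tfrac{e^{\epsilon^l_j}-1}{e^{\epsilon^l_j}+1}$ of the differing user yields $\delta^c\le\tfrac{e^{\epsilon^l_j}-1}{e^{\epsilon^l_j}+1}\delta_s$.

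I expect the main obstacle to be the heterogeneity in the second step: in the uniform analysis every user contributes an \emph{identical} blanket distribution, so the clones are exchangeable, whereas here each user has a distinct Clip-Laplace law and the common distribution into which echoes fall must be defined so that user $i$'s output couples to it with probability exactly $p_{ij}$. Pinning down $p_{ij}$ requires carefully lower-bounding the pointwise overlap of mismatched-scale Clip-Laplace densities on the clipped interval and verifying that the worst case is governed by $\max(\epsilon^l_i,\epsilon^l_j)$; combined with the averaging over shuffled parameters and the bookkeeping needed to keep the concentration argument valid when the indicators are non-identically distributed, this is the delicate part. The remaining manipulations — the Chernoff tail and the final substitution into the amplification inequality — are then routine.
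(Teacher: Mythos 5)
Your proposal follows essentially the same route as the paper's EoN analysis: the heterogeneous echo probabilities $p_{ij}$ obtained from the worst-case density ratio of mismatched Clip-Laplace laws (the paper's Neighbor Divergence, Lemma~\ref{lemma_nd_of_clap}), the reduction via Lemma~\ref{degraded_priv} to a common binary pair $\rho^{(1)},{\rho'}^{(1)}$ with the $1/n$ averaging over shuffled parameters (Lemma~\ref{th_transformation}), and the Chernoff/Hoeffding concentration of the clone count $\sum_{i=2}^n\sum_{j=1}^n p_{ij}/n$ fed into the Feldman-et-al amplification inequality with the worst-case $\epsilon^*=\max(\epsilon^l_j)$ for user $1$. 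The approach and all key ingredients match; no substantive gap.
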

\emph{Discussion.} APES gains a strong central privacy for dimension level. Theorem \ref{th_upper_bound} indicates most users are provided with a much stricter central privacy as $\epsilon^c$ than their local privacy $\epsilon^l_i$. 
A sketch of the proof is provided in the following section.

\begin{proposition}[User Level Central Bound]
\label{user_level_priv}
With ${\delta'}^{uc}>0$ and $0<b\leq d$, the process $P_s=S \circ K \circ M$ of S-APES with $b$-dimension sparsification is $(\epsilon^{uc}, \delta^{uc})$-DP where
$\epsilon^{uc} = \epsilon^c \sqrt{4b\ln(1/\delta^{uc})} + 2b\epsilon^c(\exp{(\epsilon^c)}-1)$ and 
$\delta^{uc} = {\delta'}^{{uc}} + 2b\delta^c$.
\end{proposition}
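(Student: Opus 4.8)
The plan is to reduce the user-level guarantee of $P_s = S \circ K \circ M$ to the dimension-level guarantee already established for APES and then invoke Advanced Composition (Lemma \ref{advanced_comp}). The starting point is Theorem \ref{th_upper_bound}, which certifies that on each single coordinate the shuffled output $S \circ M$ is $(\epsilon^c, \delta^c)$-DP with respect to replacing user $1$'s value in that coordinate. What remains is to quantify how the post-sparsification map $K$ of Algorithm \ref{algo_improved} changes the number of coordinates that actually carry information about user $1$, and to verify that the dummy padding neutralizes the rest.

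First I would fix neighboring datasets $D$ and $D'$ differing only in user $1$'s record, inducing perturbed gradients whose top-$b$ index sets are $I_b$ under $D$ and $I'_b$ under $D'$, each of size $b$. On the selected coordinates user $1$ releases its real perturbed value, while every other coordinate is overwritten by a fresh dummy draw ${\rm CLap}(0,\cdot,C)$ centered at $0$. The crucial observation is that for any coordinate $k \notin I_b \cup I'_b$ the released value is an i.i.d. dummy draw centered at $0$ in \emph{both} executions, so its marginal is identical and contributes no privacy loss. Hence the laws of $P_s(D)$ and $P_s(D')$ can differ only on coordinates in $I_b \cup I'_b$, and since $|I_b| = |I'_b| = b$ we have the worst-case bound $|I_b \cup I'_b| \le 2b$.

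Next I would treat these at most $2b$ coordinates as a $2b$-fold composition. Because all coordinates of user $1$'s gradient derive from the same underlying record $X_1$, replacing $X_1$ perturbs them jointly; this is the same-dataset regime of Advanced Composition rather than the disjoint-subset regime of Parallel Composition (Lemma \ref{parallel_comp}). Applying Theorem \ref{th_upper_bound} coordinatewise gives $(\epsilon^c,\delta^c)$-DP per coordinate, and Lemma \ref{advanced_comp} with $m = 2b$ then yields
\begin{footnotesize}
\begin{equation*}
\epsilon^{uc} = \epsilon^c\sqrt{2(2b)\ln(1/{\delta'}^{uc})} + (2b)\epsilon^c(e^{\epsilon^c}-1) = \epsilon^c\sqrt{4b\ln(1/{\delta'}^{uc})} + 2b\epsilon^c(e^{\epsilon^c}-1),
\end{equation*}
\end{footnotesize}
together with the accumulated failure probability $\delta^{uc} = {\delta'}^{uc} + 2b\delta^c$, which is exactly the claimed bound.

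The step I expect to be the main obstacle is making this reduction rigorous despite the data-dependence of the selection: since $K$ chooses the top-$b$ coordinates from the \emph{already-perturbed} vector, $I_b$ is itself a random function of the data, so one cannot simply condition on a fixed index set. The careful argument is to show that the dummy padding makes the released law, after marginalizing over the random selection, coincide coordinatewise outside $I_b \cup I'_b$ regardless of how the selection falls, and that the worst case over all realizations of the two selections is still dominated by the $2b$-coordinate composition. In particular one must confirm that exposing the full padded $d$-vector (rather than the index set) leaks no information beyond the coordinatewise releases from which it is assembled, so that no privacy cost accrues for the selection mechanism itself.
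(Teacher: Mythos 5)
Your argument is correct and follows essentially the same route the paper takes: the paper gives no formal proof of this proposition, only the remark that extracting $b$ dimensions yields sensitivity $2b$ and that the bound then follows from the composition theorems, which is precisely your reduction to a $2b$-fold Advanced Composition (Lemma \ref{advanced_comp}) of the dimension-level $(\epsilon^c,\delta^c)$ guarantee. Your treatment is in fact more careful than the paper's, since you explicitly flag the data-dependence of the top-$b$ selection and the role of the dummy padding; note also that your $\ln(1/{\delta'}^{uc})$ inside the square root is the correct instantiation of Lemma \ref{advanced_comp}, whereas the proposition's statement writes $\ln(1/\delta^{uc})$, apparently a typo.
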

\emph{Discussion.} S-APES achieves the same dimension-level $\epsilon^c$ as APES. Considering dimensions of a gradient are not independent and extracting $b$ dimensions leads to $2b$ sensitivity, we derive the user-level privacy amplification effect by composition theorems. Note that $\epsilon^{uc}$ grows linearly with $b$, which implies privacy loss reduces when fewer dimensions are uploaded by post-sparsification.

\subsection{EoN: Privacy Amplification Analysis}
\label{section:EoN}
To analyze privacy of proposed frameworks, we first introduce Neighbor Divergence, then present the sketch of Echo of Neighbors (EoN) analysis for privacy amplification effect.

\subsubsection{Neighbor Divergence}
We introduce \emph{Neighbor Divergence} to characterize how well a user's output distribution closes the gap between itself and other users' distributions. Concretely, it defines the distance among output distributions of local randomizers of users with heterogeneous privacy budgets and different raw datapoints. 
\begin{Defs}[Neighbor Divergence] 
\label{def_ND}
Consider any $g_s$, $g_t \in \mathcal{D}$ and randomizers $M_i$, $M_j$ satisfying $\epsilon^l_i$, $\epsilon^l_j$-LDP separately. Let $\mu^{(s)}_i$ and $\mu^{(t)}_j$ be distributions of $M_i(g_s)$ and $M_j(g_t)$ respectively, $U^{(s)}_i \sim \mu^{(s)}_i$, $U^{(t)}_j \sim \mu^{(t)}_j$, the neighbor divergence between $\mu^{(s)}_i$ and $\mu^{(t)}_j$ is defined as:
	\begin{footnotesize}
	\[
	D_{N}(\mu^{(s)}_i||\mu^{(t)}_j) = \max\limits_{S \subseteq Supp(U^{(s)}_i)} [\ln \frac{\Pr[U^{(s)}_i \in S]}{\Pr[U^{(t)}_j \in S]}]
	\]
	\end{footnotesize}
\end{Defs}
In particular, the neighbor divergence under Clip-Laplace Mechanism is demonstrated as follows.
\begin{lemma}
\label{lemma_nd_of_clap}
	Let $f(x) \in [-C, C]$, $\lambda=\Delta f /\epsilon^l$ and $\Delta f = 2C$, the neighbor divergence between distribution $\mu^{(s)}_i$ and $\mu^{(t)}_j$ under Clip-Laplace Mechanism is $D_{N}(\mu^{(s)}_i||\mu^{(t)}_j) \leq \ln(\alpha \frac{\epsilon^l_i}{\epsilon^l_j} e^{(\frac{(\epsilon^l_i+\epsilon^l_j)}{2} + \frac{A|\epsilon^l_i-\epsilon^l_j|}{2C})})$. Specifically,  $D_{N}(\mu^{(s)}_i||\mu^{(t)}_j) \leq \ln(\frac{\epsilon^l_i}{\epsilon^l_j} \cdot \frac{1-e^{-\epsilon^l_j}}{1-e^{-\epsilon^l_i}} \cdot e^{\max(\epsilon^l_i, \epsilon^l_j)})$ when $A=C$.  $\alpha$ denotes $\frac{({1-\frac{1}{2}\exp(\frac{{\epsilon^l_j(-A+C)}}{2C})-\frac{1}{2}\exp(\frac{\epsilon^l_j(-A-C)}{2C})})}{(1-\frac{1}{2}\exp(\frac{{\epsilon^l_i(-A+C)}}{2C})-\frac{1}{2}\exp(\frac{\epsilon^l_i(-A-C)}{2C}))}$.
\end{lemma}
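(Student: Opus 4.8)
The plan is to reduce the Neighbor Divergence to a single pointwise density-ratio supremum and then to optimize that ratio jointly over the output coordinate $z$ and the two worst-case centers. First I would note that both $\mu^{(s)}_i$ and $\mu^{(t)}_j$ are Clip-Laplace distributions supported on the same interval $[-A,A]$, so $\Pr[U^{(t)}_j\in S]>0$ whenever $\Pr[U^{(s)}_i\in S]>0$ and the log-ratio is always finite. For any measurable $S\subseteq[-A,A]$ we have $\Pr[U^{(s)}_i\in S]=\int_S p^{(s)}_i\le(\sup_z p^{(s)}_i(z)/p^{(t)}_j(z))\int_S p^{(t)}_j$, with equality approached by shrinking $S$ to the maximizing point; hence $D_{N}(\mu^{(s)}_i\|\mu^{(t)}_j)=\ln\sup_{z\in[-A,A]} p^{(s)}_i(z)/p^{(t)}_j(z)$, turning Definition \ref{def_ND} into a one-variable optimization.

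Next, writing $\lambda_i=2C/\epsilon^l_i$, $\lambda_j=2C/\epsilon^l_j$, $m_s=f(g_s)$, $m_t=f(g_t)$, and substituting the density from Definition \ref{def:clip-lap}, the log-ratio splits into three pieces:
\[
\ln\frac{p^{(s)}_i(z)}{p^{(t)}_j(z)}=\ln\frac{\lambda_j}{\lambda_i}+\ln\frac{S_j}{S_i}+\left(\frac{|z-m_t|}{\lambda_j}-\frac{|z-m_s|}{\lambda_i}\right).
\]
The scale term equals exactly $\ln(\epsilon^l_i/\epsilon^l_j)$, so it remains to control the exponent $h(z):=|z-m_t|/\lambda_j-|z-m_s|/\lambda_i$ and the normalization ratio $S_j/S_i$. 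Since $A\ge C\ge|m_s|,|m_t|$, both centers lie in $[-A,A]$ and $h$ is piecewise linear in $z$ with breakpoints at $m_s,m_t$, so its maximum over $[-A,A]$ is attained at $z\in\{-A,A,m_s\}$. Evaluating at the endpoints and taking the worst case over $m_s,m_t\in[-C,C]$ (which drives them to opposite extremes $\pm C$) gives $h(\pm A)\le\tfrac{\epsilon^l_i+\epsilon^l_j}{2}+\tfrac{A|\epsilon^l_i-\epsilon^l_j|}{2C}$, while the interior value obeys $h(m_s)=|m_s-m_t|/\lambda_j\le\epsilon^l_j$, which a one-line comparison shows never exceeds the endpoint value; this produces the exponential factor $e^{(\epsilon^l_i+\epsilon^l_j)/2+A|\epsilon^l_i-\epsilon^l_j|/(2C)}$.

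The delicate step is the normalization. One cannot bound $S_j/S_i\le\alpha$ uniformly, because $S(m)=1-\tfrac12 e^{(-A+m)/\lambda}-\tfrac12 e^{(-A-m)/\lambda}$ is even in $m$ and maximal at $m=0$, so at an interior $m_t$ the ratio can exceed $\alpha=S_j(C)/S_i(C)$. Instead I would optimize the full quantity $F(m_s,m_t)=\tfrac{S_j(m_t)}{S_i(m_s)}\,e^{\max_z h}$ jointly: a derivative check at $m_t=-C$ shows $\partial_{m_t}\ln F=S_j'(m_t)/S_j(m_t)-1/\lambda_j\le0$ (the gain in $S_j$ from moving $m_t$ inward is dominated by the loss in $h(\pm A)$), and symmetrically $\partial_{m_s}\ln F\ge0$ at $m_s=C$. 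Hence the joint maximum sits at the corner $m_s=C,\,m_t=-C,\,z=A$, where $S_i(m_s)=S_i(C)$ and, by evenness, $S_j(m_t)=S_j(-C)=S_j(C)$, so the normalization ratio collapses to exactly $\alpha$. Combining the three pieces yields $D_{N}(\mu^{(s)}_i\|\mu^{(t)}_j)\le\ln\!\big(\alpha\,\tfrac{\epsilon^l_i}{\epsilon^l_j}\,e^{(\epsilon^l_i+\epsilon^l_j)/2+A|\epsilon^l_i-\epsilon^l_j|/(2C)}\big)$.

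Finally I would specialize $A=C$. The exponent becomes $\tfrac{\epsilon^l_i+\epsilon^l_j}{2}+\tfrac{|\epsilon^l_i-\epsilon^l_j|}{2}=\max(\epsilon^l_i,\epsilon^l_j)$, and inside $\alpha$ the factors simplify since $\exp(\epsilon(-A+C)/2C)=1$ and $\exp(\epsilon(-A-C)/2C)=e^{-\epsilon}$, so $S(C)=\tfrac12(1-e^{-\epsilon})$ and $\alpha=(1-e^{-\epsilon^l_j})/(1-e^{-\epsilon^l_i})$, recovering the stated special form. I expect the main obstacle to be precisely the coupled optimization in the third paragraph: proving that the normalization ratio and the exponential cannot both be large simultaneously, so that the supremum of the density ratio is genuinely governed by the corner configuration and $S_j/S_i$ may be replaced by $\alpha$. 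The endpoint/breakpoint case analysis for $h$ and the monotonicity of $S$ are routine, but verifying that the corner is the global (not merely local) maximizer of $F$ is where the care is needed.
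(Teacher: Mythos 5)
Your strategy is at bottom the same as the paper's: reduce the Neighbor Divergence to a pointwise density-ratio supremum, split it into the scale factor $\lambda_j/\lambda_i=\epsilon^l_i/\epsilon^l_j$, the normalization ratio, and the exponent, and --- the key observation, which you correctly isolate --- refuse to bound the normalization ratio and the exponential separately, since $S(m)$ peaks at $m=0$ while the exponent peaks at $m=\pm C$. Where you differ is in how that coupling is resolved. The paper first applies the triangle inequality $\epsilon^l_j|z-m_t|-\epsilon^l_i|z-m_s|\le|\epsilon^l_j m_t-\epsilon^l_i m_s|+|z|\,|\epsilon^l_i-\epsilon^l_j|$, which decouples $z$ from the centers at the cost of the $A|\epsilon^l_i-\epsilon^l_j|/(2C)$ term, and then absorbs the residual center-dependence into the single univariate function $l(u)=S(u)\exp(\epsilon u/\Delta f)$, showing $l$ is increasing on $[-C,C]$ when $C\le A$ (equivalently $S'(u)+S(u)/\lambda\ge 0$ for $u\le A$); this is exactly the mirror of your derivative inequality $S_j'/S_j\le 1/\lambda_j$, so the two arguments rest on the identical computation. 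Your route keeps the exponent exact and does a bivariate corner analysis instead, which is fine but leaves two loose ends you should close. First, a derivative check \emph{at} the corner $(C,-C)$ only certifies a local maximum; you need the inequalities $S_j'/S_j\le 1/\lambda_j$ and $S_i'/S_i\le 1/\lambda_i$ on all of $[-C,C]$ (they do hold there, since each reduces to $\exp((-A-m)/\lambda)\le 1$ for $m\ge -A$), so that $\ln F$ is monotone in each center and the corner is the global maximizer. Second, your dismissal of the interior breakpoint $z=m_s$ compares only the exponents $h$, but at $z=m_s$ the maximizing centers are different (e.g.\ $m_t=0$ inflates $S_j(m_t)$ above $S_j(C)$), so the comparison must be made for the full product $S_j(m_t)e^{h}/S_i(m_s)$, not for $h$ alone; this case does check out, but it is not the one-line observation you claim. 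The paper's triangle-inequality-first formulation sidesteps both issues, which is what it buys over your more direct optimization.
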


\subsubsection{A sketch of EoN Analysis}
We analyze the central privacy bound in Theorem \ref{th_upper_bound} with the observation of \emph{Echos of Neighbors}.
There are three main steps:
(\romannumeral1) After shuffling, output distributions of the rest users are converted into the same distribution of user $1$ which can be seen as ``echos'' by neighbor divergence.
(\romannumeral2) Then all the ``echos'' are transformed into certain distributions which disentangle from different $\epsilon^l_i$ by degraded privacy. These distributions form a mixed distribution.
(\romannumeral3) Finally, we measure the divergence between the mixed distributions on $D$ and $D'$.

\textbf{Step (\romannumeral1).} Recall that LDP mechanism $M_i: \mathcal{Y} \rightarrow \mathcal{Z}$ satisfying $\epsilon^l_i$-LDP for any $i \in [n]$. Based on neighbor divergence, for any $\mu_i^{(s)}$ and $\mu_j^{(t)}$ we have $p_{ij} \leq {\mu_i^{(s)}}/{\mu_j^{(t)}} \leq e^{-D_N(\mu_j^{(t)}||\mu^{(s)}_i)}$ by Definition \ref{def_ND}.
Specifically, for any user's distribution $\mu_i^{(i)}$ on $g_i \in D\backslash\{g_1, g'_1\}$, ``echo'' $\mu_j^{(1)}$(or ${\mu'}_j^{(1)}$) of user $1$ with $g_1$ (or $g'_1$) is generated as follows:

\begin{footnotesize}
\begin{equation}
	\mu_i^{(i)} = \frac{p_{ij}}{2}\mu_j^{(1)} + \frac{p_{ij}}{2}{\mu'}_j^{(1)} + (1-p_{ij})\gamma_i^{(i)}
\end{equation}
\end{footnotesize}
The distribution $\gamma_i^{(i)} = \mu_i^{(i)}-p_{ij}/2\cdot (\mu_j^{(1)}+{\mu'}_j^{(1)}) / (1-p_{ij})$. The idea is inspired by a prior work \cite{feldman2022hiding}. Consider the situation that both $g$ and $\epsilon^l$ are shuffled, the correspondence between $g_i$ and $\epsilon^l_i$ is broken. 
An adversary cannot decide which $\epsilon^l_j$ is used for perturbing $g_1$, hence any value in $\{\epsilon^l_i\}$ is possible. Based on it we derive a general bound, then consider the worst-case situation with the largest $\epsilon^l_j$ on $g_1$ for the upper bound at step (\romannumeral3).

\textbf{Step (\romannumeral2).} Except for user $1$, the mixed distribution of multiple $\mu_j^{(1)}$ with different $\epsilon^l_j$ from $n-1$ users is still hard to bound. Hence, with the help of degraded privacy (cf. Lemma \ref{degraded_priv})  we transform $(\mu_j^{(1)} + {\mu'}_j^{(1)})$ into $(\rho^{(1)} + {\rho'}^{(1)})$ for any $j \in [n]$, then $\epsilon^l_j$ is disentangled from $\mu_j^{(1)}$.
\begin{lemma}[Transformation] 
\label{th_transformation}
Let $\rho^{(1)}$ and ${\rho'}^{(1)}$ denote the distribution of function $\tilde{M}:\{g_1, g'_1\} \rightarrow \mathcal{Z}$,
$\mu_i^{(i)}$ be the distribution of $M_i(g_i)$, and $\gamma_i^{(i)}$ be the rest part of $\mu_i^{(i)}$ except ${\rho}^{(1)}$ and ${\rho'}^{(1)}$, then $\mu_i^{(i)}$ is mapped as follows:
	\label{th_EoN_2}
	\begin{footnotesize}
	\begin{equation}
		\label{eq_EoN_2}
		\mu_i^{(i)} = \frac{1}{n}\sum_{j=1}^n(\frac{p_{ij}}{2}\rho^{(1)} + \frac{p_{ij}}{2}{\rho'}^{(1)} + (1-p_{ij})\gamma_i^{(i)})
	\end{equation}
	\end{footnotesize}
		where $p_{ij}=\exp(-D_N(\mu^{(1)}_j||\mu^{(i)}_i))$.
\end{lemma}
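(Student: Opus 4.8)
The plan is to assemble the claimed mixture representation in two moves: reuse the per-index echo decomposition from Step (i) (Neighbor Divergence, Definition \ref{def_ND}), justify the uniform average $\frac{1}{n}\sum_{j=1}^n$ by the shuffling of the privacy parameters, and then standardize the echoes with the Degraded Privacy reduction (Lemma \ref{degraded_priv}).

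First I would fix the starting identity. For every index $j \in [n]$ and every $i$ with $g_i \in D \setminus \{g_1, g'_1\}$, the Neighbor Divergence bound gives the pointwise domination $p_{ij}\,\mu_j^{(1)} \le \mu_i^{(i)}$ and $p_{ij}\,{\mu'}_j^{(1)} \le \mu_i^{(i)}$ with $p_{ij} = \exp(-D_N(\mu_j^{(1)}\|\mu_i^{(i)}))$, which lets me write $\mu_i^{(i)} = \frac{p_{ij}}{2}\mu_j^{(1)} + \frac{p_{ij}}{2}{\mu'}_j^{(1)} + (1-p_{ij})\gamma_i^{(i)}$ with residual $\gamma_i^{(i)}$ a genuine probability distribution. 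Two preconditions must be checked here: that $0 \le p_{ij} \le 1$ so the mixture weights are valid, which follows because $D_N \ge 0$ (taking $S$ to be the whole space makes the log-ratio $0$, so the maximum is nonnegative and $p_{ij} = e^{-D_N} \le 1$); and that $\gamma_i^{(i)}$ is nonnegative and normalized, which is exactly the pointwise domination just stated divided by $1-p_{ij}$.

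Second, since the left-hand side $\mu_i^{(i)}$ does not depend on $j$, the identity survives averaging over $j \in [n]$ with uniform weights, giving $\mu_i^{(i)} = \frac{1}{n}\sum_{j=1}^n\big(\frac{p_{ij}}{2}\mu_j^{(1)} + \frac{p_{ij}}{2}{\mu'}_j^{(1)} + (1-p_{ij})\gamma_i^{(i)}\big)$. This uniform average is the analytic counterpart of shuffling $\epsilon^l$: because the permutation $\pi$ destroys the correspondence between a user and its parameter, the analyzer's view of the parameter attached to user $1$'s point is the uniform mixture over all $n$ candidates, and the worst-case $j$ is then selectable at Step (iii). Finally I would invoke Degraded Privacy: each $M_j$ restricted to the two inputs $\{g_1, g'_1\}$ is $\epsilon^l_j$-DP, hence dominated by the canonical two-point randomizer $\tilde{M}$, so the echoes $\mu_j^{(1)}, {\mu'}_j^{(1)}$ may be replaced by the $j$-independent distributions $\rho^{(1)}, {\rho'}^{(1)}$ of $\tilde{M}(g_1), \tilde{M}(g'_1)$, with all residual $j$-dependence carried by the weights $p_{ij}$. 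Substituting yields exactly the stated expression.

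The main obstacle I expect is this last replacement: I must argue that swapping the continuous, $\epsilon^l_j$-dependent echoes for the canonical pair $\rho^{(1)}, {\rho'}^{(1)}$ neither underestimates the privacy loss nor destroys the decomposition. The clean route is to read the domination of Lemma \ref{degraded_priv} as a coupling/data-processing statement — $(\mu_j^{(1)}, {\mu'}_j^{(1)})$ is a common post-processing of $(\rho^{(1)}, {\rho'}^{(1)})$ — so that the divergence between the $D$- and $D'$-mixtures computed with $\rho^{(1)}, {\rho'}^{(1)}$ upper-bounds the true one, which is all Step (iii) requires. Care is needed to confirm that the redefined residual $\gamma_i^{(i)}$ (the part of $\mu_i^{(i)}$ left after removing the $\rho$ terms) remains a valid distribution after the substitution, i.e.\ that $\frac{p_{ij}}{2}(\rho^{(1)}+{\rho'}^{(1)}) \le \mu_i^{(i)}$ survives on the two-point support, and to verify that the disentangling of $\epsilon^l_j$ is fully absorbed into $p_{ij}$ rather than leaking into $\rho^{(1)}$.
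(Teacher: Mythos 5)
Your proposal is correct and follows essentially the same route as the paper: the per-$j$ echo decomposition from Neighbor Divergence, a uniform average over $j$, and the Degraded Privacy reduction to standardize the echoes. The paper's own proof simply substitutes the explicit mixtures $\mu_j^{(1)} = \frac{e^{\epsilon^l_j}}{1+e^{\epsilon^l_j}}\rho^{(1)} + \frac{1}{1+e^{\epsilon^l_j}}{\rho'}^{(1)}$ and ${\mu'}_j^{(1)} = \frac{1}{1+e^{\epsilon^l_j}}\rho^{(1)} + \frac{e^{\epsilon^l_j}}{1+e^{\epsilon^l_j}}{\rho'}^{(1)}$, so that the equal weights $p_{ij}/2$ make the $\epsilon^l_j$-dependence cancel exactly; your data-processing/coupling reading of that same step is a valid (and arguably more careful) justification.
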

\begin{proof}
	By Lemma \ref{degraded_priv}, we have $\mu_j^{(1)} = ({e^{\epsilon^l_j}}/({1+e^{\epsilon^l_j}}))\rho^{(1)} + ({1}/({1+e^{\epsilon^l_j}})){\rho'}^{(1)}$ and ${\mu'}_j^{(1)} = ({1}/({1+e^{\epsilon^l_j}}))\rho^{(1)} + ({e^{\epsilon^l_j}}/({1+e^{\epsilon^l_j}})){\rho'}^{(1)}$. The influence of $\epsilon^l_j$ on $\mu_i^{(i)}$ is isolated:\\
\begin{footnotesize}
\begin{align*}
	&\mu_i^{(i)} = \frac{1}{n}\sum_{j=1}^n(\frac{p_{ij}}{2}\mu_j^{(1)} + \frac{p_{ij}}{2}{\mu'}_j^{(1)} + (1-p_{ij})\gamma_i^{(i)}) \\
	&= \frac{1}{n}\sum_{j=1}^n(\frac{p_{ij}}{2}\rho^{(1)} + \frac{p_{ij}}{2}{\rho'}^{(1)} + (1-p_{ij})\gamma_i^{(i)})
\end{align*}
\end{footnotesize}
\end{proof}

\textbf{Step (\romannumeral3).} Now we can bound the divergence of the transformed distributions on $D$ and $D'$.
\begin{lemma}[Generalized Central Bound]
\label{lemma:general_privacy_bound}
	Let $i,j\in[n]$, $\delta_s \in [0,1]$, $\sum_{i=2}^n\sum_{j=1}^n \frac{p_{ij}}{n} \geq 16\ln(4/\delta_s)$, $P=S \circ M$ of APES on $D$ and $D'$ is $(\epsilon^c, \delta^c)$-distinguishable where $\delta^c \leq \frac{e^{\epsilon^*}-1}{e^{\epsilon^*}+1}\delta_s$ and $p_{ij}=\frac{\epsilon_i^l}{\epsilon_j^l}\cdot \frac{1-e^{-\epsilon_j^l}}{1-e^{-\epsilon_i^l}}\cdot e^{-\max(\epsilon^l_i, \epsilon^l_j)}$,\\
	\begin{footnotesize}
	\begin{equation*}
		\epsilon^c \leq \ln(1+ \frac{e^{\epsilon^*}-1}{e^{\epsilon^*}+1}(\frac{8(\ln(4/\delta_s))^{1/2}}{(\sum\limits_{i=2}^n\sum\limits_{j=1}^n \frac{p_{ij}}{n})^{1/2}} + \frac{8}{\sum\limits_{i=2}^n\sum\limits_{j=1}^n \frac{p_{ij}}{n}}))
	\end{equation*}
	\end{footnotesize}
\end{lemma}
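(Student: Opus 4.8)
The plan is to follow the ``clones''/blanket framework of \citeauthor{feldman2022hiding}, but with the heterogeneous cloning probabilities $p_{ij}$ supplied by Neighbor Divergence and the Transformation lemma (Lemma \ref{th_transformation}). Steps (\romannumeral1) and (\romannumeral2) of the EoN sketch have already done the structural work: every non-distinguished user $i\in\{2,\dots,n\}$ has its output law written as a mixture in which, with probability $\bar p_i = \frac{1}{n}\sum_{j=1}^n p_{ij}$, the output is a \emph{clone} drawn from the $50/50$ blanket $\tfrac12\rho^{(1)}+\tfrac12{\rho'}^{(1)}$, and with the complementary probability it is drawn from $\gamma_i^{(i)}$, whose law is identical under $D$ and $D'$. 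The crucial gain from the transformation is that a clone, conditioned on being a clone, is the \emph{same} $50/50$ mixture of $\rho^{(1)}$ and ${\rho'}^{(1)}$ for every $i$ and $j$, i.e.\ it no longer depends on the heterogeneous budgets $\epsilon^l_j$; only the \emph{probability} of being a clone is heterogeneous.

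First I would reduce the distinguishing problem to a one-dimensional counting problem. Since $S$ shuffles the messages, the analyzer's view is the multiset of outputs, so the ordered identities are irrelevant. The $\gamma$-parts have identical law on $D$ and $D'$ and can be coupled to be equal, so by data processing they contribute nothing to the divergence; the distinguished user $1$ contributes one ball distributed as $\rho^{(1)}$ under $D$ and ${\rho'}^{(1)}$ under $D'$. Because $\rho^{(1)}$ and ${\rho'}^{(1)}$ are supported on the two points of the degraded mechanism $\tilde M$ (Lemma \ref{degraded_priv}), the only statistic that separates $D$ from $D'$ is the count of clone-and-user-$1$ messages landing on the value favoured by $\rho^{(1)}$. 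Conditioned on $N$ clones, this count is $\mathrm{Bin}(N,\tfrac12)$ (the clones) plus an independent Bernoulli whose parameter is $\frac{e^{\epsilon^*}}{1+e^{\epsilon^*}}$ on $D$ and $\frac{1}{1+e^{\epsilon^*}}$ on $D'$ (user $1$, with its anonymized budget worst-cased to $\epsilon^*$).

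Next I would control the number of clones $N=\sum_{i=2}^n B_i$, where the $B_i$ are independent $\mathrm{Bernoulli}(\bar p_i)$ with $\mathbb{E}[N]=\sum_{i=2}^n\sum_{j=1}^n \frac{p_{ij}}{n}=:C$. A multiplicative Chernoff bound for independent non-identical Bernoullis shows that $N$ is at least a constant fraction of $C$ except on an event of probability at most $\delta_s$, and this is exactly where the hypothesis $C\ge 16\ln(4/\delta_s)$ is used to make that tail small enough. On the good event I would invoke \citeauthor{feldman2022hiding}'s randomized-response counting lemma, which bounds the $(\epsilon,\delta)$-distinguishability between $\mathrm{Bin}(N,\tfrac12)+\mathrm{Ber}(\tfrac{e^{\epsilon^*}}{1+e^{\epsilon^*}})$ and $\mathrm{Bin}(N,\tfrac12)+\mathrm{Ber}(\tfrac{1}{1+e^{\epsilon^*}})$ by $\epsilon^c=\ln\bigl(1+\frac{e^{\epsilon^*}-1}{e^{\epsilon^*}+1}(\frac{8\sqrt{\ln(4/\delta_s)}}{\sqrt N}+\frac{8}{N})\bigr)$; substituting the high-probability lower bound $N\gtrsim C$ (the slack being absorbed into the constants $8$) yields the stated bound with $C=\sum_{i=2}^n\sum_{j=1}^n \frac{p_{ij}}{n}$. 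The failure event contributes to $\delta^c$, scaled by the total-variation factor $\frac{e^{\epsilon^*}-1}{e^{\epsilon^*}+1}$ of the distinguished ball, giving $\delta^c\le \frac{e^{\epsilon^*}-1}{e^{\epsilon^*}+1}\delta_s$.

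I expect the main obstacle to be the reduction in the second paragraph: justifying rigorously that, after the transformation, the heterogeneous clone probabilities aggregate into a single effective clone count $C$ while the clones themselves become budget-independent $50/50$ blankets, so that Feldman's homogeneous counting lemma applies verbatim. This hinges on the equal weights $\frac{p_{ij}}{2}$ on $\rho^{(1)}$ and ${\rho'}^{(1)}$ in Lemma \ref{th_transformation} and on a careful coupling argument showing that the $\gamma$-messages and the non-clone messages carry no distinguishing information. Once that reduction is clean, the concentration step and the randomized-response step are routine adaptations of the uniform-budget analysis.
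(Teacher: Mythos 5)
Your proposal follows essentially the same route as the paper's proof: decompose each non-distinguished user via the Transformation lemma into a budget-independent $50/50$ blanket of $\rho^{(1)},{\rho'}^{(1)}$ versus a $\gamma$-part that cancels between $D$ and $D'$, reduce to the likelihood ratio of a $\mathrm{Bin}(|T|,\tfrac12)$ count shifted by user $1$'s two-point randomized response with budget $\epsilon^*$, concentrate $|T|$ around $\sum_{i=2}^n\sum_{j=1}^n p_{ij}/n$ via Chernoff/Hoeffding using the hypothesis $\ge 16\ln(4/\delta_s)$, and finish with the Feldman-et-al.\ counting argument, with the failure probability absorbed into $\delta^c\le\frac{e^{\epsilon^*}-1}{e^{\epsilon^*}+1}\delta_s$. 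The only cosmetic difference is that you model the clone count as $n-1$ Bernoullis with parameters $\bar p_i=\frac1n\sum_j p_{ij}$ while the paper writes it as $n(n-1)$ Bernoullis with parameters $p_{ij}/n$; both have the same mean and concentrate identically, so the resulting bound is unchanged.
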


\begin{proof}
By Lemma \ref{th_transformation}, any output distribution $\mu_i^{(i)}$ can be mapped into $\rho^{(1)}$ or ${\rho'}^{(1)}$ with probability $p_{ij}/2n$, into $\gamma_i^{(i)}$ with $(1-p_{ij})/n$. Consider outputs of $n-1$ users, we get a set of mapping distributions including $n(n-1)$ elements.

With any $T \subseteq [n(n-1)]$, $\Gamma = [n(n-1)]\backslash T$, we define an mapping event $U = \{u_1, ... u_{n(n-1)} \}$ where 
\begin{footnotesize}
\begin{equation*}
	u_t = \left\{
\begin{array}{rcl}
	\rho^{(1)} \text{\rm or } {\rho'}^{(1)},     &  t \in T \\
	\gamma_t^{(t)}, & t \in \Gamma
\end{array}
\right.
\end{equation*}
\end{footnotesize}
The effect of $\gamma_i$ can be removed in process $P$ under the same $U_T$ since all the $u_t \in U_{\Gamma}$ are the same in $D$ and $D'$:
\begin{footnotesize}
\begin{align}
\label{eq_gamma_remov}
	\frac{\Pr[P(D)=\mathbf{z}]}{\Pr[P(D')=\mathbf{z}]}
	\leq \frac{\Pr[U_T\cup{\rho}^{(1)}]\Pr[U_\Gamma]}{\Pr[U_T\cup{\rho'}^{(1)}]\Pr[U_\Gamma]}
\end{align}
\end{footnotesize}

Then we define $T_0 \subseteq T$ and $T_1=T \backslash T_0$, $\forall u_t \in U_{T_0}$ is $\rho^{(1)}$,  $\forall u_t \in U_{T_1}$ is ${\rho'}^{(1)}$ on $D$; $T'_0 \subseteq T$ and $T'_1=T \backslash T'_0$, $\forall u_t \in U_{T'_0}$ is $\rho^{(1)}$,  $\forall u_t \in U_{T'_1}$ is ${\rho'}^{(1)}$ on $D'$. Put aside the randomness on $g_1$ and $g'_1$ for now (which means the output of user $1$ can be regarded as ${\rho}^{(1)}$ or ${\rho'}^{(1)}$), when reaching the mixed output $\mathbf{z}$ with the same number of $\rho^{(1)}$ or ${\rho'}^{(1)}$, $U_{T_0}$ on $D$ and $U_{T'_0}$ on $D'$ should be different as $|T '_0|-|T_0|=1$.
Recall that $|T| \sim \sum_{i=2}^n\sum_{j=1}^n {\rm Bern}(p_{ij}/n)$ and $|T_0| \sim {\rm Bin}(1/2, |T|)$ according to Lemma \ref{th_transformation}, we can bound Eq.\eqref{eq_gamma_remov} by deriving following equation:
\begin{footnotesize}
\begin{align}
\label{eq_general1}
	&\frac{\Pr[U_T\cup{\rho}^{(1)}]}{\Pr[U_T\cup{\rho'}^{(1)}]} 
	= \frac{\Pr[U_{T_0}\cup U_{T_1} | U_T] \cdot \Pr[U_T]}{\Pr[U_{T'_0}\cup U_{T'_1}  | U_T] \cdot \Pr[U_T]} \nonumber \\
	& = \frac{\tbinom{|T|}{|T_0|} (\frac{1}{2})^{|T_0|} (\frac{1}{2})^{|T|-|T_0|}}{\tbinom{|T|}{|T'_0|} {(\frac{1}{2})}^{|T'_0|} {(\frac{1}{2})}^{|T|-|T'_0|}}
	= \frac{|T_0|+1}{|T|-|T_0|}
\end{align}
\end{footnotesize}
With Chernoff bound and Hoeffding's inequality,
when $\sum_{i=2}^n\sum_{j=1}^n \frac{p_{ij}}{n} \geq 16\ln(4/\delta_s)$, Eq.\eqref{eq_general1} is bounded as $\frac{|T_0|+1}{|T|-|T_0|} \leq \ln(1+\frac{8(\ln(4/\delta_s))^{1/2}}{(\sum_{i=2}^n\sum_{j=1}^n \frac{p_{ij}}{n})^{1/2}} + \frac{8}{\sum_{i=2}^n\sum_{j=1}^n \frac{p_{ij}}{n}})$.

At last, we consider the randomness on $g_1$ and $g'_1$ with certain privacy budget $\epsilon^*$, the rest of the proof follows existing work \cite{feldman2022hiding} and the general bound is proved. The full proof of Lemma \ref{lemma:general_privacy_bound} is provided to Appendix A.
\end{proof}

From the analysis above, it is realized that which $\epsilon^*$ adopted by $g_1$ or $g_1'$ is crucial for the bound. For the worst case that $\epsilon^*=\max(\epsilon^l_j)$ for ${j\in[n]}$, the divergence is upper bounded as Theorem \ref{th_upper_bound}. 
The proof refers to Appendix A.

\section{Experiments}
\label{exp_sec}
We conduct comprehensive experiments on APES and S-APES with the public dataset and various privacy settings. 
\subsection{Experiment Settings}
\paragraph{Dataset and Implementation} QMNIST \cite{yadav2019cold} is an extended version of MNIST dataset \cite{lecun1998gradient}, which consists of 120,000 28-by-28-pixel images with 10 classes. We set users as $n$=10,000 and partition the dataset evenly for users. The frameworks are evaluated with a Logistic Regression model with $d$=7850. 
All the experiments are implemented on a workstation with an Intel(R) Xeon(R) E5-2640 v4 CPU at 2.40GHz and a NVIDIA Tesla P40 GPU running on Ubuntu. 
\paragraph{Baselines} We compare the proposed methods with the following schemes.
(\romannumeral1) Baseline frameworks include:
        \textbf{Non-Private}: FedProx \cite{li2020federated} without privacy protection.
         \textbf{LDP-Min}: all users adopt $\min \epsilon^l_i$ as privacy budget compulsively, which preserves privacy of all the users. 
	\textbf{PLDP}: FedProx with personalized LDP.
	\textbf{UniS}: FedProx with shuffle model under personalized LDP, the privacy bound refers to Eq. \eqref{eq_FMT_max}. All the baseline frameworks exploit classic Laplace Mechanism as local randomizer.
(\romannumeral2)  Baseline bounds of privacy amplification effect include: the numerical generic result of \textbf{BBGN'19} \cite{balle2019privacy}, the nemurical result of \textbf{FMT'22} \cite{feldman2022hiding}, the upper bound of \textbf{GDDTK'21} \cite{girgis2021renyi} and \textbf{Erlingsson'19} \cite{erlingsson2019amplification}.

\paragraph{Parameter Selection}
We stimulate the personalized privacy preference $\epsilon^l$ for several situations as Tab. \ref{table_eps_dist}. 
$\delta^s$ for shuffling is set to $10^{-8}$ and $\delta^{uc}$ after dimension composition is $3.6\times 10^{-5}$, smaller than $1/n$.



\subsection{Experiment Results}
We first show the effectiveness of the total frameworks, then confirm the privacy amplification effect, Clip-Laplace, and post-sparsification adopted in frameworks separately.
\paragraph{Effectiveness of frameworks} 
Tab. \ref{table_eps_amp} demonstrates that our frameworks achieve stronger central privacy with comparable or higher utility under the same personalized LDP. we compare the model accuracy and central privacy budgets of one epoch under Uniform2.
(\romannumeral1) APES gains stricter privacy and the highest accuracy than baseline private frameworks. Dimension-level $\epsilon^c$ and user-level $\epsilon^{uc}$ reduce by more than 21\% compared to UniS and PLDP. LDP-min gets tighter bound, yet the model performs poorly. There is no baseline framework achieves both better sides. The performance of APES benefits from privacy amplification effect of EoN and Clip-Laplace perturbation. 
(\romannumeral2) S-APES provides the same $\epsilon^c$ as APES and further enhances user-level privacy. $\epsilon^{uc}$ diminishes by 55.6\%, 66.7\%, 99.6\% compared to APES, UniS, and PLDP separately, It is noticed that local $\epsilon^{ul}$ also drops by dimension reduction. Though S-APES sacrifices accuracy of APES by 1.8\%, it is still higher than baselines. The post-sparsification in S-APES substantially boosts privacy with this tolerable utility reduction.

Fig. \ref{fig:priv_utility_b} confirms that our frameworks perform well on multiple distributions and ranges of $\epsilon^l$ locally (cf. Tab. \ref{table_eps_dist}).
(\romannumeral1) Accuracy of APES and S-APES is higher than UniS for the most settings. An exception is in Gauss1 LDP, which implies that S-APES may bot be appropriate for small $\epsilon^l$. Too much perturbation strengthens the privacy, but makes selecting informative dimensions harder.
(\romannumeral2) APES performs more stable than UniS for different $\epsilon^l$. A reasonable deduction is outputs of Clip-Laplace is not as sensitive as Laplace to varying parameters, which is verified in Fig. 7 in Appendix C.

\begin{table}
  \centering
  \begin{footnotesize}
  \begin{tabular}{lll}

    \toprule
    Name     & Distribution of $\epsilon^l=(\epsilon^l_1, ..., \epsilon^l_n)$ & Clip range    \\
    \midrule
    Uniform1 & $\mathcal{U}(0.05, 0.5)$ & $[0.05, 0.5]$ \\
    Uniform2 & $\mathcal{U}(0.05, 1)$ & $[0.05, 1]$ \\
    Gauss1 & $\mathcal{N}(0.1, 1)$ & $[0.05, 0.5]$\\
    Gauss2 & $\mathcal{N}(0.2, 1)$ & $[0.05, 1]$\\
    MixGauss1 & $\mathcal{N}(0.1, 1)$ $90\%$, $\mathcal{N}(0.5, 1)$ $10\%$  & $[0.05, 0.5]$\\
    MixGauss2 & $\mathcal{N}(0.2, 1)$ $90\%$, $\mathcal{N}(1, 1)$ $10\%$  & $[0.05, 1]$\\
    \bottomrule
  \end{tabular}
  \end{footnotesize}
  \caption{Distributions of Personalized LDP Budgets $\epsilon^l$. $\mathcal{U}$, $\mathcal{N}$ represents Uniform and Gaussian Distribution respectively. Clip range $[a,b]$ denotes any value $g$ outside the range $[a,b]$ is clipped by $\max(a, g)$ or $\min(b, g)$.}
  \label{table_eps_dist}
\end{table} 

\begin{table}
  \centering
  \begin{footnotesize}
  \begin{tabular}{lllll}
    \toprule
   Frameworks & $\qquad$ $\epsilon^{ul}$ & $\epsilon^c$ & $\epsilon^{uc}$ & Accuracy  \\
    \midrule
    Non-Private & $\quad \infty$ &  $\infty$ & $\infty$ &  \textbf{84.35}\% \\
    LDP-Min & \quad 392.5 & 0.05 & 40.1 & 56.11\% \\
    PLDP & 392.5 $\sim$ 7850 & 1  & 7850 & 77.54\%\\
    UniS   & 392.5 $\sim$ 7850 &  0.069\footnotemark[1] & 76.9 & 77.54\% \\
    APES & 392.5 $\sim$ 7850  &  0.057 & 57.6 & \textbf{79.67}\%\\
    S-APES & 78.5 $\sim$ 1570  &  0.057 & \textbf{25.6} & 78.14\%\\
    \bottomrule
  \end{tabular}
  \end{footnotesize}
  \caption{Privacy and Utility under Unifrom2 LDP. $\epsilon^{ul}$: local user level, $\epsilon^c$: central dimension level, $\epsilon^{uc}$: central user level privacy budgets.}
  \label{table_eps_amp}
\end{table} 
\footnotetext[1]{
	\textcolor{black}{Since the true central privacy under classic Laplace Mechanisms with varied $\epsilon^l_i$ is unbounded, $\epsilon^c$ of UniS in Tab. 3 is best considered as an approximation when the $\epsilon^l_i$ of different users are very similar to each other.}}

\paragraph{Privacy Amplification Effect}
In Fig. \ref{fig:priv_bounds}, we provide numerical evaluations for privacy amplification effect under fixed personalized LDP settings. 
Given dimension-level local privacy $\epsilon^l \in [0.05,1]$, we observe following results: 
(\romannumeral1) our bounds achieve the strongest central privacy with the smallest value of $\epsilon^c$ compared to baseline bounds under the same $n$. The bound gets sharper especially when $\epsilon^l$ concentrates on smaller values. E.g., for the same range that $\epsilon^l \in [0.05,1]$, most $\epsilon^l_i$ in Gauss2 are smaller than $\epsilon^l_i$ in Uniform2, which leads to lower $\epsilon^c$. This effect comes from the EoN analysis, by which privacy contribution of each local perturbation is taken into consideration. 
(\romannumeral2) The amplification effect gets stronger when more datapoints are shuffled, as more randomness is introduced for obfuscation. When $n$ grows, almost all the privacy bounds $\epsilon^c$ reduce.  
Moreover, Fig. 5 and 6 in Appendix C demonstrate EoN gives a more obvious amplification effect when the range of $\epsilon^l$ gets larger.

\paragraph{Stability of Clip-Laplace Mechanism}
Fig. \ref{fig:acc_norm} shows a relatively mild impact of clip bound $C$ on Clip-Laplace perturbation. We compare model accuracy by adopting Clip-Laplace Mechanism (CLap for short) and classic Laplace mechanism (Lap for short) in APES separately. Overall, the highest accuracy is obtained with CLap when $C=0.1$. CLap performs well especially for large $C$, while Lap is only good at small $C$. It implies CLap may be suitable for perturbing gradients with larger norms.
Fig. 7 in Appendix C explores why CLap adapts to varying parameters. The variance of CLap is more stable compared to Lap for the same level of LDP when $C$ changes. As a price of low variance resulting from the limited output range, a larger bias is introduced into perturbation (cf. Fig. 8 in Appendix C).

\paragraph{Performance of Post-Sparsification}
We evaluate the parameters and the effectiveness of post-sparsification technique (\emph{ps} for short) with Uniform2.
(\romannumeral1) The trade-off between accuracy and privacy of \emph{ps} is discussed above, while the knob is sparsification ratio $b/d$. In Fig. \ref{fig:topk}, the model with \emph{ps} achieves almost optimal accuracy as APES when $b/d=0.2$, \textcolor{black}{hence only smaller ratios are evaluated.} As $b/d$ grows, fewer dimensions are uploaded and the accuracy falls. In return, privacy cost is saved.
(\romannumeral2) Then we observe that \emph{ps} is more effective than random-sparification (\emph{rs} for short) which randomly select dimensions with $b/d$. Specifically, model accuracy based on \emph{rs} is lower than \emph{ps} for all the ratios, and dramatically drops when $b/d$ gets larger, as fewer informative dimensions are uploaded by \emph{rs}.


\begin{figure}[!t]
\centering
\subfigure[Privacy Bounds] 
	{\includegraphics[height=1.3in,width=1.7in,angle=0,trim=0 0 5 5,clip]{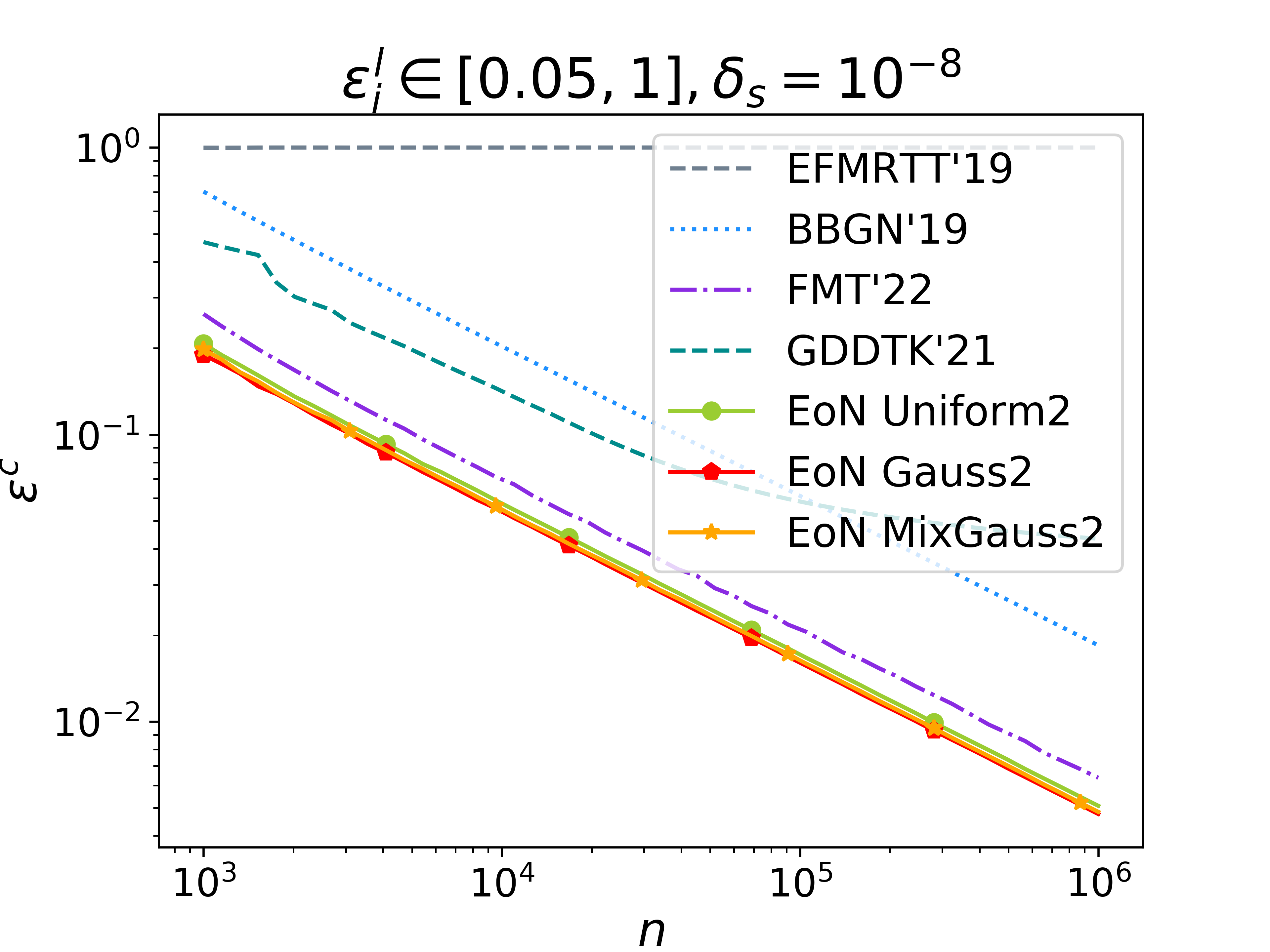}
	\label{fig:priv_bounds}}\hspace{-3mm}
\subfigure[Impact of $\epsilon^l$] {\includegraphics[height=1.3in,width=1.6in,angle=0,trim=0 0 5 5,clip]{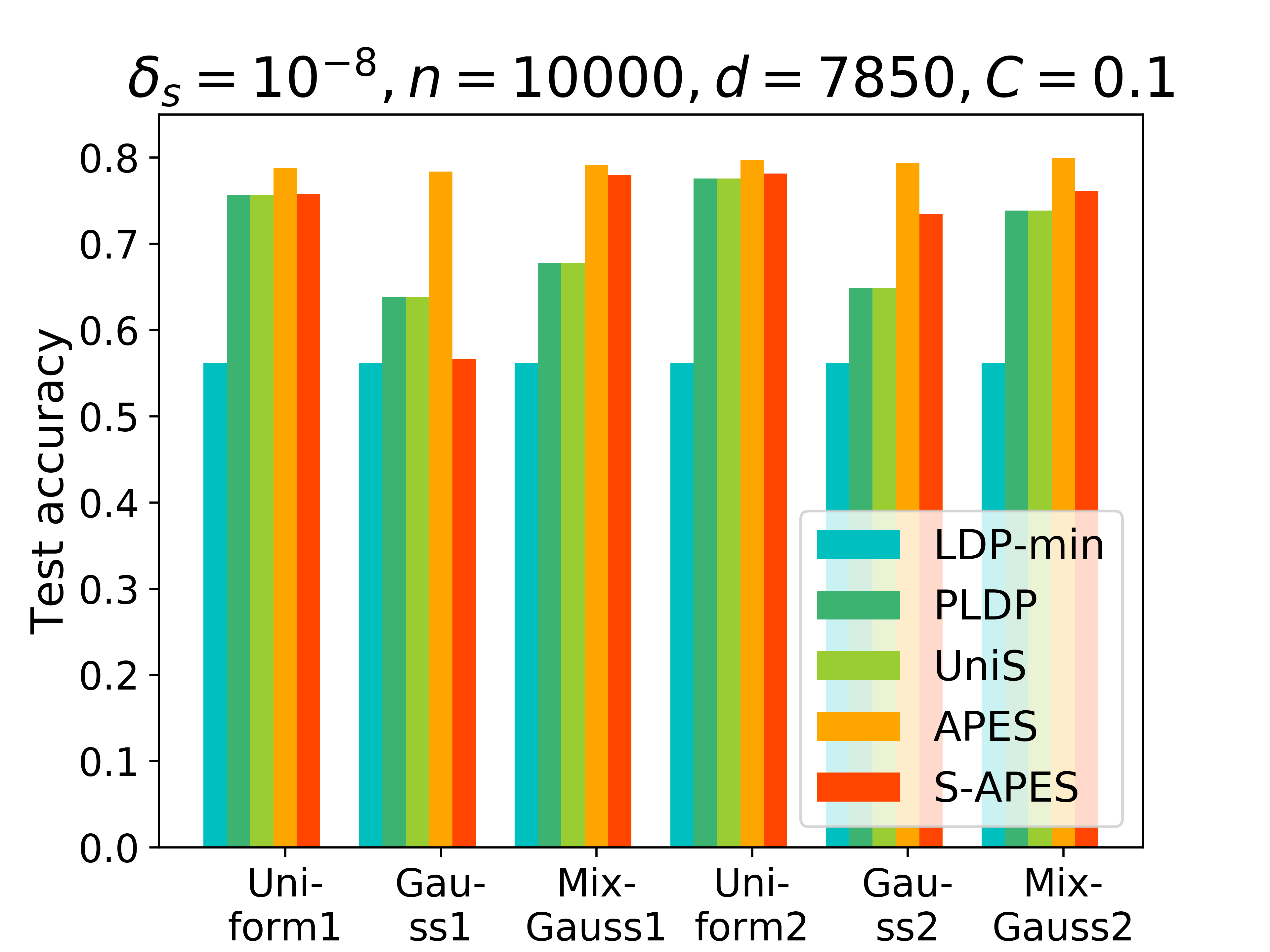}
	\label{fig:priv_utility_b}}\hspace{-3mm}
\caption{Privacy Bounds and Utility}
\end{figure}

\begin{figure}[!t]
\centering
\subfigure[Impact of $C$] 
	{\includegraphics[height=1.3in,width=1.7in,angle=0,trim=0 0 5 5,clip]{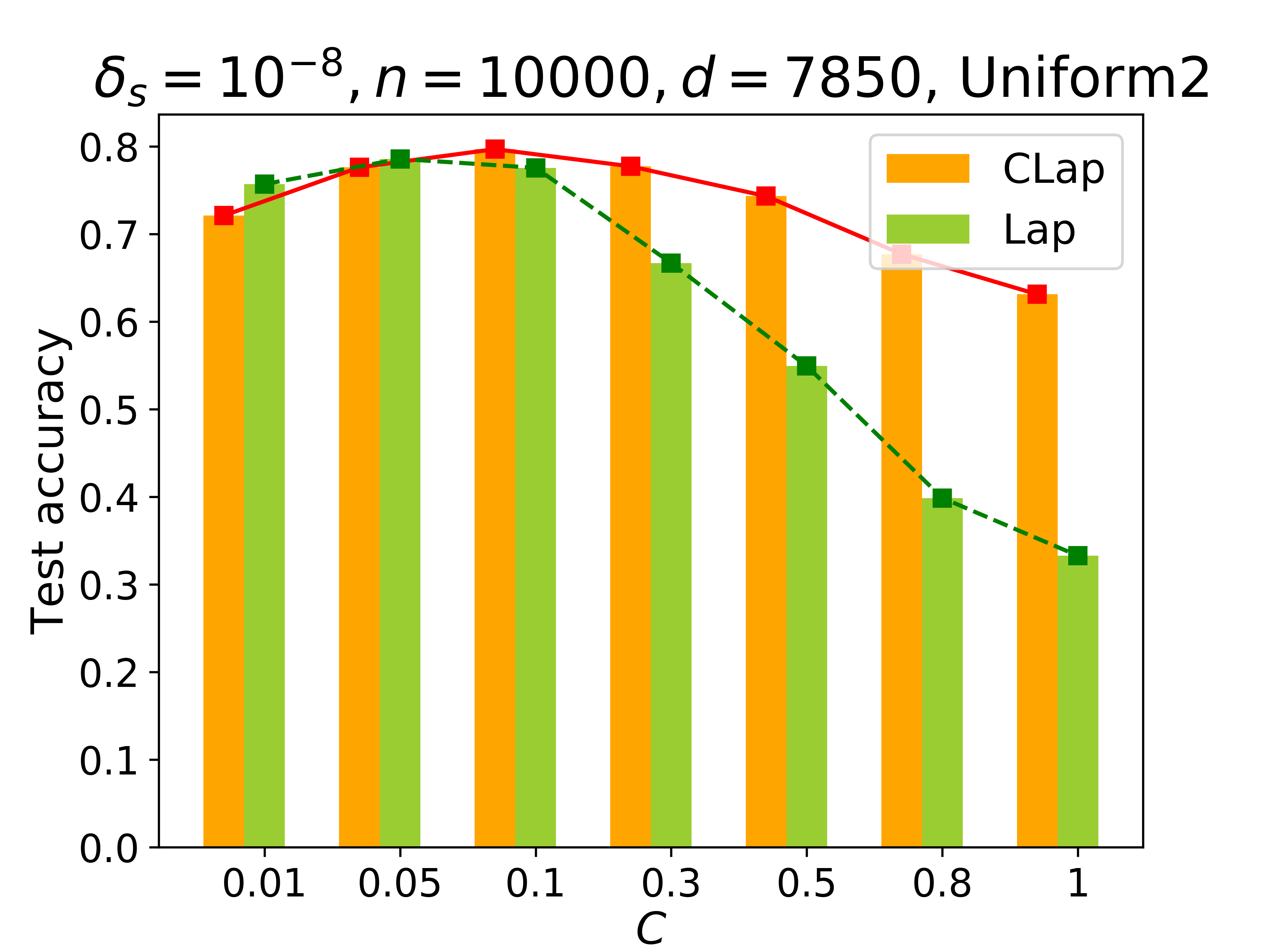}
	\label{fig:acc_norm}}\hspace{-3mm}
\subfigure[Impact of \emph{ps}] {\includegraphics[height=1.3in,width=1.6in,angle=0,trim=0 0 5 5,clip]{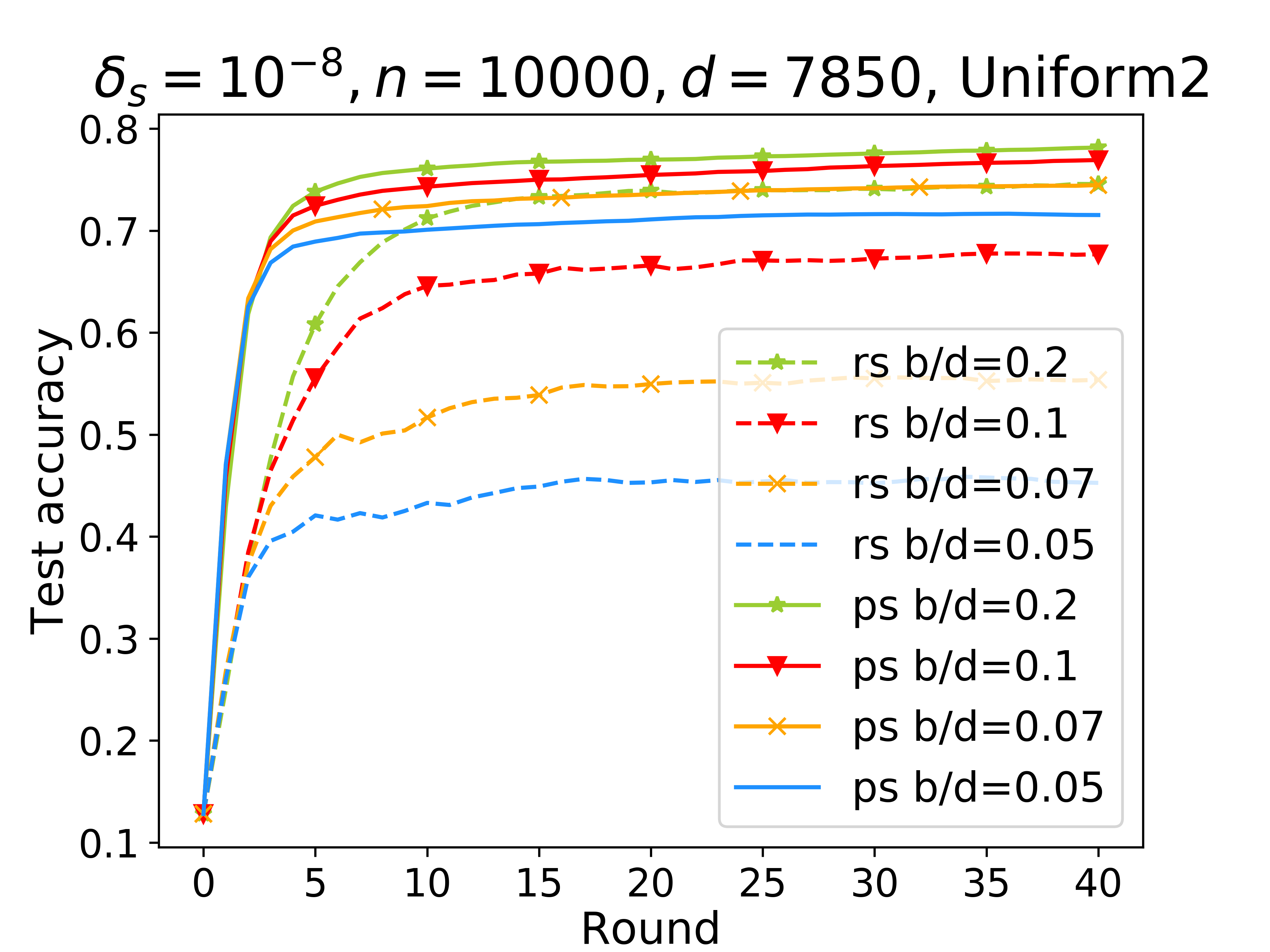}
	\label{fig:topk}}\hspace{-3mm}
\caption{Mechanisms and Strategies}
\end{figure}

\section{Conclusion}
This work focuses on personalized private federated learning. To balance privacy and utility, we propose privacy amplification frameworks with shuffle model under personalized LDP.  
Comprehensive evaluations on the public dataset confirm that our frameworks improve central privacy by reducing $\epsilon^{uc}$ up to 66.7\% compared to existing work with comparable or higher accuracy. 

\textcolor{black}{In the future, we intend to extend the work in several directions. First, we will explore the effectiveness of the work for larger models as sharing more parameters requires higher standards for both LDP performance and communication efficiency. Future improvements on sparsification techniques may alleviate the concern. 
Second, it may be possible to adapt the work to non-IID data distribution settings, hence more elaborate calibration for skewed gradients is required.}

\appendix

\section*{Acknowledgments}
We would like to thank all the anonymous reviewers for their time and efforts on our manuscript, their insightful comments and valuable suggestions help us shape the final draft. Our work is supported by National Natural Science Foundation of China (62072460, 62076245, 62172424,62276270), Beijing Natural Science Foundation (4212022), National Science Foundation (NSF) CNS-2124104, CNS-2125530, CNS-1952192, National Institute of Health (NIH) R01LM013712, R01ES033241, UL1TR002378, Cisco Research University Award \#2738379, Fundamental Research Funds for the Central Universities, and Research Funds of Renmin University of China (21XNH180).

\bibliography{aaai23}

\clearpage
\section*{Appendix}
This Appendix includes: (\romannumeral1) Section A: omitted proofs of privacy theorems. (\romannumeral2) Section B: omitted proofs of convergence analysis. (\romannumeral3) Section C: additional experiment results. 

\subsection*{A \quad Privacy Analysis}
\subsubsection*{Proof of Theorem 1}
\label{appendix:priv}
Let $x$ and $x' \in \mathcal{D}^d$ be such that $||x-x'|| \leq 1$, $f(x)$ and $f(x') \in [-\Delta f/2, \Delta f/2]$, $f(x')=f(x)+a$.
	\begin{footnotesize}
	\begin{align*}
		&\frac{\Pr[M(x)=z]}{\Pr[M(x')=z]} \nonumber \\
		&= \prod_{k=1}^d(\frac{1/S \cdot \exp(-|f(x)_{k}-z_{k}|/\lambda)}{1/S' \cdot \exp(-|f(x')_{k}-z_{k}|/\lambda)}) \nonumber \\
		&= \prod_{k=1}^d(\frac{1-\frac{1}{2}\exp(\frac{\epsilon(-A+f(x)_{k}+a_{k})}{\Delta f})-\frac{1}{2}\exp(\frac{\epsilon(-A-f(x)_{k}-a_{k})}{\Delta f})}{1-\frac{1}{2}\exp(\frac{\epsilon(-A+f(x)_{k})}{\Delta f})-\frac{1}{2}\exp(\frac{\epsilon(-A-f(x)_{k})}{\Delta f})} \nonumber  \\
		&\cdot \exp( \frac{\epsilon|f(x)_{k}-f(x')_{k}|}{\Delta f}))  \\ 
		& \leq \exp(\frac{\epsilon \cdot ||\Delta f/2 - (-\Delta f/2)||_1}{\Delta f}))   \\ 
		& \leq \exp(\epsilon)  \nonumber 
	\end{align*}
	\end{footnotesize}

\subsubsection*{Proof of Lemma 4}
Let $g_s$ and $g_t \in \mathcal{D}^d$ be such that $||g_s-g_t|| \leq 1$, $f(g_s)$ and $f(g_t) \in [-C, C]$ and $\Delta f = 2C$, we have
\begin{footnotesize}
\begin{align}
	& \frac{\Pr[M_i(g_s)=z]}{\Pr[M_j(g_t)=z]} 
	= \prod_{k=1}^d(\frac{1/(b_iS^{(0)}_i) \cdot \exp(-|f(g_s)_{k}-z_{k}|/b_i)}{1/(b_jS^{(1)}_j) \cdot \exp(-|f(g_t)_{k}-z_{k}|/b_j)}) \nonumber \\
	&\leq \prod_{k=1}^d(\frac{1-\frac{1}{2}\exp(\frac{\epsilon^l_j(-A+f(g_t)_{k})}{\Delta f})-\frac{1}{2}\exp(\frac{\epsilon^l_j(-A-f(g_t)_{k})}{\Delta f})}{1-\frac{1}{2}\exp(\frac{\epsilon^l_i(-A+f(g_s)_{k})}{\Delta f})-\frac{1}{2}\exp(\frac{\epsilon^l_i(-A-f(g_s)_{k})}{\Delta f})} \nonumber \\
	&\cdot \exp( \frac{|\epsilon^l_j f(g_t)_{k}-\epsilon^l_i f(g_s)_{k}| + |z_{k}(\epsilon^l_i - \epsilon^l_j) |}{\Delta f}) \cdot \frac{\epsilon^l_i}{\epsilon^l_j}) \label{eq_nd_clap}
\end{align}
\end{footnotesize}
When $\epsilon^l_j f(g_t)_{k} \geq \epsilon^l_i f(g_s)_{k}$, we define $l(u) = (1-\frac{1}{2}\exp(\frac{\epsilon{(-A+u)}}{\Delta f})-\frac{1}{2}\exp(\frac{\epsilon(-A-u)}{\Delta f})) \cdot \exp(\frac{\epsilon u}{\Delta f})$, the maximum and minimum of $l(u)$ is $l(C)$ and $l(-C)$ separately when $C \leq A$.
Then the right hand of Eq. (\ref{eq_nd_clap}) is bounded:
\begin{footnotesize}
\begin{align*}
	& \frac{\Pr[M_i(g_s)=z]}{\Pr[M_j(g_t)=z]} 
	 \leq \frac{1-\frac{1}{2}\exp(\frac{{\epsilon^l_j(-A+C)}}{2C})-\frac{1}{2}\exp(\frac{\epsilon^l_j(-A-C)}{2C})}{1-\frac{1}{2}\exp(\frac{{\epsilon^l_i(-A+C)}}{2C})-\frac{1}{2}\exp(\frac{\epsilon^l_i(-A-C)}{2C})} \\
	&\cdot \frac{\epsilon^l_i}{\epsilon^l_j} \cdot \exp(\frac{(\epsilon^l_i+\epsilon^l_j)}{2} + \frac{A|\epsilon^l_i-\epsilon^l_j|}{2C})
\end{align*}
\end{footnotesize}
In particular, when $A=C$, the bound is tight as follows:
\begin{footnotesize}
\begin{align}
	& \frac{\Pr[M_i(g_s)=z]}{\Pr[M_j(g_t)=z]}
	 \leq \exp(\frac{|\epsilon^l_i - \epsilon^l_j|}{2}) \cdot \frac{\exp(\epsilon^l_j/2)-\exp(-\epsilon^l_j/2)}{\exp(\epsilon^l_i/2)-\exp(-3\epsilon^l_i/2)} \nonumber \\
	& \cdot \frac{\epsilon^l_i}{\epsilon^l_j} \leq \frac{\epsilon^l_i}{\epsilon^l_j} \cdot \frac{1-e^{-\epsilon^l_j}}{1-e^{-\epsilon^l_i}} \cdot e^{\max(\epsilon^l_i, \epsilon^l_j)} \label{eq_nd_clap_righthand}
\end{align}
\end{footnotesize}
Similarly, we get the same bound as Eq. (\ref{eq_nd_clap_righthand}) when $\epsilon^l_j f(g_t)_{k} < \epsilon^l_i f(g_s)_{k}$. Thus, the proof is completed.
This Lemma allows us to transform echos in the following part.

\subsubsection*{Proof of Lemma 6}

From the proof in main body, we have the intermediate result as Eq. (5), which is the following equation:
\begin{footnotesize}
\begin{align}
	\frac{\Pr[P(D)=\mathbf{z}]}{\Pr[P(D')=\mathbf{z}]} 
	= \frac{|T_0|+1}{|T|-|T_0|}
\end{align}
\end{footnotesize}
Recall that $|T| \sim \sum_{i=2}^n\sum_{j=1}^n {\rm Bern}(p_{ij}/n)$ and $|T_0| \sim {\rm Bin}(1/2, |T|)$, 
by Chernoff bound and Hoeffding's inequality , $T$ and $T_0$ are concentrated to certain values. Specifically, when $\sum_{i=2}^n\sum_{j=1}^n \frac{p_{ij}}{n} \geq 3\ln(4/\delta_s)$, 
\begin{footnotesize}
\begin{align}
	&|T-\sum_{i=2}^n\sum_{j=1}^n \frac{p_{ij}}{n}| \leq (3\ln (4/\delta)\sum_{i=2}^n\sum_{j=1}^n \frac{p_{ij}}{n})^{\frac{1}{2}} \label{eq_chernoff1}\\
	&|T_0-T/2| \leq (T/2 \ln(4/\delta))^{\frac{1}{2}} \label{eq_chernoff2}
\end{align}
\end{footnotesize}

Based on Eq. \eqref{eq_chernoff1} and \eqref{eq_chernoff2}, if let $\sum_{i=2}^n\sum_{j=1}^n \frac{p_{ij}}{n} \geq 16\ln(4/\delta_s)$,  the following equation is established with the probability $(1-\delta_s)$.
\begin{footnotesize}
\begin{align}
&\frac{|T_0|+1}{|T|-|T_0|} 
\leq \frac{|T|/2 + (|T|/2\ln(4/\delta_s))^{\frac{1}{2}} + 1}{|T|/2 - (|T|/2\ln(4/\delta_s))^{\frac{1}{2}}} \nonumber \\
&\leq 1+\frac{8(\ln(4/\delta_s))^{\frac{1}{2}}}{(\sum\limits_{i=2}^n\sum\limits_{j=1}^n \frac{p_{ij}}{n})^{\frac{1}{2}}} + \frac{8}{\sum\limits_{i=2}^n\sum\limits_{j=1}^n \frac{p_{ij}}{n}} \label{eq_generalbound_e0} 
\end{align}
\end{footnotesize}
Hence the divergence between $P(D)$ and $P(D')$ is bounded without considering the randomness of user 1. For briefness, we set $1+\frac{8(\ln(4/\delta_s))^{\frac{1}{2}}}{(\sum\limits_{i=2}^n\sum\limits_{j=1}^n \frac{p_{ij}}{n})^{\frac{1}{2}}} + \frac{8}{\sum\limits_{i=2}^n\sum\limits_{j=1}^n \frac{p_{ij}}{n}} = e^{\epsilon_0}$.

Next, we take the randomness of the mechanism on $g_1$ and $g'_1$ with certain privacy budget $\epsilon^*$ into consideration, according to previous work \cite{kairouz2015composition,feldman2022hiding}. Formally, we assume $G$ (or $G'$) as the index of the mapping element from $\tilde{g}_1$ (or $\tilde{g}'_1$), and also apply degraded privacy (cf. Lemma \ref{degraded_priv}) on $\tilde{g}_1$ and $\tilde{g}'_1$. The mapping event on the $\tilde{g}_1$ (or $\tilde{g}'_1$) is defined as
 $U_G = \{ \rho^{(1)} \text{ for } p_1 \text{, } {\rho'}^{(1)} \text{ for } 1-p_1 \}$
 and
 $U'_G = \{ \rho^{(1)} \text{ for } p'_1 \text{, } {\rho'}^{(1)} \text{ for } 1-p'_1 \}$, where $p_1 = {e^{\epsilon^*}}/({1+e^{\epsilon^*}})$ and $p'_1={1}/({1+e^{\epsilon^*}})$.

Then we consider the mapping event $U_S$ from $(n-1)$ users of process $P$ except for $\tilde{g}_1$ (or $\tilde{g}'_1$). To reach the mixed output $\textbf{z}$ with the same number of $\rho^{(1)}$ or ${\rho'}^{(1)}$, $U_S$ needs to satisfy the following situation:
\begin{footnotesize}
\begin{align*}
& U_S = \{ U_T \text{ with } p_1 \text{, }  U'_T \text{ with } 1-p_1 \} \\
& U'_S = \{ U_T \text{ with } p'_1 \text{, } U'_T \text{ with } 1-p'_1 \text{, } \} \\
\end{align*}
\end{footnotesize}	
which can be written as:
\begin{footnotesize}
\begin{align*}
& U_S = \{ U_T\cup U'_T \text{ with } 1-p_1 \text{, }  U_T \text{ with } 2p_1-1 \} \\
& U'_S = \{ U_T\cup U'_T \text{ with } p'_1 \text{, } U'_T \text{ with } 1-2p'_1 \} \\
\end{align*}
\end{footnotesize}	
Hence we can bound the divergence between $U_S$ and $U'_S$:
\begin{footnotesize}
\begin{align}
&\frac{\Pr[P(D)=\textbf{z}]-\delta^c}{\Pr[P(D')=\textbf{z}]} = \frac{\Pr[U_G]\Pr[U_S|U_G]-\delta^c}{\Pr[U'_G]\Pr[U'_S|U'_G]} \nonumber \\
& = \frac{p_1(2(1-p_1)(\frac{1}{2})\Pr[U_T \cup U'_T] + (2p_1-1)\Pr[U_T])-\delta^c}{(1-p'_1)((2p'_1(\frac{1}{2})\Pr[U_T \cup U'_T] + (1-2p'_1)\Pr[U'_T]))} \label{eq_generalbound_Us}
\end{align}
\end{footnotesize}	
For convenience, we define the probabilities as $A_1 = (\frac{1}{2})\Pr[U_T \cup U'_T]$, $A_2 = \Pr[U_T]$, $A'_2=\Pr[U'_T]$, $2(1-p_1) = 2p'_1 = (1-p)$ and $(2p_1-1) = (1-2p'_1) = p$. Consider the bound by Eq. \eqref{eq_generalbound_e0} that $A_2 \leq e^{\epsilon_0}A'_2 + \delta_s$, and $A_2 \leq e^{\epsilon_0}A_1 + \delta_s$ according to hockey-stick divergence, we can rewrite Eq. \eqref{eq_generalbound_Us} as
\begin{footnotesize}
\begin{align*}
&\frac{\Pr[P(D)=\textbf{z}]-\delta^c}{\Pr[P(D')=\textbf{z}]}
= \frac{p_1((1-p)A_1 + pA_2)-p\delta_s}{(1-p'_1)((1-p)A_1 + pA'_2)} \\
& \leq \frac{p_1((1-p)A_1 + p(\min\{A_1, A'_2\} + e^{\epsilon_0}\min\{A_1, A'_2\}))-p\delta_s}{(1-p'_1)((1-p)A_1 + pA'_2)} \\
& \leq \frac{p_1((1-p)A_1 + p(A'_2 + e^{\epsilon_0}((1-p)A_1+ pA'_2)))-p\delta_s}{(1-p'_1)((1-p)A_1 + pA'_2)} \\
& \leq \frac{(1 + p(e^{\epsilon_0}-1))((1-p)A_1+ pA'_2))}{(1-p)A_1 + pA'_2} \\
& = 1 + p(e^{\epsilon_0}-1) \\
& = 1+ \frac{e^{\epsilon^*}-1}{e^{\epsilon^*}+1}(\frac{8(\ln(4/\delta_s))^{1/2}}{(\sum\limits_{i=2}^n\sum\limits_{j=1}^n \frac{p_{ij}}{n})^{1/2}} + \frac{8}{\sum\limits_{i=2}^n\sum\limits_{j=1}^n \frac{p_{ij}}{n}})
\end{align*}
\end{footnotesize}

As $\epsilon^c \leq \ln(\frac{\Pr[P(D)=\textbf{z}]-\delta^c}{\Pr[P(D')=\textbf{z}]})$, the general bound of process $P$ in APES is proved, and $\delta^c \leq \frac{e^{\epsilon^*}-1}{e^{\epsilon^*}+1} \delta_s$.

\subsubsection{Proof of Theorem 4} For the worst case, user 1 adpots $\max(\epsilon^l_j)$ as her privacy budget $\epsilon^*$, which leads to an upper bound $\epsilon^c$ derived from Lemma \ref{lemma:general_privacy_bound}. Therefore, $P(D)$ and $P(D')$ are $(\epsilon^c, \delta^c)$-DP:
	\begin{footnotesize}
	\begin{equation*}
		\epsilon^c \leq  \ln(1+ \frac{e^{\max(\epsilon^l_j)}-1}{e^{\max(\epsilon^l_j)}+1}(\frac{8(\ln(4/\delta_s))^{1/2}}{(\sum\limits_{i=2}^n\sum\limits_{j=1}^n \frac{p_{ij}}{n})^{1/2}} + \frac{8}{\sum\limits_{i=2}^n\sum\limits_{j=1}^n \frac{p_{ij}}{n}}))
	\end{equation*}
	\end{footnotesize}
	when $\sum_{i=2}^n\sum_{j=1}^n \frac{p_{ij}}{n} \geq 16\ln(4/\delta_s)$, $\delta_s \in [0,1]$ and $\delta^c \leq \frac{e^{\max(\epsilon^l_j)}-1}{e^{\max(\epsilon^l_j)}+1}\delta_s$.
Given Lemma \ref{lemma_nd_of_clap}, $p_{ij}$ equals
	$(\frac{\epsilon^l_i}{\epsilon^l_j} \cdot \frac{1-e^{-\epsilon^l_j}}{1-e^{-\epsilon^l_i}} \cdot e^{-\max(\epsilon^l_i, \epsilon^l_j)})$.

\subsection*{B \quad Convergence Analysis}
To give the final convergence upper bound of the frameworks after $T$ aggregations, we make the following analysis: (\romannumeral1) Show the assumptions on loss function. (\romannumeral2) Derive a general form for APES with general perturbation. (\romannumeral3) Consider the perturbation of Clip-Lalpalce Mechanism in bound. (\romannumeral4) Bring the effect of calibration into the bound.

\subsubsection*{Step (\romannumeral1) Assumptions}
To analyze the convergence of APES, we first make following assumptions of loss function \cite{li2020federated, wei2020federated}.
\begin{enumerate}
    \item The function $\{F_i\}$ of user $i$ for $i \in [n]$ is non-convex, $L$-Lipschitz smooth, and there exists $L > 0$, such that $\nabla^2F_i \succeq L \mathcal{I}$ with $\bar{\mu} := L + \mu > 0$.
    \item The functions $\{F_i\}$ satisfy $\beta$-Polyak-Lojasiewicz (PL) condition, which implies $f(\tilde{w})-f(w^*) \le \frac{1}{2\beta}\Vert \nabla f(\tilde{w}) \Vert^2$, and $w^*$ is the optimal parameters for loss function.
    \item $f(\cdot)$ satisfies $l$-Lipschitz continuous condition.
    \item $w^*$ is the optimal solution for minimum objective function: $w^* = \min_w h(w;w_t)$. We assume $h(w;w_t)$ is $\alpha$-close to minimum function: $\Vert \nabla h(w^*;w_t) \Vert \le \alpha \Vert \nabla h(w;w_t) \Vert$ where $h(w;w_t) = F_i(w)+\frac{\mu}{2}\Vert w-w_t\Vert^2$.
\end{enumerate}
We also measure the dissimilarity between users in Definition (\ref{lemma_dissimi}), which is also assumed in previous works that we refers above.

\begin{Defs}[User Dissimilarity]
\label{lemma_dissimi}
The local loss functions $F_i$ of users are B-locally dissimilar at $w$ if $\mathbb{E}_i[{\Vert \nabla F_i(w)\Vert ^2}] \le \Vert \nabla f(w) \Vert^2B^2$. Here $B(w)=\sqrt{\frac{\mathbb{E}_i[{\Vert \nabla F_i(w)\Vert^2}]}{\Vert \nabla f(w) \Vert^2}}$ for $\Vert \nabla f(w) \Vert \ne 0$.
\end{Defs}

\subsubsection*{Step (\romannumeral2) Proof of General Form} We give the general for of convergence upper bound by yielding the bound in one aggregation and inducting to T aggregations.

\begin{lemma}[Bound of Single Aggregation]
\label{lemma_conv_single_iter}
For a global function $f(w)=\mathbb{E}[F_i(w)]$ in analyzer, the expectation of difference of $f(w)$ for a single aggregation between $t$-th and $(t+1)$-th round is bounded as follows:
\begin{footnotesize}
\begin{align*}
&\mathbb{E}[f(\tilde{w}^{(t+1)})-f(\tilde{w}^{(t)})] \\
&\le 
(\frac{\alpha B-1}{\mu} + \frac{LB(\alpha+1)}{\mu\bar{\mu}} 
+ \frac{LB^2(1+\alpha)^2}{2\bar{\mu}^2})\mathbb{E}[\Vert \nabla f(\tilde{w}^{(t)}) \Vert^2] \\
&+ (\frac{1}{\mu} + \frac{BL(1+\alpha)}{\bar{\mu}}) \mathbb{E}[\Vert \nabla f(\tilde{w}^{(t)}) \Vert \Vert \eta^{(t+1)} \Vert] 
+ \frac{L}{2} \mathbb{E}[\Vert \eta^{(t+1)} \Vert^2]
\end{align*}
\end{footnotesize}
where $\eta$ denotes the perturbation term introduced by DP mechanisms, and $\eta^{(t)}=\frac{1}{n}\sum_i(\tilde{w_i}^{(t)}-w_i^{(t)})$.
\end{lemma}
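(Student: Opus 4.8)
The plan is to follow the inexact-proximal descent analysis of FedProx \cite{li2020federated}, modified to carry the additive Clip-Laplace perturbation $\eta^{(t+1)}$ as a separate term. First I would invoke the $L$-Lipschitz smoothness of the global objective $f$ to write the one-step descent inequality
\[
f(\tilde{w}^{(t+1)}) - f(\tilde{w}^{(t)}) \le \langle \nabla f(\tilde{w}^{(t)}),\, \tilde{w}^{(t+1)} - \tilde{w}^{(t)}\rangle + \frac{L}{2}\|\tilde{w}^{(t+1)} - \tilde{w}^{(t)}\|^2 .
\]
Next I would decompose the aggregated step into a noiseless part and the perturbation, $\tilde{w}^{(t+1)} - \tilde{w}^{(t)} = (\hat{w}^{(t+1)} - \tilde{w}^{(t)}) + \eta^{(t+1)}$, where $\hat{w}^{(t+1)}$ is the clean aggregate of the local proximal updates and $\eta^{(t+1)} = \frac{1}{n}\sum_i(\tilde{w}_i^{(t+1)} - w_i^{(t+1)})$ collects the Clip-Laplace noise. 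This separation is what lets the final bound split into a $\|\nabla f\|^2$ term, a cross term, and a pure-noise term.

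The core step is to control the clean direction $\hat{w}^{(t+1)} - \tilde{w}^{(t)}$ in terms of $\nabla f(\tilde{w}^{(t)})$. Each user approximately minimizes the proximal subproblem $h_i(w;\tilde{w}^{(t)}) = F_i(w) + \frac{\mu}{2}\|w - \tilde{w}^{(t)}\|^2$, which is $\bar{\mu}$-strongly convex by the smoothness assumption. Using the $\alpha$-inexactness assumption together with first-order optimality of the subproblem, the local step obeys a bound of the form $\|\hat{w}_i^{(t+1)} - \tilde{w}^{(t)}\| \le \frac{1+\alpha}{\bar{\mu}}\|\nabla F_i(\tilde{w}^{(t)})\|$; averaging over users and applying $B$-local dissimilarity (Definition \ref{lemma_dissimi}) converts $\mathbb{E}_i[\|\nabla F_i\|^2]$ into $B^2\|\nabla f\|^2$, so that $\|\hat{w}^{(t+1)} - \tilde{w}^{(t)}\| \le \frac{B(1+\alpha)}{\bar{\mu}}\|\nabla f(\tilde{w}^{(t)})\|$. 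In parallel, pairing the proximal descent with the regularization coefficient $\mu$ yields the inner-product estimate $\langle \nabla f(\tilde{w}^{(t)}),\, \hat{w}^{(t+1)} - \tilde{w}^{(t)}\rangle \le \bigl(\frac{\alpha B - 1}{\mu} + \frac{LB(\alpha+1)}{\mu\bar{\mu}}\bigr)\|\nabla f(\tilde{w}^{(t)})\|^2$, where the $-\frac{1}{\mu}$ piece is the genuine descent contribution.

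Finally I would reassemble the two terms of the descent inequality. The inner product splits as the clean estimate above plus $\langle \nabla f(\tilde{w}^{(t)}),\, \eta^{(t+1)}\rangle$, the latter bounded by $\|\nabla f(\tilde{w}^{(t)})\|\,\|\eta^{(t+1)}\|$ via Cauchy--Schwarz. Expanding the quadratic term gives
\[
\frac{L}{2}\|\tilde{w}^{(t+1)} - \tilde{w}^{(t)}\|^2 \le \frac{L}{2}\|\hat{w}^{(t+1)} - \tilde{w}^{(t)}\|^2 + L\|\hat{w}^{(t+1)} - \tilde{w}^{(t)}\|\,\|\eta^{(t+1)}\| + \frac{L}{2}\|\eta^{(t+1)}\|^2 ,
\]
into which I substitute the norm bound $\frac{B(1+\alpha)}{\bar{\mu}}\|\nabla f\|$; the first sub-term then contributes $\frac{LB^2(1+\alpha)^2}{2\bar{\mu}^2}\|\nabla f\|^2$, the middle sub-term contributes the coefficient $\frac{BL(1+\alpha)}{\bar{\mu}}$ to the cross term, and the last is exactly $\frac{L}{2}\|\eta^{(t+1)}\|^2$. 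Collecting the coefficients of $\|\nabla f(\tilde{w}^{(t)})\|^2$, $\|\nabla f(\tilde{w}^{(t)})\|\,\|\eta^{(t+1)}\|$, and $\|\eta^{(t+1)}\|^2$ and taking expectation reproduces the claimed bound. I expect the main obstacle to be the clean-direction estimate in the third paragraph: chaining $\alpha$-inexactness, strong convexity $\bar{\mu}$, the regularizer $\mu$, and $B$-dissimilarity without loosening constants, since each of the three final coefficients depends delicately on how tightly these are combined.
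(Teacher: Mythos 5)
Your proposal follows essentially the same route as the paper's proof: the $L$-smoothness descent inequality, splitting the aggregated step into the clean proximal update plus the noise term $\eta^{(t+1)}$, bounding the clean displacement by $\frac{B(1+\alpha)}{\bar{\mu}}\Vert\nabla f\Vert$ via strong convexity of $h_i$, $\alpha$-inexactness, and $B$-dissimilarity, and extracting the $-\frac{1}{\mu}$ descent contribution from the first-order relation $\nabla h_i = \nabla F_i + \mu(w-\tilde{w}^{(t)})$ with an add-and-subtract of $\nabla F_i(\tilde{w}^{(t)})$. The coefficient bookkeeping you flag as delicate is handled identically in the paper, so no genuinely different ideas are involved.
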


\begin{proof}
According to the assumption 1, $F_i$ is $L$-Lipschitz smooth,
\begin{footnotesize}
\begin{align}
	\label{eq7}
    &f(\tilde{w}^{(t+1)}) 
    \le f(\tilde{w}^{(t)}) + <\nabla f(\tilde{w}^{(t)}), \tilde{w}^{(t+1)}-\tilde{w}^{(t)}> \nonumber \\
    &+ \frac{L}{2}\Vert \tilde{w}^{(t+1)}- \tilde{w}^{(t)}\Vert^2
\end{align}
\end{footnotesize}
where $\tilde{w}^{(t+1)}= \mathbb{E}[F_i(\tilde{w}^{(t+1)})]$,$\tilde{w}^{(t)}=\frac{1}{\eta}\sum^n_{i=1}{w_i^{(t+1)}} + \eta^{(t+1)}$ and $\eta^{(t)} = \sum^n_{i=1}{\eta_i^{(t)}}$.
To bound $f(\tilde{w}^{(t+1)})-f(\tilde{w}^{(t)})$, we have to bound $\Vert \tilde{w}^{(t+1)}- \tilde{w}^{(t)}\Vert$ and $(\tilde{w}^{(t+1)}- \tilde{w}^{(t)})$ separately. 

First, we focus on $\Vert \tilde{w}^{(t+1)}- \tilde{w}^{(t)}\Vert$.

From assumption 3 with $\alpha$-clossness we have 
\begin{equation}
    \label{eq11}
    \Vert \nabla h_k(w^{(t+1)}; \tilde{w}^{(t)})\Vert \le \alpha \Vert \nabla F_k(\tilde{w}^{(t)})\Vert
\end{equation}
Since $\nabla^2 h_k(w;\tilde{w}^{(t)})=\bar{\mu}>0$, $h_k$ is $\bar{\mu}$-strong convex. Let $w^*_k=\mathop{\arg\min}_w h_k(w;\tilde{w}^{(t)})$, we have 
\begin{footnotesize}
\begin{align}
    \label{eq8}
    &\bar{\mu} \Vert {w^*_i}^{(t+1)}-w^{(t+1)}_i \Vert \nonumber \\
    &\le \Vert \nabla h_i( {w^*}^{(t+1)};\tilde{w}^{(t)}) - \nabla h_i( w^{(t+1)};\tilde{w}^{(t)})\Vert \nonumber \\
    &= \Vert 0-\nabla h_i( w^{(t+1)};\tilde{w}^{(t)})\Vert = \alpha \Vert \nabla F_i(\tilde{w}^{(t)})\Vert
\end{align}
\end{footnotesize}
From the strong convexity of $h_k$, we also know that 
\begin{footnotesize}
\begin{align}
    \label{eq9}
    &\bar{\mu} \Vert {w^*}^{(t+1)}_i-\tilde{w}^{(t)}_i \Vert \nonumber \\
    &\le \Vert \nabla h_i( w^{(t+1)^*};\tilde{w}^{(t)}) - \nabla h_i( \tilde{w}^{(t)};\tilde{w}^{(t)})\Vert \nonumber \\
    &= \Vert 0-\nabla h_i( \tilde{w}^{(t)};\tilde{w}^{(t)})\Vert \nonumber \\
    &= \Vert \nabla F_i(\tilde{w}^{(t)}) + \mu(\tilde{w}^{(t)}-\tilde{w}^{(t)})\Vert \nonumber \\
    &= \Vert \nabla F_i(\tilde{w}^{(t)}) \Vert
\end{align}
\end{footnotesize}
Based on triangle inequality, B-user dissimilarity and Eq. \eqref{eq7}\eqref{eq8}\eqref{eq9}, $\Vert \tilde{w}^{(t+1)}- \tilde{w}^{(t)}\Vert$ is bounded:
\begin{footnotesize}
\begin{align}
    \label{eq12}
    &\Vert \tilde{w}^{(t+1)}- \tilde{w}^{(t)}\Vert  \nonumber \\
    &\le \Vert \tilde{w}^{(t+1)} - \tilde{w^*}^{(t+1)}\Vert + \Vert \tilde{w^*}^{(t+1)} - w^{(t)}\Vert + \Vert \eta^{(t+1)} \Vert \nonumber \\
    &\le \mathbb{E}_i[\Vert w^{(t+1)}_i-{w^*}^{(t+1)}_i \Vert + \Vert {w^*}^{(t+1)}_i-w^{(t)}_i \Vert ]  + \Vert \eta^{(t+1)} \Vert\nonumber \\
    &\le \frac{1+\alpha}{\bar{\mu}} \mathbb{E}_i[\Vert \nabla F_i(\tilde{w}^{(t)}) \Vert]  + \Vert \eta^{(t+1)} \Vert
\end{align}
\end{footnotesize}

Then, let us bound $(\tilde{w}^{(t+1)}- \tilde{w}^{(t)})$.
Differentiate $h_i$, we obtain
\begin{footnotesize}
\begin{align}
    &\tilde{w}^{(t+1)}- \tilde{w}^{(t)}
    = w^{(t+1)} - \tilde{w}^{(t)} + \eta^{(t+1)} \nonumber \\
    &\le \frac{1}{\mu}(\mathbb{E}[\nabla h_i(w^{(t+1)}; \tilde{w}^{(t)}) - \nabla F_i(w^{(t+1)})]) + \eta^{(t+1)} \nonumber \\
    &= \frac{1}{\mu}(\mathbb{E}[\nabla h_i(w^{(t+1)}; \tilde{w}^{(t)}) - \nabla F_i(w^{(t+1)}) + \nabla F_i(\tilde{w}^{(t)} ) \nonumber \\
    &- \nabla F_i(\tilde{w}^{(t)}))] + \eta^{(t+1)} \label{eq_w-w}
\end{align}
\end{footnotesize}
By triangle inequality and Eq. \eqref{eq11}\eqref{eq12}, Lemma \ref{lemma_dissimi} and the L-Lipschitz condition on $F_i$, We can bound the part $(\nabla h_i(w^{(t+1)}; \tilde{w}^{(t)}) - \nabla F_i(w^{(t+1)}) + \nabla F_i(\tilde{w}^{(t)} ))$ of Eq.\eqref{eq_w-w}.
\begin{footnotesize}
\begin{align}
    \label{eq10}
	&\Vert (\mathbb{E}[\nabla h_i(w^{(t+1)}; \tilde{w}^{(t)}) - \nabla F_i(w^{(t+1)}) + \nabla F_i(\tilde{w}^{(t)} )] \Vert  \nonumber \\
	& \leq \mathbb{E}[\Vert \nabla h_i(w^{(t+1)}; \tilde{w}^{(t)}) \Vert + \Vert \nabla F_i(w^{(t+1)}) - \nabla F_i(\tilde{w}^{(t)} ) \Vert] \nonumber \\
	&= (\alpha B + \frac{LB(1+\alpha)}{\bar{\mu}}) \mathbb{E}[\Vert \nabla F_i(w^{(t+1)}) \vert] \nonumber \\
    &= (\alpha B + \frac{LB(1+\alpha)}{\bar{\mu}}) \Vert f(\tilde{w}^{(t)})\Vert
\end{align}
\end{footnotesize}

At last, substitute Eq. \eqref{eq12}, \eqref{eq_w-w}, and \eqref{eq10} into \eqref{eq7}, Lemma \ref{lemma_conv_single_iter} is proved.
\begin{footnotesize}
\begin{align}
    \label{eq13}
    &\mathbb{E}[f(\tilde{w}^{(t+1)})- f(\tilde{w}^{(t)})] \nonumber \\
    &\le (\frac{\alpha B-1}{\mu} + \frac{LB(\alpha+1)}{\mu\bar{\mu}} + \frac{LB^2(1+\alpha)^2}{2\bar{\mu}^2})\mathbb{E}[\Vert \nabla f(\tilde{w}^{(t)}) \Vert^2] \nonumber \\
    &+ (\frac{1}{\mu} + \frac{BL(1+\alpha)}{\bar{\mu}}) \mathbb{E}[\Vert \nabla f(\tilde{w}^{(t)}) \Vert \Vert \eta^{(t+1)} \Vert] \nonumber \\
    &+ \frac{L}{2} \mathbb{E}[\Vert \eta^{(t+1)} \Vert^2]
\end{align}	
\end{footnotesize} 
The proof is completed. \qedsymbol

\end{proof}

%
By lemma \ref{lemma_conv_single_iter}, we can derive the convergence upper bound for T aggregations.

\begin{lemma}[General Form of Convergence Upper Bound]
\label{appendix:conv}
\label{th_converge_1}
The expected decrease in the global loss function $f(w) = \frac{1}{n}\sum_iF_i(w)$ after T aggregations is bounded as follows:
\begin{footnotesize}
\begin{align}
\label{eq_conv_PerS}
    &\mathbb{E}[f(\tilde{w}^{(T)})-f(w^*)]
    \le a_1^T\mathbb{E}[f(\tilde{w}^{(0)})-f(w^*)] \nonumber \\
    &+ \frac{a_1^T-1}{a_1-1}(a_2 \mathbb{E}[\Vert \eta \Vert] + a_3 \mathbb{E}[\Vert \eta \Vert^2])
\end{align}
\end{footnotesize}
where $a_1=1+\frac{2\beta(\alpha B-1)}{\mu} + \frac{2\beta LB(\alpha+1)}{\mu\bar{\mu}} + \frac{2\beta LB^2(1+\alpha)^2}{\bar{\mu}^2}, a_2=l(\frac{1}{\mu} + \frac{BL(1+\alpha)}{\bar{\mu}}), a_3=\frac{L}{2}$.
\end{lemma}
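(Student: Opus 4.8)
The plan is to turn the single-aggregation estimate of Lemma \ref{lemma_conv_single_iter} into a linear recurrence for the optimality gap $\Delta^{(t)} := \mathbb{E}[f(\tilde{w}^{(t)}) - f(w^*)]$ and then unroll it over the $T$ rounds. First I would subtract $f(w^*)$ from both sides of Lemma \ref{lemma_conv_single_iter}, writing $\mathbb{E}[f(\tilde{w}^{(t+1)}) - f(\tilde{w}^{(t)})] = \Delta^{(t+1)} - \Delta^{(t)}$, so that the per-round decrease is expressed entirely through the three quantities $\mathbb{E}[\Vert \nabla f(\tilde{w}^{(t)})\Vert^2]$, $\mathbb{E}[\Vert \nabla f(\tilde{w}^{(t)})\Vert \, \Vert \eta^{(t+1)}\Vert]$ and $\mathbb{E}[\Vert \eta^{(t+1)}\Vert^2]$.

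The second step converts the gradient-dependent terms into $\Delta^{(t)}$. For the quadratic term I would invoke the $\beta$-PL condition (Assumption 2), which gives $\Vert \nabla f(\tilde{w}^{(t)})\Vert^2 \ge 2\beta\,(f(\tilde{w}^{(t)}) - f(w^*))$; substituting this into the coefficient of $\mathbb{E}[\Vert \nabla f\Vert^2]$ produces the factor $2\beta$ that decorates every term of $a_1$. For the cross term I would bound $\Vert \nabla f(\tilde{w}^{(t)})\Vert \le l$ using the $l$-Lipschitz continuity of $f$ (Assumption 3), which decouples the gradient from the noise and yields $a_2 = l(\tfrac{1}{\mu} + \tfrac{BL(1+\alpha)}{\bar{\mu}})$; the pure-noise term passes through unchanged with $a_3 = L/2$. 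Collecting these gives the one-step recurrence $\Delta^{(t+1)} \le a_1 \Delta^{(t)} + a_2\mathbb{E}[\Vert \eta\Vert] + a_3\mathbb{E}[\Vert \eta\Vert^2]$, after treating the per-round noise moments as uniformly bounded by the round-independent quantities $\mathbb{E}[\Vert \eta\Vert]$ and $\mathbb{E}[\Vert \eta\Vert^2]$.

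Finally I would unroll this affine recurrence by induction on $t$: iterating it yields $\Delta^{(T)} \le a_1^T \Delta^{(0)} + (a_2\mathbb{E}[\Vert \eta\Vert] + a_3\mathbb{E}[\Vert \eta\Vert^2])\sum_{j=0}^{T-1} a_1^j$, and summing the geometric series $\sum_{j=0}^{T-1} a_1^j = \frac{a_1^T - 1}{a_1 - 1}$ reproduces exactly the claimed bound in Eq. \eqref{eq_conv_PerS}.

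The step I expect to be the main obstacle is the sign bookkeeping in the PL substitution: replacing $\mathbb{E}[\Vert \nabla f\Vert^2]$ by its lower bound $2\beta\Delta^{(t)}$ preserves the inequality direction only because the coefficient $\frac{\alpha B - 1}{\mu} + \frac{LB(\alpha+1)}{\mu\bar{\mu}} + \frac{LB^2(1+\alpha)^2}{2\bar{\mu}^2}$ multiplying it is negative; this is precisely the contraction regime $\alpha B < 1$ in which $a_1$ is meaningful, and I would flag it as a standing requirement. A secondary care point is the handling of the noise: to pull $\mathbb{E}[\Vert \eta^{(t+1)}\Vert]$ and $\mathbb{E}[\Vert \eta^{(t+1)}\Vert^2]$ out of the geometric sum I must argue they are bounded uniformly over rounds, which is legitimate since the Clip-Laplace scale depends only on $C$ and $\epsilon_i^l$ and not on $t$, so a single $\eta$ moment factors cleanly out of the summation.
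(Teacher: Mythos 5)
Your proposal matches the paper's own argument essentially step for step: subtract $f(w^*)$ to form the gap recurrence, apply the $\beta$-PL condition to convert $\mathbb{E}[\Vert\nabla f(\tilde{w}^{(t)})\Vert^2]$ into $2\beta\,\mathbb{E}[f(\tilde{w}^{(t)})-f(w^*)]$, bound the cross term via $\Vert\nabla f\Vert\le l$, treat the noise moments as round-independent, and unroll the affine recurrence into the geometric sum $\frac{a_1^T-1}{a_1-1}$. Your flagged care point about the sign of the coefficient multiplying $\mathbb{E}[\Vert\nabla f\Vert^2]$ (the PL substitution only preserves the inequality direction when that coefficient is nonpositive) is a legitimate standing hypothesis that the paper's proof silently assumes rather than states, so it is a worthwhile addition rather than a deviation.
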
 

\begin{proof}
To bound $\mathbb{E}[f(\tilde{w}^{(T)})-f(w^*)]$, we first transform it as follows:
\begin{footnotesize}
\begin{align*}
    &\mathbb{E}[f(\tilde{w}^{(T)})-f(w^*)+f(w^*)-f(\tilde{w}^{(t)})] \nonumber \\
    &\le (\frac{\alpha B-1}{\mu} + \frac{LB(\alpha+1)}{\mu\bar{\mu}} + \frac{LB^2(1+\alpha)^2}{2\bar{\mu}^2})\mathbb{E}[\Vert \nabla f(\tilde{w}^{(t)}) \Vert^2] \\
    &+ (\frac{1}{\mu} + \frac{BL(1+\alpha)}{\bar{\mu}}) \mathbb{E}[\Vert \nabla f(\tilde{w}^{(t)}) \Vert \Vert \eta^{(t+1)} \Vert] 
    + \frac{L}{2} \mathbb{E}[\Vert \eta^{(t+1)} \Vert^2]
\end{align*}
\end{footnotesize}
From assumption 3 and 4, we have 
$\mathbb{E}[f(\tilde{w}^{(t)})-f(w^*)] \le \frac{1}{2\beta} \Vert \nabla f(\tilde{w}^{(t)}) \Vert^2$
and
$\Vert \nabla f(\cdot) \Vert \le l $. Substract $f(w^*)$ from Eq.\eqref{eq13} on both sides, we have

\begin{footnotesize}
\begin{align}
    \label{eq_general_1}
    &\mathbb{E}[f(\tilde{w}^{(t)})-f(w^*)]  \nonumber \\
    &\le (1+\frac{2\beta(\alpha B-1)}{\mu} + \frac{2\beta LB(\alpha+1)}{\mu\bar{\mu}} + \frac{2\beta LB^2(1+\alpha)^2}{\bar{\mu}^2}) \nonumber \\
    &\cdot \mathbb{E}[f(\tilde{w}^{(t)})-f(w^*)] 
    + l (\frac{1}{\mu} + \frac{BL(1+\alpha)}{\bar{\mu}}) \mathbb{E}[\Vert \eta^{(t+1)} \Vert] \nonumber \\
    &+ \frac{L}{2} \mathbb{E}[\Vert \eta^{(t+1)} \Vert^2]
\end{align}
\end{footnotesize}
Considering expectations of perturbation term $\eta$ is the same in each epoch,  we define $\mathbb{E}[\Vert \eta^{(t)}\Vert] = \mathbb{E}[\Vert \eta \Vert]$ for $t \in [0, T]$. By Eq. \eqref{eq_general_1} we have
\begin{footnotesize}
\begin{align}
    \label{eq2}
    &\mathbb{E}[f(\tilde{w}^{(t)})-f(w^*)] 
    \le a_1\mathbb{E}[f(\tilde{w}^{(t)})-f(w^*)] 
    + a_2 \mathbb{E}[\Vert \eta^{(t+1)} \Vert] \nonumber \\
    &+ a_3 \mathbb{E}[\Vert \eta^{(t+1)} \Vert^2]
\end{align}
\end{footnotesize}
where $a_1=1+\frac{2\beta(\alpha B-1)}{\mu} + \frac{2\beta LB(\alpha+1)}{\mu\bar{\mu}} + \frac{2\beta LB^2(1+\alpha)^2}{\bar{\mu}^2}, a_2=l(\frac{1}{\mu} + \frac{Bh(\alpha+1)}{\bar{\mu}}), a_3=\frac{L}{2}$.

The bound of T aggregations is inducted with Eq.\eqref{eq2}.
\begin{footnotesize}
\begin{align*}
    &\mathbb{E}[f(\tilde{w}^{(T)})-f(w^*)] \nonumber \\
    &\le a_1(...(a_1(a_1(\mathbb{E}[f(\tilde{w}^{(0)})-f(w^*)]))))\nonumber \\
    &+ (a_1^0+a_1^1+...+a_1^{T-1})(a_2 \mathbb{E}[\Vert \eta \Vert]+ a_3 \mathbb{E}[\Vert \eta \Vert^2])\nonumber \\
    &= a_1^T\mathbb{E}[f(\tilde{w}^{(0)})-f(w^*)] 
    + \sum_{t=0}^{T-1}a_1^t(a_2 \mathbb{E}[\Vert \eta \Vert] + a_3 \mathbb{E}[\Vert \eta \Vert^2]) \nonumber\\
    &= a_1^T\mathbb{E}[f(\tilde{w}^{(0)})-f(w^*)] + \frac{a_1^T-1}{a_1-1}(a_2 \mathbb{E}[\Vert \eta \Vert] + a_3 \mathbb{E}[\Vert \eta \Vert^2])
\end{align*}
\end{footnotesize}
Theorem \ref{th_converge_1} is proved.
\end{proof}

\subsubsection{Step (\romannumeral3) Proof after CLap perturbation} The specific perturbation bias and variance introduced by Clip-Laplace Mechanism is considered in this section. We have $\tilde{g_i} \sim CLap(g_i, \lambda_i, C)$ where $\lambda_i=\Delta f / \epsilon_i$, and $g_i = \nabla f(x_i; w)$. Thus, local perturbed term $\eta_i$ equals $\tilde{g_i}-g_i$, and aggregated perturbed term $\eta$ is $\frac{1}{n}\sum_i \eta_i$.
\begin{lemma}[Bound with Clip-Laplacian perturbation]
By personalized privacy level $\epsilon^l=(\epsilon^l_1, ..., \epsilon^l_n)$, the convergence bound of APES with Clip-laplacian noises is given as:
\begin{align*}
   & \mathbb{E}[f(\tilde{w}^{(T)})-f(w^*)]
    \le a_1^T\Delta \\
    &+ \frac{a_1^T-1}{a_1-1}(O(a_2C/\min(\epsilon^l_i)) + O(a_3 C^2/\min(\epsilon^l_i)^2))
\end{align*}
where $\mathbb{E}[f(\tilde{w}^{(0)})-f(w^*)]$ is denoted by $\Delta$.
\end{lemma}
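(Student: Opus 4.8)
The plan is to specialize the general form of the convergence upper bound (Lemma~\ref{th_converge_1}) by inserting the first and second moments of the aggregated perturbation $\eta$ induced by the Clip-Laplace mechanism. Since the leading term $a_1^T\,\mathbb{E}[f(\tilde{w}^{(0)})-f(w^*)] = a_1^T\Delta$ is already in the target form, the whole task reduces to bounding $\mathbb{E}[\Vert \eta \Vert]$ and $\mathbb{E}[\Vert \eta \Vert^2]$ and showing they scale as $O(C/\min(\epsilon^l_i))$ and $O(C^2/\min(\epsilon^l_i)^2)$ respectively, after which I substitute directly into the $a_2\mathbb{E}[\Vert\eta\Vert]+a_3\mathbb{E}[\Vert\eta\Vert^2]$ factor.

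First I would analyze a single user. Writing $\eta_i = \tilde{g}_i - g_i$ with $\tilde{g}_i \sim CLap(g_i,\lambda_i,C)$ and $\lambda_i = 2C/\epsilon^l_i$, I split each coordinate's error into a deterministic bias $b_i = \mathbb{E}[\tilde{g}_i]-g_i$ and a zero-mean fluctuation. The bias is precisely the quantity evaluated in Eq.~\eqref{cali_noise}: because the clean value $g_i$ lies in $[-C,C]$ while the density is a Laplace of scale $\lambda_i$ truncated to the same interval, truncation shifts the mean toward $0$ by an amount governed by $\lambda_i$ and the exponential factors $e_1,e_2$. An elementary estimate of that ratio, together with the crude bound $\Vert b_i\Vert \le C$, gives $\Vert b_i\Vert = O(\lambda_i) = O(C/\epsilon^l_i)$. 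For the second moment I would bound the variance of the truncated Laplace by that of the untruncated one, $2\lambda_i^2$ (the Clip-Laplace variance is in fact smaller, cf.\ the discussion after Theorem~\ref{clip_lap_DP}), so $\mathbb{E}[\Vert\eta_i\Vert^2]\le \Vert b_i\Vert^2 + \mathrm{Var}(\eta_i) = O(\lambda_i^2) = O(C^2/(\epsilon^l_i)^2)$.

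Next I would aggregate. Since $\eta=\frac{1}{n}\sum_i \eta_i$ with independent per-user noises, convexity of the norm gives $\mathbb{E}[\Vert\eta\Vert]\le \frac{1}{n}\sum_i\mathbb{E}[\Vert\eta_i\Vert]$, and $\mathbb{E}[\Vert\eta\Vert^2]$ is likewise controlled by the largest per-user moment (the dimension factor being absorbed into the $O$). Each per-user bound is monotone increasing in $\lambda_i$, i.e.\ decreasing in $\epsilon^l_i$, so the dominant contribution comes from the user with the smallest budget; replacing $\epsilon^l_i$ by $\min(\epsilon^l_i)$ throughout yields $\mathbb{E}[\Vert\eta\Vert]=O(C/\min(\epsilon^l_i))$ and $\mathbb{E}[\Vert\eta\Vert^2]=O(C^2/\min(\epsilon^l_i)^2)$. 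Plugging these into Lemma~\ref{th_converge_1} produces exactly the claimed inequality.

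I expect the main obstacle to be the bias estimate. The expectation in Eq.~\eqref{cali_noise} is a ratio of exponentials in $\lambda_i$, and one must verify that, uniformly over $g_i\in[-C,C]$ and over the full admissible range of $\epsilon^l_i$ (including small budgets where $\lambda_i$ is large and truncation is severe), the bias remains $O(\lambda_i)$ rather than degenerating. The variance bound and the aggregation step are routine once the single-user bias is pinned down.
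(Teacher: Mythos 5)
Your proposal follows essentially the same route as the paper: bound the first and second moments of the aggregated noise $\eta$ user by user, observe that the worst case is governed by the user with budget $\min(\epsilon^l_i)$ since the per-user bounds are monotone in $\lambda_i=2C/\epsilon^l_i$, and substitute the resulting $O(C/\min(\epsilon^l_i))$ and $O(C^2/\min(\epsilon^l_i)^2)$ estimates into the general convergence bound of Lemma~\ref{th_converge_1}. The only (immaterial) difference is that the paper evaluates $\mathbb{E}[\eta_i^2]$ explicitly from the truncated-Laplace density and then simplifies, whereas you use a bias--variance decomposition with the variance dominated by the untruncated Laplace variance $2\lambda_i^2$; both give the same scaling.
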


\begin{proof}
Consider local perturbation for each user $i$,
\begin{footnotesize}
\begin{align*}
    &\mathbb{E}[\eta_i^2]
    =\frac{1}{2S}((\lambda_i^2-(g_i+C+\lambda_i)^2)e^{\frac{-C-g_i}{\lambda_i}} \nonumber \\
    &+ (\lambda_i^2-(-g_i+C+\lambda_i)^2)e^{\frac{-C+g_i}{\lambda_i}}) + 2\lambda_i^2 \\
    & \leq \frac{(-4C^2-4\lambda_iC)e^{\frac{-2C}{\lambda_i}}}{1-e^{\frac{-2C}{\lambda_i}}} + 2\lambda_i^2
\end{align*}
\end{footnotesize}
We can obtain the bound of $\mathbb{E}[\Vert \eta \Vert^2]$.
\begin{footnotesize}
\begin{align}
    \label{eq_En2}
    &\mathbb{E}[\Vert \eta \Vert^2]
    = \mathbb{E}[\frac{1}{n}\sum_{i=1}^{n}\eta_i^2]
     = \frac{1}{n}\sum_{i=1}^{n}\mathbb{E}[\eta_i^2]
     \le \mathbb{E}[\eta_i^2]_{max} \nonumber \\
    & = \frac{-4C^2-8C/{\min(\epsilon^l_i})}{e^{\min(\epsilon^l_i)}-1} + \frac{8C^2}{\min(\epsilon^l_i)^2}
\end{align}
\end{footnotesize}
Then $\mathbb{E}[\Vert \eta \Vert]$ is also bounded as:
\begin{footnotesize}
\begin{align}
    \label{eq_En}
    &\mathbb{E}[\Vert \eta \Vert]) 
    = (\mathbb{E}[\Vert \eta \Vert^2]) - \mathbb{D}[\Vert \eta \Vert]))^{\frac{1}{2}}
    \le (\mathbb{E}[\Vert \eta \Vert^2]))^{\frac{1}{2}} \nonumber \\
    &= (\frac{-4C^2-8C/{\min(\epsilon^l_i})}{e^{\min(\epsilon^l_i)}-1} + \frac{8C^2}{\min(\epsilon^l_i)^2})^{\frac{1}{2}}
\end{align}
\end{footnotesize}
Substituting equation (\ref{eq_En2}) and (\ref{eq_En}) into (\ref{eq_conv_PerS}), the convergence bound is provided:
\begin{footnotesize}
\begin{align*}
    &\mathbb{E}[f(\tilde{w}^{(T)})-f(w^*)]
     \le a_1^T\mathbb{E}[f(\tilde{w}^{(0)})-f(w^*)] \\
    & + \frac{a_1^T-1}{a_1-1}(a_2(\frac{-4C^2-8C/{\min(\epsilon^l_i})}{e^{\min(\epsilon^l_i)}-1} + \frac{2C^2}{\min(\epsilon^l_i)^2})^{\frac{1}{2}}) \\ 
    &+ a_3(\frac{-4C^2-8C/{\min(\epsilon^l_i})}{e^{\min(\epsilon^l_i)}-1} + \frac{8C^2}{\min(\epsilon^l_i)^2})))
\end{align*}
\end{footnotesize}
\end{proof}
The proof is completed.

\subsubsection{Step (\romannumeral4) Proof after Calibration} Besides local perturbation, analyzer will also calibrate the noisy gradients centrally, which increases the convergence rate empirically. We provide the theoretical bound in this section.
\begin{lemma}[Bound after Calibration]
\label{lemma:bound_after_cali}
After Clip Laplacian perturbation with $\epsilon^l$ and Calibration in server, the convergence rate of APES is bounded as:
\begin{align*}
    & \mathbb{E}[f(\tilde{w}^{(T)})-f(w^*)]
    \le a_1^T\Delta \\
    & + \frac{a_1^T-1}{a_1-1}(O(a_2C/\min(\epsilon^l_i)) + O(a_3 C^2/\min(\epsilon^l_i)^2))
\end{align*}

\end{lemma}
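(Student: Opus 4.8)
The plan is to leave the general convergence recursion of Lemma~\ref{th_converge_1} entirely intact and to re-estimate only the two perturbation moments $\mathbb{E}[\Vert \eta \Vert]$ and $\mathbb{E}[\Vert \eta \Vert^2]$ for the calibrated process, then substitute them back. Calibration acts only on the already-perturbed aggregate $\tilde{g}=\frac1n\sum_i\tilde{g}_i$, so it is deterministic post-processing: it is the inverse $\Phi^{-1}$ of the bias relation $\Phi(\bar g)=\mathbb{E}[\tilde{\bar g}]$ whose coordinate-wise form is Eq.~\eqref{cali_noise}. Hence the effective perturbation entering the recursion is $\hat\eta=\hat g-\bar g=\Phi^{-1}(\tilde g)-\Phi^{-1}(\Phi(\bar g))$, and the entire task reduces to bounding the first and second moments of $\hat\eta$.

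First I would decompose the calibration argument by inserting the idealized mean, writing $\tilde g-\Phi(\bar g)=(\tilde g-\mathbb{E}[\tilde g])+(\mathbb{E}[\tilde g]-\mathbb{E}[\tilde{\bar g}])$, a zero-mean stochastic part plus a deterministic approximation-error part. Using that $\Phi^{-1}$ is locally Lipschitz on the admissible clip range $[-C,C]$ with some constant $L_\Phi$, I obtain $\Vert \hat\eta \Vert \le L_\Phi\,\Vert \tilde g-\Phi(\bar g) \Vert$. The stochastic part is controlled by the aggregate Clip-Laplace second moment already computed in Eq.~\eqref{eq_En2}, which is $O(C^2/\min(\epsilon^l_i)^2)$; the deterministic part is exactly the residual observed empirically in Fig.~4 of Appendix~C, which I would bound by $O(C/\min(\epsilon^l_i))$. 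Combining them yields $\mathbb{E}[\Vert \hat\eta \Vert^2]=O(C^2/\min(\epsilon^l_i)^2)$ and, by Jensen's inequality, $\mathbb{E}[\Vert \hat\eta \Vert]=O(C/\min(\epsilon^l_i))$, the same orders as the uncalibrated moments of Eqs.~\eqref{eq_En2}--\eqref{eq_En}. Plugging these into Lemma~\ref{th_converge_1} with the same coefficients $a_1,a_2,a_3$ reproduces the claimed bound, with $\Delta=\mathbb{E}[f(\tilde w^{(0)})-f(w^*)]$.

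I expect the delicate step to be the control of the nonlinear calibration map. Two things must be verified: (i) that $\Phi$, as the sum over $i$ of the coordinate functions in Eq.~\eqref{cali_noise}, is strictly monotone and therefore invertible on $[-C,C]$, with $\Phi^{-1}$ having bounded derivative there, so that $L_\Phi=O(1)$ and does not silently reintroduce $1/\epsilon$ factors; and (ii) that the approximation error $\Vert \mathbb{E}[\tilde g]-\mathbb{E}[\tilde{\bar g}] \Vert$ — the gap between the true heterogeneous mixture mean and the single-parameter idealization the calibrator inverts — is genuinely $O(C/\min(\epsilon^l_i))$. The paper substantiates (ii) empirically via Fig.~4; a fully rigorous treatment would expand each $\mathbb{E}[\tilde{\bar g}_i]$ through Eq.~\eqref{cali_noise} and show the per-coordinate discrepancy is dominated by the clip scale $C$ and the smallest budget. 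Since calibration can only shrink the bias while preserving the variance order, the calibrated moments never exceed the uncalibrated ones, so the final bound coincides with that of the previous lemma and the statement follows.
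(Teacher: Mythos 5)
Your overall architecture matches the paper's: keep the recursion of Lemma~\ref{th_converge_1} untouched, re-derive the first and second moments of the post-calibration perturbation, and substitute to get the same orders $O(C/\min(\epsilon^l_i))$ and $O(C^2/\min(\epsilon^l_i)^2)$. The divergence is in how calibration is modeled, and that is where your argument has a genuine gap: you route everything through the Lipschitz constant $L_\Phi$ of the inverse bias map and your step (i) hopes that $L_\Phi=O(1)$. This is false in the relevant regime. Differentiating Eq.~\eqref{cali_noise} at $\bar g=0$ and writing $u=C/\lambda_i=\epsilon^l_i/2$ gives $\partial_{\bar g}\mathbb{E}[\tilde{\bar g}_i]=\frac{1-(1+u)e^{-u}}{1-e^{-u}}\approx u/2$ as $u\to 0$: for small budgets the Clip-Laplace law is nearly uniform on $[-A,A]$ and its mean barely responds to $\bar g$. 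Hence $\Phi'$ scales like the (average) local budget and $L_\Phi$ like its reciprocal, which, pushed through $\Vert\hat\eta\Vert\le L_\Phi\Vert\tilde g-\Phi(\bar g)\Vert$, inflates your second-moment estimate to $O(C^2/\min(\epsilon^l_i)^4)$ rather than the claimed $O(C^2/\min(\epsilon^l_i)^2)$. The order can be rescued by a case split --- fall back on the trivial clip bound $\Vert\hat\eta\Vert\le 2C$ when $\min_i\epsilon^l_i\le 1$ and on $L_\Phi=O(1)$ otherwise --- but that case split is precisely the missing content; your text defers it rather than supplying it, and your closing claim that ``calibration can only shrink the bias while preserving the variance order'' is an assertion, not a proof, once a non-constant $L_\Phi$ is in play.

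The paper never inverts $\Phi$. Its proof of Lemma~\ref{lemma:bound_after_cali} models calibration as subtracting a bias estimate $\eta(\bar g)\in[-C,C]$ from each user's perturbation, defines the residual $\Delta\eta_i=\eta_i-\eta(\bar g)$, bounds $\mathbb{E}[\Delta\eta_i^2]$ directly from the Clip-Laplace moment formulas (Eqs.~\eqref{eq_En2_cali}--\eqref{eq_En_cali}), and substitutes into Eq.~\eqref{eq_conv_PerS}; the degeneracy of the inverse map never enters. Your modeling is arguably more faithful to what ${\rm Calibrate}$ actually computes, but to make it rigorous you must either carry out the two-regime analysis above or abandon the Lipschitz route in favor of the paper's direct moment computation on the residual.
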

\begin{proof}
In calibration, we estimate $\bar{g}$ by approximating $\mathbb{E}[\tilde{g}]$ with $\mathbb{E}[\tilde{\bar{g}}_i]$ for $i \in [n]$. The bias introduced by $\tilde{g}$ is removed with estimated $\bar{g}$ mostly, we denote the residual bias for each user by $\Delta \eta_i$, which equals to $\eta_i - \eta(\bar{g})$. Note that $\eta(\bar{g}) \in [-C, C]$, the norm of $\Delta \eta_i$ is bounded:
\begin{footnotesize}
\begin{align*}
    &\mathbb{E}[\Delta \eta_i^2]
    = \mathbb{E}[\eta_i^2] + \mathbb{E}[\eta_i^2] - 2\mathbb{E}[\eta_i] \mathbb{E}[\eta_i] 
    \leq \frac{(-8C^2-8\lambda_iC)e^{\frac{-2C}{\lambda_i}}}{1-e^{\frac{-2C}{\lambda_i}}} \\
    &+ \frac{(-8C^2+8\lambda_iC)e^{\frac{-4C}{\lambda_i}} - 8\lambda_iCe^{\frac{-2C}{\lambda_i}}}{(1-e^{\frac{-2C}{\lambda_i}})^2} + 6\lambda_i^2
\end{align*}
\end{footnotesize}

Therefore we obtain the bound of $\mathbb{E}[\Vert \Delta \eta_i \Vert^2]$.
\begin{footnotesize}
\begin{align}
    \label{eq_En2_cali}
    &\mathbb{E}[\Vert \Delta \eta_i \Vert^2]
    = \mathbb{E}[\frac{1}{n}\sum_{i=1}^{n}\Delta \eta_i^2]
     = \frac{1}{n}\sum_{i=1}^{n}\mathbb{E}[\Delta \eta_i^2]
     \le \mathbb{E}[\Delta \eta_i^2]_{max} \nonumber \\
    & = -\frac{8C^2}{e^{\min(\epsilon^l_i)}-1} + \frac{8C^2}{(e^{\min(\epsilon^l_i)}-1)^2} + \frac{24C^2}{\min(\epsilon^l_i)^2}
\end{align}
\end{footnotesize}
Then $\mathbb{E}[\Vert \Delta \eta_i \Vert]$ is also bounded as:
\begin{footnotesize}
\begin{align}
    \label{eq_En_cali}
    &\mathbb{E}[\Vert \Delta \eta_i \Vert]) 
    = (\mathbb{E}[\Vert \eta \Vert^2]) - \mathbb{D}[\Vert \eta \Vert]))^{\frac{1}{2}}
    \le (\mathbb{E}[\Vert \eta \Vert^2]))^{\frac{1}{2}} \nonumber \\
    &= (-\frac{8C^2}{e^{\min(\epsilon^l_i)}-1} + \frac{8C^2}{(e^{\min(\epsilon^l_i)}-1)^2} + \frac{24C^2}{\min(\epsilon^l_i)^2})^{\frac{1}{2}}
\end{align}
\end{footnotesize}
Substituting Eq. \eqref{eq_En2_cali} and \eqref{eq_En_cali} into (\ref{eq_conv_PerS}), the bound is provided:
\begin{footnotesize}
\begin{align*}
    &\mathbb{E}[f(\tilde{w}^{(T)})-f(w^*)]
     \le a_1^T\mathbb{E}[f(\tilde{w}^{(0)})-f(w^*)] \\
     & + \frac{a_1^T-1}{a_1-1}(a_2(-\frac{8C^2}{e^{\min(\epsilon^l_i)}-1} + \frac{8C^2}{(e^{\min(\epsilon^l_i)}-1)^2} + \frac{24C^2}{\min(\epsilon^l_i)^2})^{\frac{1}{2}}) \\ 
    &+ a_3(-\frac{8C^2}{e^{\min(\epsilon^l_i)}-1} + \frac{8C^2}{(e^{\min(\epsilon^l_i)}-1)^2} + \frac{24C^2}{\min(\epsilon^l_i)^2})
\end{align*}
\end{footnotesize}
Ideally, when estimated gradient $\bar{g}$ is accurate, we have a better bound as follows:
\begin{footnotesize}
\begin{align*}
	 & \mathbb{E}[f(\tilde{w}^{(T)})-f(w^*)]
     \le a_1^T\mathbb{E}[f(\tilde{w}^{(0)})-f(w^*)] + \frac{(a_1^T-1)a_2}{a_1-1}  \nonumber \\
     &\cdot (\frac{-8C^2-\frac{32C^2}{\min(\epsilon^l_i)}}{e^{\min(\epsilon^l_i)}-1} - \frac{8C^2}{(e^{\min(\epsilon^l_i)}-1)^2} + \frac{4C^2}{\min(\epsilon^l_i)^2})^{\frac{1}{2}} \nonumber \\
     &+ a_3(\frac{-8C^2-\frac{32C^2}{\min(\epsilon^l_i)}}{e^{\min(\epsilon^l_i)}-1} - \frac{8C^2}{(e^{\min(\epsilon^l_i)}-1)^2} + \frac{4C^2}{\min(\epsilon^l_i)^2}) \\
     &\le a_1^T\Delta 
    + \frac{a_1^T-1}{a_1-1}(O(a_2C/\min(\epsilon^l_i)) + O(a_3 C^2/\min(\epsilon^l_i)^2))
\end{align*}
\end{footnotesize}
The proof is completed.
\end{proof}

Given Lemma \ref{lemma:bound_after_cali}, the upper bound of convergence of the whole APES framework is obtained, as demonstrated in Theorem \ref{th_conv}.

\subsection*{C \quad Experiment Supplements}
In this section, we provide error evaluation for calibration during analyzing process of APES and S-APES (Fig. \ref{fig:calibration}), additional comparison on privacy amplification effects (Fig. \ref{fig:epsilon_005_05}, \ref{fig:epsilon_005_3}), and the output evaluation of Clip-Laplace Mechanism (Fig. \ref{fig:lap_var}, \ref{fig:lap_mean}). At last, the model accuracy and $\epsilon^{uc}$ for all the LDP settings is demonstrated in Tab. \ref{table:eps_acc_all}.
 \begin{table*}[!t]
   \centering
   \begin{footnotesize}
   \begin{tabular}{lllllllllllll}
     \toprule
    \multirow{2}*{{Framework}} & \multicolumn{2}{l}{\quad Uniform1} & \multicolumn{2}{l}{\quad Gauss1} & \multicolumn{2}{l}{\quad MixGauss1}  & \multicolumn{2}{l}{\quad Uniform2} & \multicolumn{2}{l}{\quad Gauss2} & \multicolumn{2}{l}{\quad MixGauss2}         \\
    \cline{2-13}
   \specialrule{0em}{1pt}{1pt}
      & $\epsilon^{uc}$ & acc & $\epsilon^{uc}$ & acc & $\epsilon^{uc}$ & acc & $\epsilon^{uc}$ & acc & $\epsilon^{uc}$ & acc & $\epsilon^{uc}$ & acc \\
     \midrule
     Non-Private & $\infty$ & \textbf{84.35\%} & $\infty$ & \textbf{84.35\%} & $\infty$ & \textbf{84.35\%} & $\infty$ & \textbf{84.35\%} & $\infty$ & \textbf{84.35\%} & $\infty$ & \textbf{84.35\%} \\
     LDP-min & 40.1 & 56.11\% & 40.1 & 56.11\% & 40.1 & 56.11\% & 40.1 & 56.11\% & 40.1 & 56.11\% & 40.1 & 56.11\% \\
     PLDP & 3925 & 75.64\% & 3925 & 63.77\% & 3925 & 67.75\% & 7850 & 77.54\% & 7850 & 64.81\% & 7850 & 73.84\% \\
     UniS & 17.4 & 75.64\% & 17.4 & 63.77\% & 17.5 & 67.75\% & 79.6 & 77.54\% & 76.9  & 64.81\% & 76.9  &  73.84\% \\
     APES & 15.5 & \textbf{78.77\%} & 15.3 & \textbf{78.33\%}  & 15.4 & \textbf{79.06\%} & 57.6 & \textbf{79.67\%} & 52.7 & \textbf{79.29\%} & 53.7  & \textbf{79.97\%} \\
     S-APES & \textbf{8.5} & 75.7\% & \textbf{8.5} & 56.64\% & \textbf{8.5} & 77.01\%  & \textbf{25.6}& 78.14\% & \textbf{23.6} & 73.4\% & \textbf{23.9} & 76.12\% \\ 
     \bottomrule
   \end{tabular}
   \end{footnotesize}
   \caption{Comparison with baseline frameworks on user-level central privacy budget and accuracy under the all the personalized privacy budget settings. $\delta^{uc}=3.6 \times 10^{-5}$. Total epochs: 40.}
   \label{table:eps_acc_all}
 \end{table*}

\begin{figure}[h] 
\centering
\includegraphics[scale=0.4]{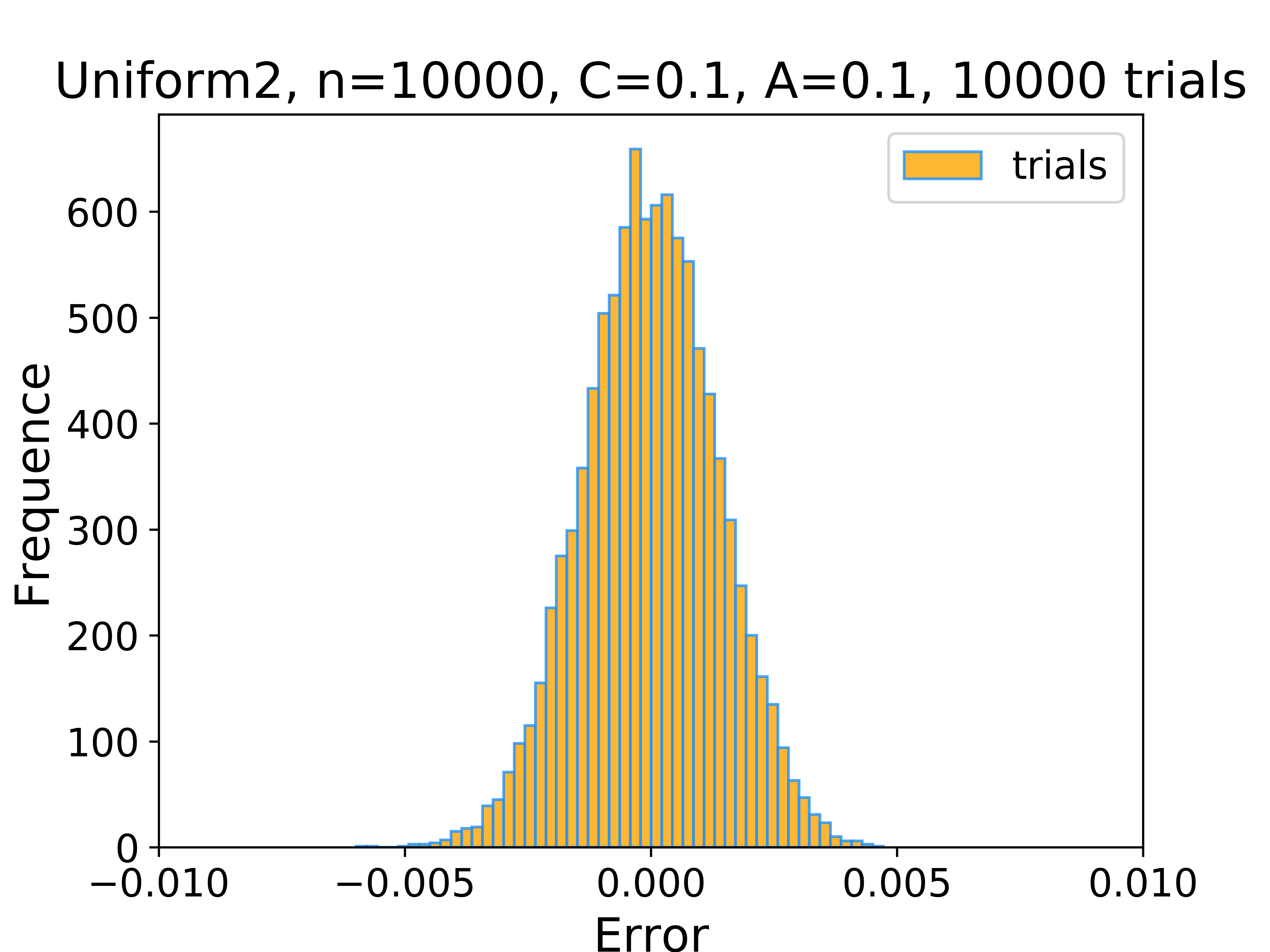} 
\caption{Error of approximation for calibrating noisy gradients. Most values of $(\tilde{g_i}-\mathbb{E}[\tilde{\bar{g_i}}])$ concentrate around zero.}
\label{fig:calibration}
\end{figure}

\begin{figure}[!t]
\centering
\includegraphics[scale=0.4]{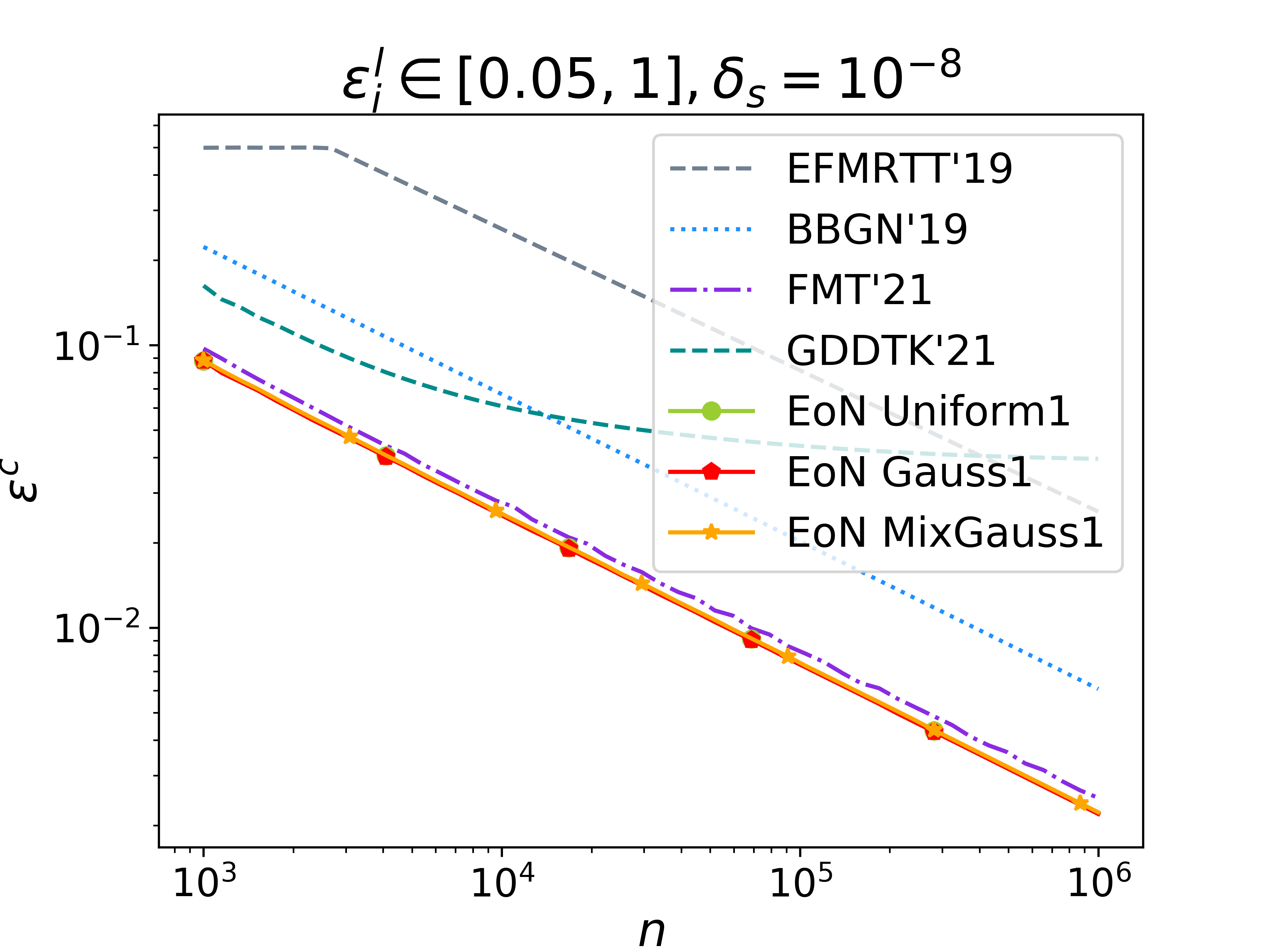} 
\caption{Privacy Bounds with smaller privacy budget range. The difference between EoN and baselines becomes smaller, but absolute value of $\epsilon^c$ of EoN is lower than larger range.}
\label{fig:epsilon_005_05}
\end{figure}

\begin{figure}[!t]
\centering
\includegraphics[scale=0.4]{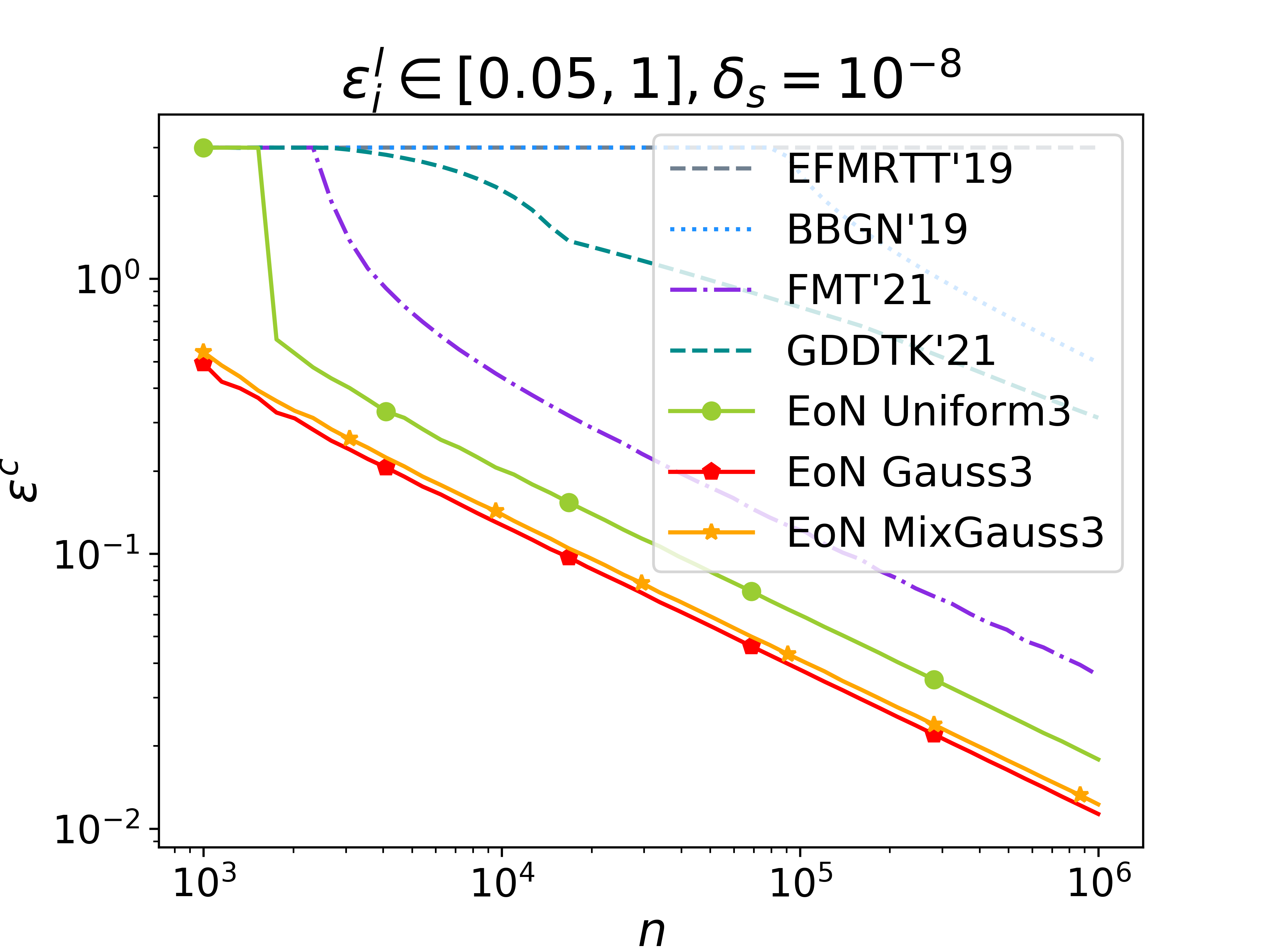} 
\caption{Privacy Bounds with larger privacy budget range. EoN gains more obvious amplification effect. Uniform3 denotes $\mathcal{U}(0.05,3)$, Gauss3 denotes $\mathcal{N}(0.5,1)$, MixGauss3 denotes 90\% $\mathcal{N}(0.5,1)$ and 10\% $\mathcal{N}(3,1)$. All the clip range of $\epsilon^l$ is in $[0.05, 3]$.}
\label{fig:epsilon_005_3}
\end{figure}

\begin{figure}[!t]
  \centering
  \includegraphics[scale=0.4]{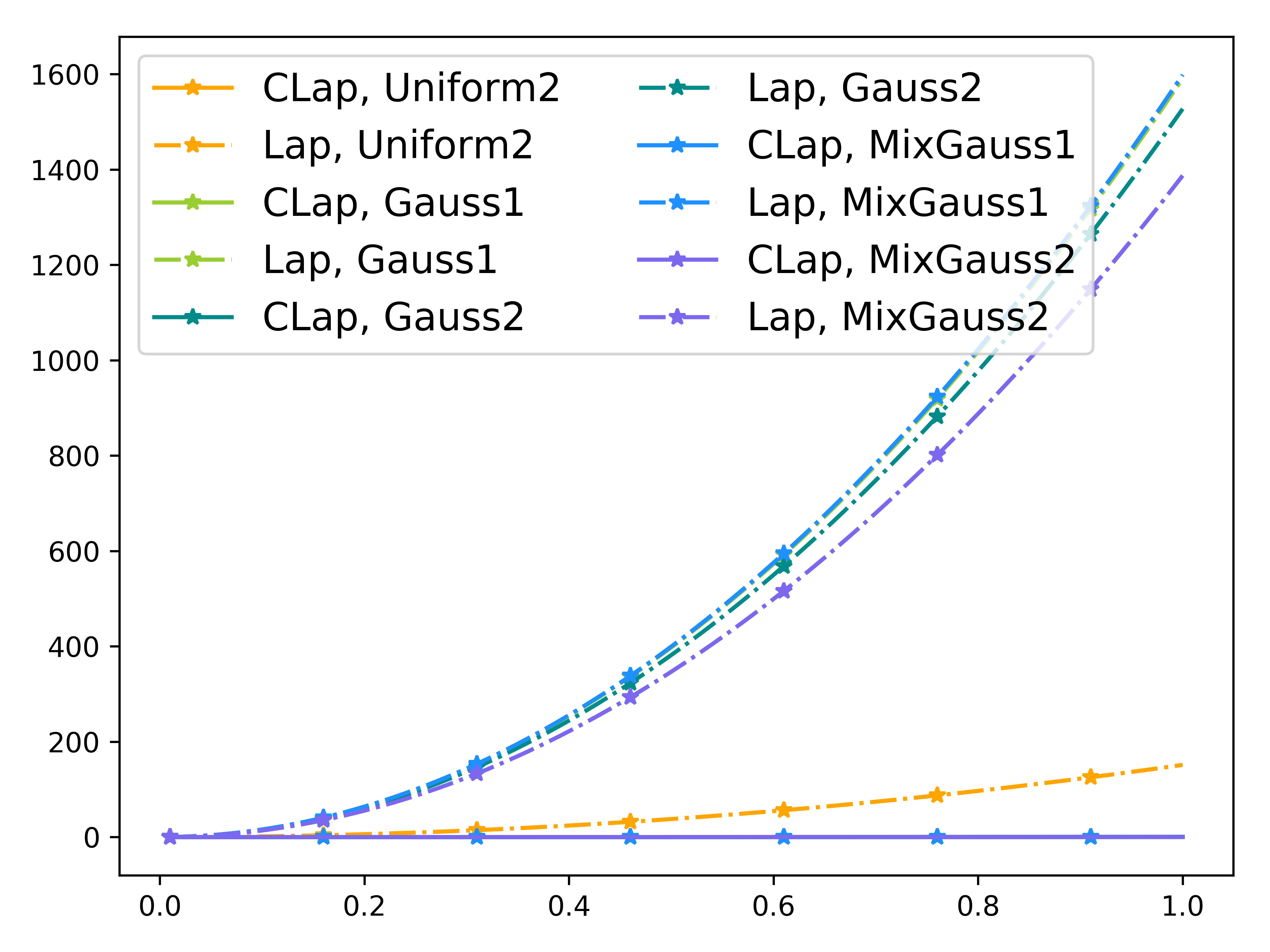}
  \caption{Variance of outputs of CLap on different $C$ and $\epsilon^l$. Smaller variance leads makes CLap more stable for varying parameters, which benefits from limited output ranges .}
  \label{fig:lap_var}
\end{figure}

\begin{figure}[!t]
\centering
\includegraphics[scale=0.4]{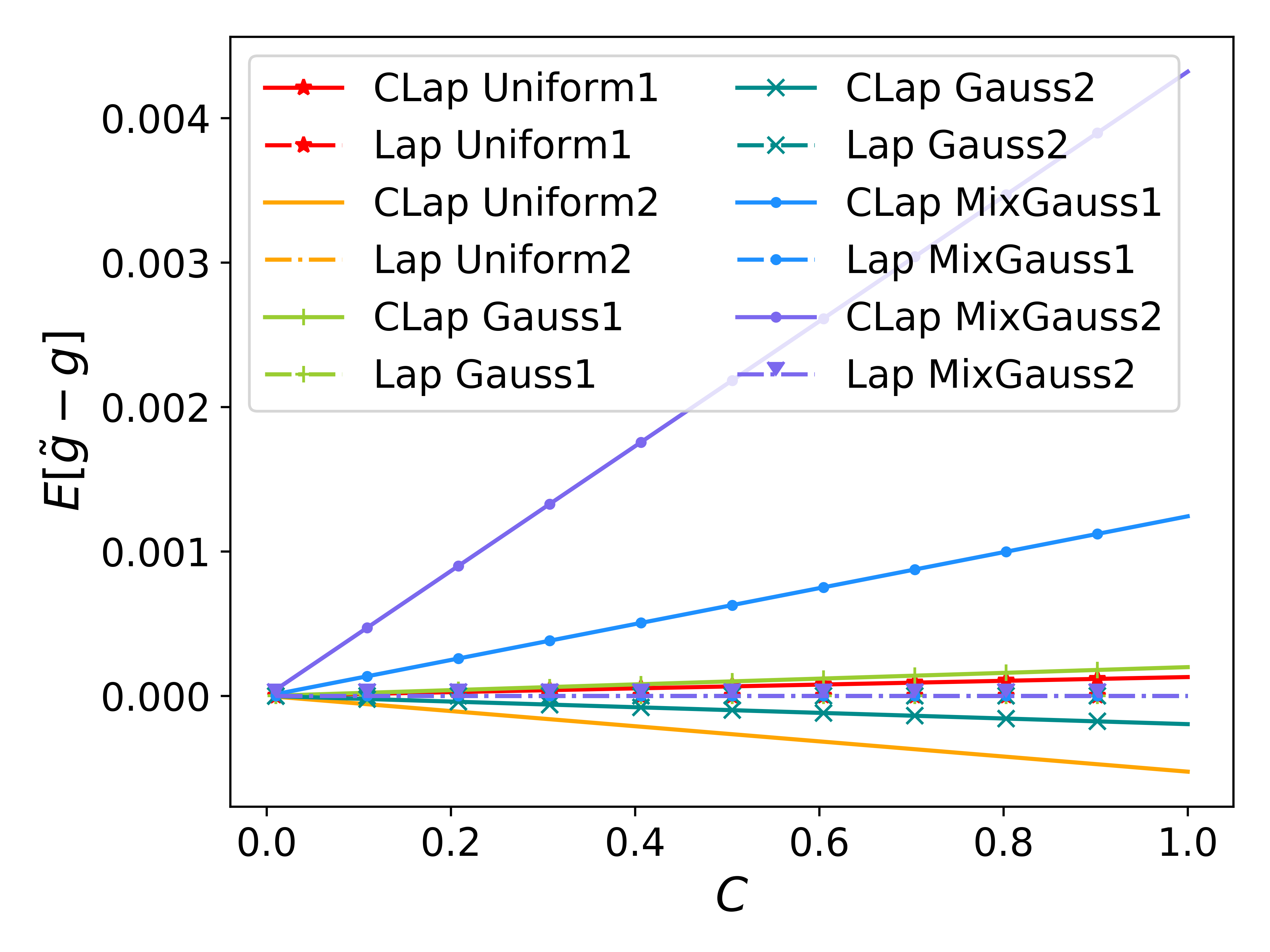} 
\caption{Bias of CLap outputs on different $C$ and $\epsilon^l$. Bias of CLap perturbation is larger than Lap perturbation, hence calibration is necessary.}
\label{fig:lap_mean}
\end{figure}

\end{document}